\newtheorem{Assumption}{Assumption}
\newtheorem{Algorithm}{Algorithm} 
\newtheorem{Definition}{Definition}
\newtheorem{Lemma}{Lemma}
\newtheorem{Problem}{Problem}
\newtheorem{Remark}{Remark}
\newtheorem{Theorem}{Theorem}
\newcommand{\bs}{\begin{small}}
\newcommand{\bsc}{\end{small}}
\newcommand{\txtblue}{\textcolor{black}}
\newcommand{\bgbox}{\vspace{0.3cm} \begin{mdframed}}
\newcommand{\edbox}{\end{mdframed} \vspace{0.2cm}}
\begin{document}
\title{Networked Control Systems over  Correlated Wireless Fading Channels}

\author
{\normalsize{\IEEEauthorblockN{Fan Zhang, \emph{StMIEEE}, Vincent K. N. Lau, \emph{FIEEE}, Ling Shi, \emph{MIEEE}} \\ Department of ECE,  Hong Kong University of Science and Technology, Hong Kong \\ Email: \{fzhangee, eeknlau, eesling\}@ust.hk}\vspace{-0.9cm}} 
\maketitle

\begin{abstract}
In this paper, we consider a  networked control system (NCS) in which an   dynamic plant system  is connected to a controller via a  temporally correlated wireless fading channel. We focus on communication power design at the sensor to minimize a weighted average state estimation error at the remote controller  subject to an average transmit power constraint of the sensor. The  power control optimization problem is formulated as  an infinite horizon average cost Markov decision process (MDP).  We propose a novel continuous-time perturbation approach and derive an asymptotically optimal closed-form value function for the MDP. Under this approximation, we propose a low complexity dynamic power control solution which has an event-driven control structure. We also  establish technical conditions for asymptotic optimality, and sufficient conditions for NCS stability under  the proposed scheme. 
\end{abstract}

\vspace{-0.3cm}

\section{Introduction}
Networked control systems (NCSs) have drawn great attention in recent years due to the  growing applications in industrial  automation, remote robotic control, etc. \cite{coup1}.   A typical NCS consists of a \emph{plant}, a \emph{sensor} and a \emph{controller}  which are connected over a communication network  as illustrated in Fig. \ref{artNSC}. The presence of  communication channels in the NCSs complicates the design and analysis due to the interactions between communication and control. Conventional closed-loop control theories \cite{optimalc1}  (e.g., stabilization, optimal control) must be reevaluated when considering communication constraints. 

There are various  works on the analysis and optimization of  NCSs. In \cite{rate1}, \cite{rate2},  a necessary minimum rate  requirement for  NCS stability is computed for noiseless and memoryless Gaussian   channels between the sensor and the controller. \txtblue{Many other works \cite{ratenoise}--\cite{qvidoer1}  consider encoder/decoder design and give sufficient conditions for  NCS stability  under scenario-specific communication channels. The authors in \cite{ratenoise} design an encoder and decoder structure to achieve  asymptotic stability for an NCS with a packet dropout channel. In \cite{qoiuli}, the authors study  multi-input networked stabilization with a fading channel between the controller and plant. In \cite{qvidoer1}, the authors give stability conditions for an NCS with fading packet dropout channels, where  the evolution of the fading channel follows a Markov process  with a finite discrete state space. In all these works \cite{rate1}--\cite{qvidoer1}, the key focus is on achieving  NCS stability, which is only a weak form of control performance.}  There  are also many works  considering optimal control of  NCSs. In   \cite{ctrlplus2},  the authors consider a static joint  communication and  control optimization and  ignore the stochastic evolutions  of the plant dynamics  and the fading channel states.  In \cite{onoff2}, a joint scheduling and control policy is proposed to minimize the   linear quadratic Gaussian (LQG) cost and the communication cost under a simplified static rate-limited error-free  channel. In \cite{bernuli1}, \cite{poweronl}, the authors consider either plant LQG  control  or sensor scheduling over a packet-dropping network with a constant symbol error rate (SER). \txtblue{In all these works \cite{onoff2}--\cite{poweronl}, the channel  between the sensor and the controller is assumed to be either error-free or with a constant SER and they ignore the effect of how the power control scheme affects the SER, which further affects the state estimation at the remote controller.}

 To optimize the performance of the NCSs over wireless fading channels, the control policy should be adaptive to the \emph{plant state information} and the \emph{fading channel  state}. The plant state realization reveals the \emph{relative importance of the individual state feedback}, while the channel fading state reveals the \emph{transmission opportunities} over the communication channels. In fact, the associated optimization problem belongs to the  Markov decision process (MDP) problem, which is well-known to be quite challenging \cite{mdpcref2}, \cite{mdpsurvey}. \txtblue{In \cite{logicc}, the  authors study dynamic control of the transmission probability by minimizing the mean square error (MSE) of the plant state estimation  and the average transmission probability for an NCS with an on-off switch channel. In \cite{qvidoer2}--\cite{mdp1}, the authors study the dynamic power control for an NCS with a wireless fading channel. Specifically, \cite{qvidoer2} solves the minimization of an average power cost subject to the  stability requirement, \cite{qvidoer3} solves the minimization of the MSE of the plant state estimation  and average power cost, and \cite{mdp1} solves the minimization of the  LQG cost and average power cost.  The MDP problems in  \cite{logicc}--\cite{mdp1} are solved using the conventional  value iteration algorithm (VIA), which induces huge complexity and suffers from slow convergence and lack of insights   \cite{mdpcref2}, \cite{mdpsurvey}. The approaches therein cannot be used in our problem, where we target to obtain a low complexity dynamic control solution.}

In this paper, we consider an NCS where a sensor delivers the  plant state information to a  controller over a temporally correlated   wireless fading channel as illustrated in Fig. \ref{artNSC}. Furthermore, we consider  \emph{error-adaptive power control}\footnote{\txtblue{In error-adaptive control, the data rate at the sensor is fixed, and the sensor dynamically adjusts the transmit power to adjust the SER \cite{digitalcom}  so as to  achieve certain objectives of the NCS.}}, where the instantaneous SER depends on the transmit power of the sensor.  Using the separation principle between  control and communication \cite{rate2}, \cite{dualeffect}, we  focus on minimizing the state estimation  cost of the plant  subject to an average communication power constraint. The communication power optimization problem is formulated as  an infinite horizon average cost  MDP and there are several first order technical challenges:
\begin{itemize}
	\item \textbf{Challenges due to  State-Dependent Dynamic Control:} To derive a state-dependent communication power control policy, we need to solve the associated MDP. The optimality equation for solving the MDP has high dimensionality and brute-force VIA cannot give viable solutions in practice  \cite{mdpcref2}, \cite{mdpsurvey}.
	\item \textbf{Challenges due to the Temporal Correlations of the Fading Channel:}  When the wireless fading channel is temporally i.i.d., the dimension of the optimality equation can be reduced \cite{mdpsurvey}, which simplifies the numerical computation of the value function. However, such a dimension reduction technique cannot be applied when the wireless channel fading is temporally correlated. This poses great challenges even for obtaining the numerical solution of the associated optimality equation. 
\end{itemize}
 To address the above challenges,  we   propose a novel continuous-time perturbation approach and obtain an asymptotically optimal closed-form value function for solving the associated optimality condition of the MDP. Based on the structural properties of the communication power control, we show that the solution has an  \emph{event-driven control} structure.  Specifically, the sensor  either transmits with maximum power or shuts down depending on a dynamic threshold rule.  Furthermore, we analyze the asymptotic optimality of the proposed scheme and also give sufficient conditions for ensuring the NCS stability while using the proposed scheme.  Finally, we compare the  proposed scheme with various state-of-the-art baselines and show that significant performance gains can be achieved with low complexity.

\txtblue{\emph{Notations:} Bold font is used to denote matrices and vectors. $\mathbf{A}^T$ and $\mathbf{A}^\dagger$ denote the transpose and conjugate transpose of matrix $\mathbf{A}$ respectively.    $\mathbf{I}$ represents  identity matrix with appropriate dimension.  $\mu_{max}(\mathbf{A})$   represents the largest   eigenvalue of  symmetric matrix $\mathbf{A}$. $\|\mathbf{A}\|$ represents the Euclidean norm of a vector $\mathbf{A}$. $\text{Re}\{ \mathbf{x} \}$ represents the real part of $\mathbf{x}$. $|x|$ represents the absolute value of a scaler $x$. $\nabla_\mathbf{x} f(\mathbf{x})$ denotes the column gradient vector with the $k$-th element being  $\frac{\partial f(\mathbf{x})}{\partial x_k}$. $\nabla_\mathbf{x}^2 f(\mathbf{x})$ denotes the Hessian matrix of $f(\mathbf{x})$ w.r.t. vector $\mathbf{x}$. $f\left(x\right)=\mathcal{O}\left(g\left(x \right) \right)$ as $x\rightarrow a$ means $\lim_{a \rightarrow a}\frac{f(x)}{g(x)}<\infty$.  }

\begin{figure}[t]
\centering
  \hspace{1cm}
  \includegraphics[width=3.3in]{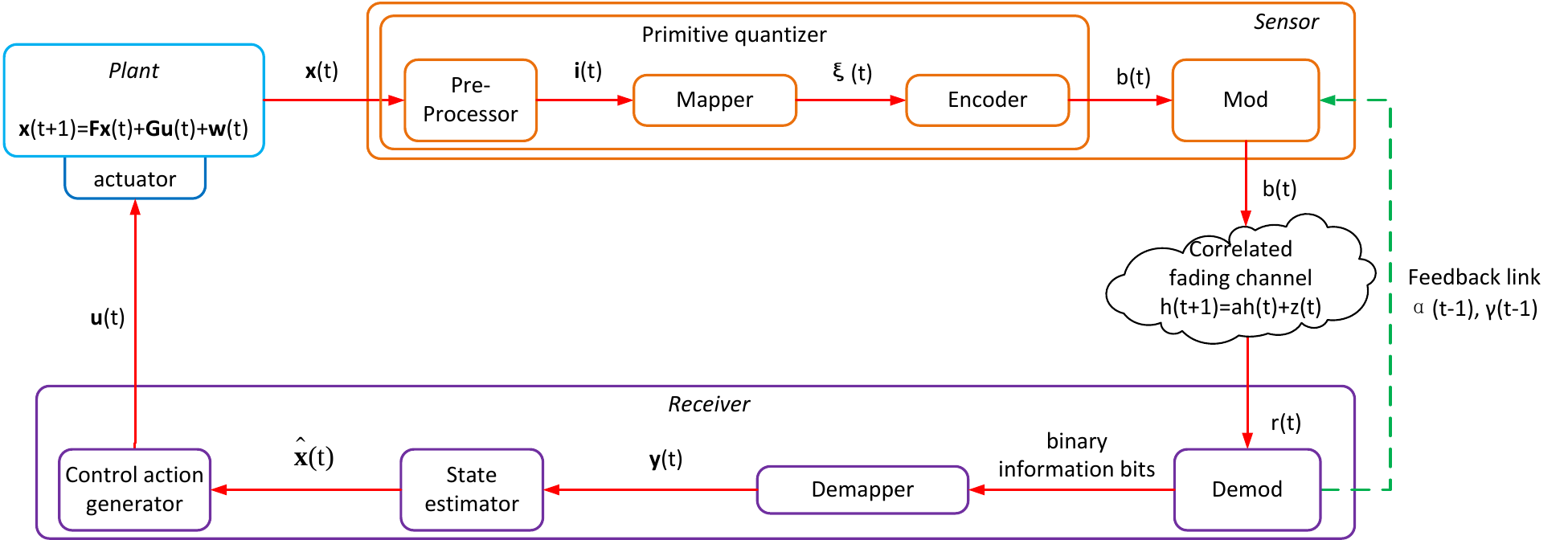}
  \caption{A typical architecture of a networked control system.}
  \label{artNSC}\vspace{-0.6cm}
\end{figure}

\vspace{-0.2cm}

\section{System Model}	\label{ncssystem}

Fig. \ref{artNSC} shows a typical networked control system (NCS), which consists of a plant, a sensor, and a controller, and they form a closed-loop control. We consider a slotted system, where the time dimension is partitioned into decision slots indexed by $t$ with  duration $\tau$.  The  sensor has perfect state observation of the plant state $\mathbf{x}(t)$.  The  controller is  geographically separated from the sensor, and there is  a temporally  correlated  wireless fading channel connecting them. \txtblue{At each time slot $t$,  the sensor  observes  the plant state and the pre-processor generates  $\mathbf{i}(t)$, which is passed to the mapper. The mapper maps $\mathbf{i}(t)$  to one of the $M=2^{R}$ quantization levels $\boldsymbol{\xi}(t)$, which is encodes  into  $R$-digit binary information  bits at the encoder.} The information bits are communicated to the remote controller over the wireless fading channel using $M$ quadrature amplitude modulation (MQAM). Specifically, the $R$-digit binary information  bits are mapped to one of the $M$ available QAM symbols for a given constellation type.  After the demodulation process, the demodulator outputs  binary information  bits to the demapper, which maps the information bits back to one of the $M$ quantization levels $\mathbf{y}(t)$. Then, $\mathbf{y}(t)$  is  passed to the state estimator to obtain a state estimate $\hat{\mathbf{x}}(t)$. After that, $\hat{\mathbf{x}}(t)$ is passed to the actuator to generate a control action $\mathbf{u}(t)$. \txtblue{The actuator, which is co-located with the plant, uses control action $\mathbf{u}(t)$ for plant actuation.}

Such an NCS with a wireless fading channel covers a lot of practical applications\footnote{Please refer to \cite{coup1} and the reference therein for a  more broad range of  application scenarios.}. For example, in a heating, ventilation and air conditioning (HVAC) system \cite{airref}, the boiler and chiller components (i.e., the controller) of the HVAC system are mounted on the rooftop. The temperature and humidity sensors are located inside a room, which  measure certain indoor environmental parameters and transmit the data to  the controller over a wireless channel. The controller then decides  whether to pump cool or hot air into the room through the  ducts based on the received data.

\subsection{Linear Stochastic Plant Model}
We consider a continuous-time stochastic  plant system with dynamics \txtblue{$\frac{d{\mathbf{x}}(t)}{dt}=\widetilde{\mathbf{F}}\mathbf{x}(t)+\widetilde{\mathbf{G}}\mathbf{u}(t)+\widetilde{\mathbf{w}}(t)$, $t\geq0$,  $\mathbf{x}(0)=\mathbf{x}_0$}, where $\mathbf{x}(t) \in \mathbb{R}^d$ is the plant state, $\mathbf{u}(t) \in \mathbb{R}^m$ is the plant control action, $\mathbf{x}_0$ is the initial plant state,  $\widetilde{\mathbf{F}}\in \mathbb{R}^{d\times d}$, $\widetilde{\mathbf{G}} \in \mathbb{R}^{d\times m}$, and $\widetilde{\mathbf{w}}(t) \in \mathbb{R}^d$ is an additive plant disturbance with  zero mean and covariance $\widetilde{\mathbf{W}} \in \mathbb{R}^{d\times d}$. \txtblue{Furthermore, we assume that the plant disturbance is bounded, i.e., $\|\widetilde{\mathbf{w}}(t)\|\leq \widetilde{w}_{max}$ for some $\widetilde{w}_{max}>0$ \cite{boundedDT1}, \cite{boundedDT2}.} Without loss of generality, we assume that $\widetilde{\mathbf{W}}$ is diagonal\footnote{For non-diagonal  $\widetilde{\mathbf{W}}$, we can pre-process the plant  using the \emph{whitening transformation} procedure \cite{white}. \txtblue{Specifically, let the  eigenvalue decomposition  of $\widetilde{\mathbf{W}}$ be $\mathbf{M}\widetilde{\mathbf{W}}\mathbf{M}^T=\mathbf{T}$, where $\mathbf{M}$ is a unitary matrix and $\mathbf{T}$ is diagonal. We have $\mathbf{x}_\mathbf{M}(t+1)=\mathbf{F}_\mathbf{M}\mathbf{x}_\mathbf{M}(t)+\mathbf{G}_\mathbf{M}\mathbf{u}(t)+\mathbf{w}_\mathbf{M}(t)$, where $\mathbf{x}_\mathbf{M}=\mathbf{M}\mathbf{x}$, $\mathbf{F}_\mathbf{M}=\mathbf{M}\mathbf{F}\mathbf{M}^T$, $\mathbf{G}_\mathbf{M}=\mathbf{M}\mathbf{G}$,  $\mathbf{w}_\mathbf{M}=\mathbf{M}\mathbf{w}$ and  $\mathbf{E}\big[\mathbf{w}_\mathbf{M}\mathbf{w}_\mathbf{M}^T\big]=\mathbf{T}$. Therefore, the  optimization  based on $\mathbf{x}$ can be  transformed to  an equivalent problem based on $\mathbf{x}_\mathbf{M}$ with diagonal plant noise covariance.}}. Since the sensor in the NCS samples the plant state once per time slot (with duration  $\tau$), the state dynamics of the sampled discrete-time stochastic  plant system  is given by  \cite{sampleplant}
\begin{align}	\label{plantain}
	\mathbf{x}(t+1)=\mathbf{F}\mathbf{x}(t)+\mathbf{G}\mathbf{u}(t)+\mathbf{w}(t), \  \mathbf{x}(0)=\mathbf{x}_0
\end{align}
for $t\geq 0$, where $\mathbf{F}=\exp( \widetilde{\mathbf{F}}\tau )$, $\mathbf{G}= \widetilde{\mathbf{F}}^{-1}\big(\exp( \widetilde{\mathbf{F}}\tau )-\mathbf{I}\big)\widetilde{\mathbf{G}}$, and $\mathbf{w}(t) = \int_0^\tau	\exp( \widetilde{\mathbf{F}}s ) \widetilde{\mathbf{w}} ((t+1)\tau-s)d s $ is an i.i.d.   random noise with zero mean and covariance $\mathbf{W} = \int_0^\tau \exp( \widetilde{\mathbf{F}}s ) \widetilde{\mathbf{W}}   \exp( \widetilde{\mathbf{F}}s )d s$.  We have the following assumptions on the plant model:
\begin{Assumption}[Stochastic Plant Model]	\label{assumpstabl}
	We assume that the  plant system $\big(\mathbf{F}, \mathbf{G}\big)$ in (\ref{plantain}) is controllable \cite{optimalc1}. \txtblue{Furthermore, we assume that the initial condition in (\ref{plantain}) is bounded, i.e.,  $\|\mathbf{x}_0\|\leq L$  and $L=\mathcal{O}(\tau)$.}~\hfill\IEEEQED
\end{Assumption}

\subsection{Wireless Fading Channel Model}
We consider a continuous-time temporally correlated wireless fading  channel  with dynamics \txtblue{$\frac{dh(t)}{dt}=-\widetilde{a} h(t)+\sqrt{2\widetilde{a}}\widetilde{z}(t)$},  $t\geq0$, with $h(0)=h_0$, where $h(t) \in \mathbb{C}$ is the channel state information (CSI) and $h_0$ is the initial channel state.  The coefficient $\widetilde{a}\in \mathbb{R}^+$  determines the temporal correlation of the fading process\footnote{The autocorrelation function is  $\mathbb{E}\left[h(t+\tau)h(t)\right] =2\exp\left(-\widetilde{a}\tau\right)$ \cite{stochasticgame}.} and \txtblue{$\widetilde{z}(t)\sim \mathcal{CN}(0,1)$ is an additive circularly-symmetric Gaussian  noise with zero mean and unit variance}\footnote{\txtblue{Specifically, $\mathbb{E}[\widetilde{z}]=0$ and $\mathbb{E}[\widetilde{z}\widetilde{z}^\ddagger]=1$, where  $\widetilde{z}^\ddagger$ is the complex conjugate of $\widetilde{z}$.}}.  Similarly, the state dynamics of the sampled discrete-time channel is given by \cite{sampleplant}
\begin{align}	\label{chnain}
	h(t+1)=a h(t)+z(t), \quad h(0)=h_0
\end{align}
for $t\geq 0$, where $a=\exp( -\widetilde{a}\tau) $ and \txtblue{$z(t) = \sqrt{2\widetilde{a}}\int_0^\tau\exp( -\widetilde{a}\tau ) \widetilde{z}((t+1)\tau-s)d s $ is an i.i.d.    noise  with zero mean and covariance $Z=1-\exp( -2\widetilde{a}\tau)$}.  The received signal at the demodulator of the controller is given by
\bs\begin{align}
	\mathbf{r}(t) = \sqrt{p(t)} h(t) \mathbf{b}(t) + \mathbf{n}(t)
\end{align}\bsc where  $p(t)\in \mathbb{R}^+$ is the  transmit SNR and $\mathbf{n}(t)\sim \mathcal{CN}(0,\mathbf{I})$ is an i.i.d. Gaussian noise.  Let $\gamma(t) \in \{0,1\}$ denote the symbol error event (where $\gamma(t) = 0$ means  symbol error). \txtblue{In this paper, we consider rectangular MQAM constellation (e.g., \cite{digitalcom}, \cite{largesnr}, \cite{largesnr1}), and the associated  symbol error rate (SER) is given by  
\bs\begin{align}
	\Pr\left[\gamma(t)=0 \right]  \approx  \exp\Big( -\frac{ p(t) \tau \big|h(t)\big|^2}{\kappa(R) B_W} \Big)	\label{serequ}
\end{align}\bsc where $B_W$ is the channel bandwidth and $\kappa(R)=\frac{2^{R+1}-2}{3}$ is a constant\footnote{\txtblue{The SER model in (\ref{serequ}) covers other  types of constellation geometry for MQAM (e.g., circular  constellation)  with appropriate adjustment of   $\kappa(R)$. In this paper, our derived results are based on the rectangular MQAM, which can be easily extended to other constellation types.}}.} The received signal is processed in the demodulator, which outputs information bits  to the reconstructor. The reconstructor maps the information bits back to one of the $M$ quantization levels and the associated output $\mathbf{y}(t)$ is given by
\begin{align}
	\mathbf{y}(t) = \gamma(t)\boldsymbol{\xi}(t)	\label{equivalentchn}
\end{align} 
\txtblue{Note that  $\boldsymbol{\xi}(t)$ is the  the output of the mapper at the sensor.} Furthermore, (\ref{equivalentchn}) means that $\mathbf{y}(t)=\boldsymbol{\xi}(t)$ for successful transmission  and $\mathbf{y}(t)=\mathbf{0}$ otherwise.

\subsection{Information Structures at the Sensor and the Controller}	\label{informs}
Let $a_0^t\triangleq \left\{a(s): 0\leq s\leq t\right\}$ be the history of  realizations of variable $a$ up to time $t$. The available knowledge at the sensor and the controller at time $t$ are represented by the \emph{information structures} $I_S(t)$ and $I_C(t)$. \txtblue{Specifically,  $I_S(t)$ is given by
\begin{align}
	I_S(t)&=\big\{\underbrace{\mathbf{x}_0^t, \mathbf{i}_0^t, \mathbf{u}_0^{t-1}}_{\text{control-related states}}, \underbrace{\boldsymbol{\xi}_0^{t}, \alpha_0^{t-1},   \gamma_0^{t-1}}_{\text{com.-related states}}\big\} \label{sensrolis}
\end{align}
for $t> 0$, and $I_S(0)=\left\{\mathbf{x}(0), \mathbf{i}(0), \boldsymbol{\xi}(0) \right\}$, where we denote $\alpha(t)=|h(t)|^2$ ($\forall t$). Moreover, $I_C(t)$ is given by
\begin{align}
	I_C(t)=\big\{\underbrace{\mathbf{u}_0^{t-1}}_{\text{control-related states}}, \underbrace{\alpha_0^{t}, \gamma_0^{\nu(t)}, \mathbf{y}_0^{\nu(t)}}_{\text{com.-related states}}\big\}	\label{controlis}
\end{align}
for $t> 0$, and $I_C(0)=\left\{\alpha(0), \gamma(\nu(0)), y(\nu(0))\right\}$, where $\nu(t)=\max\{s\leq t: \gamma(s)=1\}$ is the  slot of the latest successful transmission  by time  $t$. Note that at time slot $t$, the sensor discards\footnote{\txtblue{The reason is that the events of the form $\gamma(t)=0$ contain information about the plant state $\mathbf{x}(t)$ through the dependence of the SER in (\ref{serequ}). To avoid this  complication, we discard the  events of the form $\gamma(t)=0$ as in \cite{mdp1}.}} the  events of the form $\gamma(t)=0$.} We have the following observations on $I_S(t)$ and $I_C(t)$:
\begin{Remark}[Observations on  $I_S(t)$ and $I_C(t)$]	 \label{refmark21}  \
\begin{itemize}
	\item For  $I_S(t)$,  $\mathbf{x}_0^t$ is the plant state, $\mathbf{i}_0^t$ is the pre-processor output, $\mathbf{u}_0^{t-1}$ is the plant control action, \txtblue{$\boldsymbol{\xi}_0^{t}$ is the mapper output, and all   can be locally obtained at the sensor.  The  information $\left\{\alpha_0^{t-1},  \gamma_0^{t-1}\right\}$ can be obtained by the feedback signals from the controller as shown in Fig. \ref{artNSC}.} 
	\item For $I_C(t)$, $\mathbf{u}_0^{t-1}$ is the locally generated plant control action, $\alpha_0^{t}$ can be locally measured using the pilots from the sensor \cite{lte}, and $\{\gamma_0^{\nu(t)}, \mathbf{y}_0^{\nu(t)}\}$ are the symbol error indicators and the received signals,  which are the output of the wireless fading channel and can be locally obtained  at the controller. 
	\item  There is an intersection of $I_S(t)$ and $I_C(t)$,  which is denoted as $I_{SC}(t) \triangleq I_S(t) \cap I_C(t)$. Specifically, $I_{SC}(t)=\{\mathbf{u}_0^{t-1}, \alpha_0^{t-1},   \txtblue{\gamma_0^{\nu(t-1)}, \mathbf{y}_0^{\nu(t-1)}}\}$ for $t>0$,  and $I_{SC}(0)=\emptyset$.~\hfill~\IEEEQED
\end{itemize}
\end{Remark}

\vspace{-0.5cm}
\subsection{Communication Power and Plant Control Policies}	\label{policysdefinitions}

Based on $I_S(t)$ and $I_C(t)$, we define the communication power and plant control policies. \txtblue{Let $\mathcal{F}_S(t)=\sigma\left(\left\{I_S(s):   0\leq s \leq t\right\}\right)$ be the minimal $\sigma$-algebra containing the set $\left\{I_S(s):  0\leq   s \leq t\right\}$ and $\left\{\mathcal{F}_S(t)\right\}$ be the associated \emph{filtration}  at the sensor.  Similarly, define $\mathcal{F}_C(t)=\sigma\left(\left\{I_C(s):    0\leq s \leq t\right\}\right)$ and let $\left\{\mathcal{F}_C(t)\right\}$ be the associated \emph{filtration}  at the controller.}  At the beginning of time slot $t$, the sensor determines the power control action $p(t)$ and the controller determines the plant control action $\mathbf{u}(t)$ according to the following   control policies:
\txtblue{\begin{Definition}	[\txtblue{Plant Control Policy}]	\label{def111}
	\txtblue{A   \emph{plant control policy} $\Omega_{\mathbf{u}}$ at the controller is $\mathcal{F}_C(t)$-adapted, meaning that $\mathbf{u}(t)$ is adaptive to all the information $I_C(s)$ up to time slot $t$ (i.e., $\left\{I_C(s):  0\leq s \leq t\right\}$).~\hfill~\IEEEQED}
\end{Definition}
\begin{Definition}	[\txtblue{Communication Power Control Policy}]	\label{powerpoli}
	\txtblue{A  \emph{communication power control policy} $\Omega_p$ at the sensor is $\mathcal{F}_S(t)$-adapted, meaning that $p(t)$ is adaptive to all the information $I_S(s)$ up to time slot $t$ (i.e., $\left\{I_S(s):  0\leq s \leq t\right\}$). Furthermore, $\Omega_p$ satisfies the peak power constraint, i.e., $p(t) \in \left[0, p_{max}\right]$ for all $t$, where  $p_{max}>0$ is the maximum power the sensor can use at each time slot.~\hfill~\IEEEQED}
\end{Definition}}

\begin{Remark}	[Interpretation of the Power Control Policy]	\label{usefulremarkv}
	The  power control policy in Definition \ref{powerpoli}  is an \emph{error-adaptive} power control. In such  strategy,  the rate of the channel  $R$ is fixed, which means that the constellation size of the MQAM  (where $M=2^{R}$)  is unchanged  during the communication session. At each time slot $t$, the sensor controls the communication power $p(t)$  to dynamically adjust  the SER  in (\ref{serequ}), so that the  state estimation error at the controller is adjusted. Hence,  there is an inherent tradeoff between the plant performance (in the stability or optimal control sense) and the communication cost (in terms of the average transmit power). We shall quantify this tradeoff in the following sections. ~\hfill~\IEEEQED
\end{Remark}	

\vspace{-0.5cm}

\txtblue{\section{Communication Power  Problem Formulation}}

\txtblue{In this section, we first introduce a  primitive quantizer  at the sensor. We then establish the no dual effect property  under such a  quantizer and give the optimal plant control policy w.r.t. the joint communication power and plant control problem.  Based on the primitive quantizer  and the CE controller, we formally formulate the communication power  problem.} \txtblue{\subsection{Primitive Quantizer}	\label{sectionPQ}}

\txtblue{Following \cite{rate1} and \cite{ratenoise}, we adopt a primitive quantizer at the sensor to track the dynamic range of the plant state. Specifically, the primitive quantizer is characterized by four parameters $\left(\widetilde{\mathbf{x}}, \boldsymbol{\Psi}, \mathbf{L}, \mathbf{R}\right)$, where $\widetilde{\mathbf{x}}(t)\triangleq \mathbb{E}\left[\mathbf{x}(t)\big|I_{SC}(t)\right]\in \mathbb{R}^d$ is the shifting  vector\footnote{$\widetilde{\mathbf{x}}(t)$ also measures  the common information at both the sensor and the controller. \txtblue{Note that $\widetilde{\mathbf{x}}(t)$ is calculated based on $I_{SC}(t)$, and hence, $\widetilde{\mathbf{x}}(t)$ can be locally maintained at both the sensor and the controller according to the discussions in  Remark \ref{refmark21}.}}, $\boldsymbol{\Psi}(t)\in \mathbb{R}^{d\times d}$ is the coordinate transformation matrix, $\mathbf{L}(t)=(L_1(t), \dots, L_d(t))^T\in \mathbb{R}^d$ is the dynamic range, and $\mathbf{R}=(R_1, \dots, R_d)^T \in \mathbb{R}^d$ is the rate vector with $\sum_{n=1}^d R_n=R$ (where $R_n$ determines the quantization level of the $n$-th element of the plant state $\mathbf{x}$).}

%\begin{figure}[t]
%\centering
%  \hspace{1cm}
%  \includegraphics[width=3.5in]{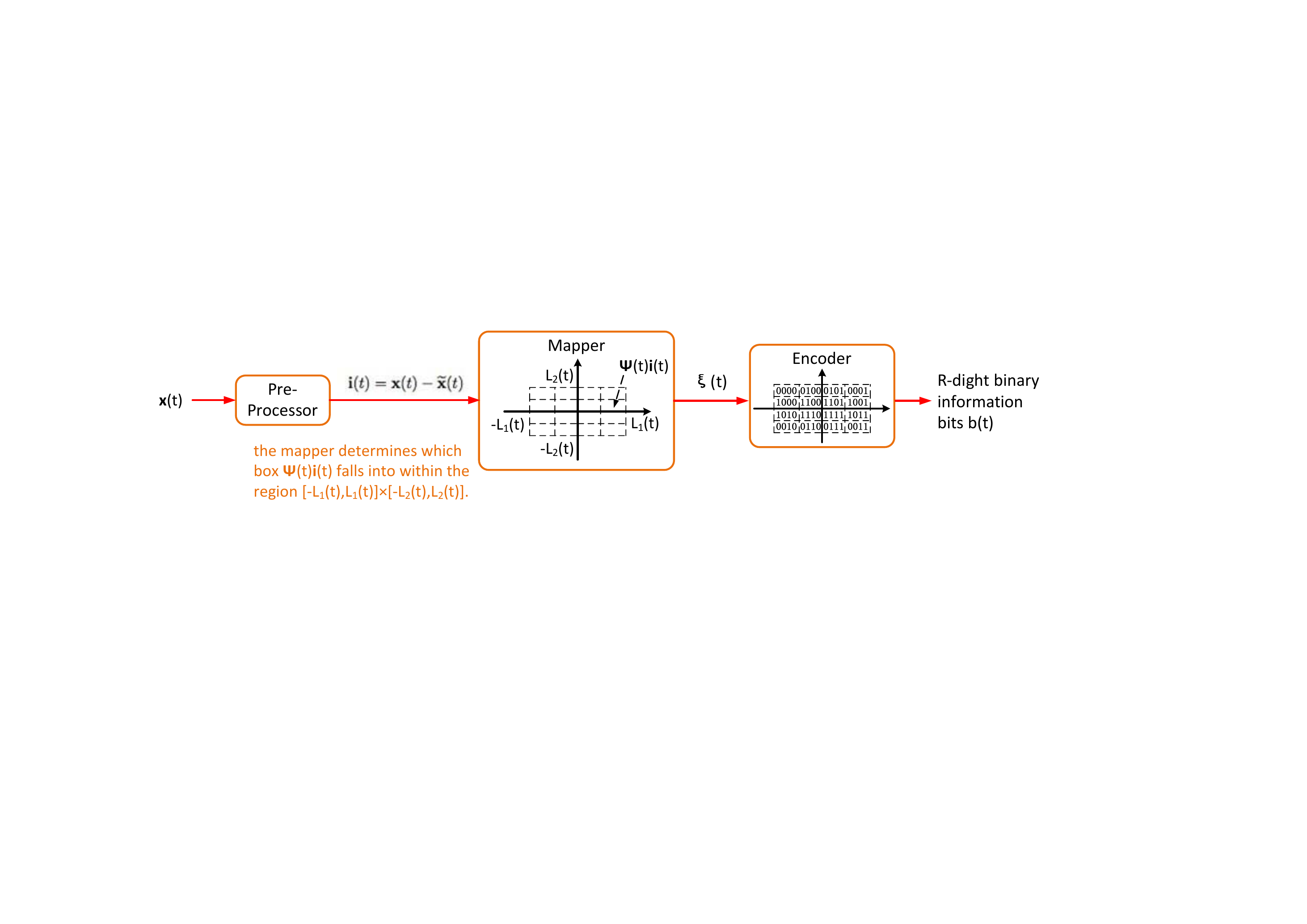}\vspace{-0.4cm}
%  \caption{\txtblue{An example  of the  primitive quantizer structure, where the plant state $\mathbf{x}$ has two dimensions and the rate vector is $\mathbf{R}=(2, 2)$ (which means that the quantization levels for the two elements in $\mathbf{x}$ are 4.).}}
%  \label{quantiillus}\vspace{-0.6cm}
%\end{figure}

\txtblue{The primitive quantizer consists of three components, i.e., a pre-processor, a mapper and an encoder.  Specifically, the pre-processor takes  $\mathbf{x}(t)$ as input and generates the innovation $\mathbf{i}(t)=\mathbf{x}(t)-\widetilde{\mathbf{x}}(t)$, which is passed to the mapper. The mapper  maps $\boldsymbol{\Psi}(t)\mathbf{i}(t)$ to one of the $M=2^{R}$  quantization levels  within the region $\big\{[-L_1(t),L_1(t)]\times \cdots \times  [-L_d(t),L_d(t)]\big\}$ and outputs $\boldsymbol{\xi}(t)$. Then $\boldsymbol{\xi}(t)$ is encoded into  $R$-digit binary information  bits. Please refer to Appendix A on how  the primitive quantizer works in detail. We summarize the property of the primitive quantizer as follows:}
\txtblue{\begin{Lemma}	[Properties of the Primitive Quantizer]	\label{lemmaqunt} \
	\begin{itemize}	
		\item The primitive quantizer  tracks the dynamic range of the plant state $\mathbf{x}(t)$, i.e.,
		\begin{align}
			&\mathbf{x}(t)\in \big\{\mathbf{x}\in \mathbb{R}^d: \boldsymbol{\Psi}(t)\big(\mathbf{x}-\widetilde{\mathbf{x}}(t)\big)\notag \\
			&\in \left\{[-L_1(t),L_1(t)]\times \cdots \times  [-L_d(t),L_d(t)]\right\}\big\}, \ \forall t	\label{quantprop1}
		\end{align}
		\item The equivalent model between input $\mathbf{x}(t)$ and  mapper output $\boldsymbol{\xi}(t)$ can be expressed as
		\begin{align}
			\boldsymbol{\xi}(t)= \boldsymbol{\Psi}(t)\big(\mathbf{x}(t)-\widetilde{\mathbf{x}}(t)\big)+\mathbf{e}(\mathbf{L}(t), t)	\label{quantprop2}
		\end{align} 
		where $\mathbf{e}(\mathbf{L}(t), t)=\big(e_1(L_1(t), t),\dots, e_d(L_d(t), t)\big)^T\in\mathbb{R}^d$ is the quantization noise, and for each $n$, $e_n(L_n(t), t)$ is uniformly distributed within the region $\big[-\frac{L_n(t)}{2^{R_n-1}},\frac{L_n(t)}{2^{R_n-1}}\big]$. ~\hfill\ \IEEEQED
	\end{itemize}
\end{Lemma}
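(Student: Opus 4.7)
The plan is to prove the two properties separately: part one by induction on the time slot $t$, exploiting the recursive update rule of the dynamic range vector $\mathbf{L}(t)$ specified by the primitive quantizer; part two by unwinding the definition of the uniform scalar quantizer that the mapper implements coordinate-by-coordinate after the coordinate transformation $\boldsymbol{\Psi}(t)$.

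For the tracking property \eqref{quantprop1}, I would proceed by induction. At $t=0$, Assumption \ref{assumpstabl} gives $\|\mathbf{x}_0\|\leq L$, so I can choose an initial range $\mathbf{L}(0)$ large enough to contain $\boldsymbol{\Psi}(0)(\mathbf{x}_0 - \widetilde{\mathbf{x}}(0))$ componentwise; the quantizer design in Appendix A presumably makes this choice explicit. For the inductive step, assume the property holds at slot $t$. Using the plant recursion \eqref{plantain}, the shift-vector recursion $\widetilde{\mathbf{x}}(t+1) = \mathbb{E}[\mathbf{x}(t+1)\,|\,I_{SC}(t+1)]$ (which can be split into two cases depending on whether the most recent transmission succeeded), and the boundedness of the disturbance $\|\widetilde{\mathbf{w}}(t)\|\le \widetilde{w}_{max}$, I would bound each coordinate of $\boldsymbol{\Psi}(t+1)\big(\mathbf{x}(t+1)-\widetilde{\mathbf{x}}(t+1)\big)$ in terms of the previous coordinate bound $L_n(t)$ (propagated through $\mathbf{F}$) plus a disturbance contribution. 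The update law for $\mathbf{L}(t+1)$ defined in Appendix A should be precisely designed so that each $L_n(t+1)$ dominates this propagated bound, which closes the induction.

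For the equivalent input-output model \eqref{quantprop2}, the argument is constructive from the mapper's definition. After the pre-processor produces $\mathbf{i}(t)=\mathbf{x}(t)-\widetilde{\mathbf{x}}(t)$ and the coordinate transform yields $\boldsymbol{\Psi}(t)\mathbf{i}(t)$, part one guarantees that $\boldsymbol{\Psi}(t)\mathbf{i}(t)$ lies in the hyper-rectangle $\prod_n[-L_n(t),L_n(t)]$. The mapper uniformly partitions the $n$-th dimension into $2^{R_n}$ cells of width $2L_n(t)/2^{R_n} = L_n(t)/2^{R_n-1}$ and outputs the cell centroid as $\xi_n(t)$, so that $\xi_n(t) - [\boldsymbol{\Psi}(t)\mathbf{i}(t)]_n$ is the standard uniform scalar quantization error, supported on $[-L_n(t)/2^{R_n-1},\,L_n(t)/2^{R_n-1}]$. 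Defining $\mathbf{e}(\mathbf{L}(t),t)$ as this vector of per-coordinate errors yields \eqref{quantprop2} immediately, and the uniform-distribution claim follows under the customary high-resolution/uniform-input assumption invoked for primitive quantizers in \cite{rate1,ratenoise}.

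The main obstacle I anticipate is the inductive step of part one: it is essential that the update rule for $\mathbf{L}(t+1)$ correctly absorbs (i) the growth of the previous range under $\mathbf{F}$ after the coordinate transformation $\boldsymbol{\Psi}(t+1)\mathbf{F}\boldsymbol{\Psi}(t)^{-1}$, (ii) the bounded disturbance $\mathbf{w}(t)$, and (iii) the quantization-noise residual whenever the latest transmission was unsuccessful (so that $\widetilde{\mathbf{x}}(t+1)$ is propagated open-loop from the same $\boldsymbol{\xi}$ used at an earlier slot). Verifying that the scheme in Appendix A supplies an $\mathbf{L}(t+1)$ large enough in every coordinate, uniformly over the error history, is the delicate calculation; the remaining ingredients of the lemma are essentially bookkeeping once this is in place.
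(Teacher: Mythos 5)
Your plan matches the paper's own proof in Appendix A: property \eqref{quantprop1} is established by induction over slots with a case split on the transmission outcome $\gamma$, using the explicit update rules for $\widetilde{\mathbf{x}}(t)$, $\boldsymbol{\Psi}(t)$ and $\mathbf{L}(t)$ together with the propagation facts $\boldsymbol{\Psi}(t+1)\mathbf{F}\boldsymbol{\Psi}^{-1}(t)=\mathbf{H}\boldsymbol{\Upsilon}$ (Lemmas 4.1--4.2 of \cite{rate1}), and property \eqref{quantprop2} follows from the uniform-quantizer construction exactly as you describe (the paper's justification of the uniform noise distribution is no more rigorous than yours). The only differences are organizational---the paper threads the induction through an auxiliary region centered at the controller estimate $\hat{\mathbf{x}}(t)$ to separate the in-slot refinement from the between-slot propagation---and one minor slip in your item (iii): the quantization residual enters the range recursion through the \emph{successful}-transmission branch (it is what permits the $\mathbf{F}_{\mathbf{R}}$ shrinkage of $\mathbf{L}$ in (\ref{28equa})), not the unsuccessful one, which does not affect the viability of your argument.
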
}
\txtblue{Therefore, according to (\ref{quantprop1}), we can use  the primitive quantizer to track the dynamic range of the plant state $\mathbf{x}(t)$. \vspace{-0.2cm}
\subsection{Certainty Equivalent Controller}
As in  \cite{onoff2} and \cite{mdp1}, we consider the following joint communication power and plant control optimization problem:
\bs\begin{align} 
	\min_{\Omega_\mathbf{u}, \Omega_p}	\  \limsup_{T \rightarrow \infty} \frac{1}{T} &\sum_{t=0}^{T-1} \mathbb{E}^{\Omega_\mathbf{u}, \Omega_p}\big[\mathbf{x}^T(t)\mathbf{Q} \mathbf{x}(t) + \mathbf{u}^T(t)\mathbf{D} \mathbf{u}(t)+ \lambda p(t)\big] \notag
\end{align}\bsc where $\mathbf{Q}$ and $\mathbf{D}$ are positive definite symmetric weighting matrices   for the plant state deviation cost and the plant control cost, and $\lambda \in \mathbb{R}^+$ is  the  communication power price. In general, the design of the communication power   policy  and the plant control policy  are coupled together. This is because the communication power will affect the state estimation accuracy at the controller, which will in turn affect the plant state evolution. However, by establishing   the no dual effect property (e.g., \cite{rate2}, \cite{dualeffect}), we can obtain the optimal plant control policy for the above joint optimization problem. Specifically, let $\hat{\mathbf{x}}(t)=\mathbb{E}\big[\mathbf{x}(t)\big|I_C(t)\big]$ be the  plant state estimate at the controller and  $\boldsymbol{\Delta}(t)=\mathbf{x}(t)-\hat{\mathbf{x}}(t)$  be the state estimation error. The no dual effect property is established as follows:
\begin{Lemma}	[No Dual Effect Property]	\label{lemmadual}
	Under the primitive quantizer in Section \ref{sectionPQ}, we have the following no dual effect property in our NCS:
	\begin{align}
		\mathbb{E}\big[\boldsymbol{\Delta}^T(t)\boldsymbol{\Delta}(t)\big| I_C(t)\big]=\mathbb{E}\big[\boldsymbol{\Delta}^T(t)\boldsymbol{\Delta}(t)\big| \alpha_0^{t}, \gamma_0^{\nu(t)}, \mathbf{y}_0^{\nu(t)}\big], \quad \forall t	\notag 
	\end{align}
\end{Lemma}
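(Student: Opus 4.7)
The plan is to adapt the classical no-dual-effect argument (cf.\ \cite{dualeffect}) by decomposing the plant state into a control-driven deterministic component and a noise-driven stochastic component, and then to exploit the structural fact that, under the primitive quantizer of Section~\ref{sectionPQ}, the mapper input $\boldsymbol{\xi}(t)$ in (\ref{quantprop2}) depends on the controls only through an offset that cancels when forming $\mathbf{x}(t)-\widetilde{\mathbf{x}}(t)$. Consequently the state estimation error $\boldsymbol{\Delta}(t)$ is independent of the control history once the channel-side information is fixed.

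First, I would split (\ref{plantain}) as $\mathbf{x}(t)=\mathbf{x}^u(t)+\mathbf{x}^f(t)$, where $\mathbf{x}^u(t+1)=\mathbf{F}\mathbf{x}^u(t)+\mathbf{G}\mathbf{u}(t)$ with $\mathbf{x}^u(0)=\mathbf{0}$, and $\mathbf{x}^f(t+1)=\mathbf{F}\mathbf{x}^f(t)+\mathbf{w}(t)$ with $\mathbf{x}^f(0)=\mathbf{x}_0$. Because $\mathbf{u}_0^{t-1}$ is $\mathcal{F}_C(t)$-measurable, $\mathbf{x}^u(t)$ is a known deterministic function of the controls at the controller. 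Hence $\hat{\mathbf{x}}(t)=\mathbf{x}^u(t)+\mathbb{E}[\mathbf{x}^f(t)\mid I_C(t)]$, and so $\boldsymbol{\Delta}(t)=\mathbf{x}^f(t)-\mathbb{E}[\mathbf{x}^f(t)\mid I_C(t)]$. Applying the same decomposition to $\widetilde{\mathbf{x}}(t)=\mathbb{E}[\mathbf{x}(t)\mid I_{SC}(t)]$ converts (\ref{quantprop2}) into $\boldsymbol{\xi}(t)=\boldsymbol{\Psi}(t)\bigl(\mathbf{x}^f(t)-\mathbb{E}[\mathbf{x}^f(t)\mid I_{SC}(t)]\bigr)+\mathbf{e}(\mathbf{L}(t),t)$, so that $\mathbf{y}(t)=\gamma(t)\boldsymbol{\xi}(t)$ inherits the same control-free representation.

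Next, I would run an induction on $t$ to show that each delivered observation $\mathbf{y}(s)$ with $s\le\nu(t)$, together with the quantizer parameters $\boldsymbol{\Psi}(t),\mathbf{L}(t)$ and the common-information quantity $\mathbb{E}[\mathbf{x}^f(t)\mid I_{SC}(t)]$, is a deterministic function only of the control-free primitive variables $\{\mathbf{x}_0,\mathbf{w}_0^{t-1},\alpha_0^{t-1},\gamma_0^{\nu(t)},\mathbf{e}_0^{t-1},\mathbf{n}_0^{t-1}\}$. The base case is immediate from the quantizer initialization (no control has yet been applied). The inductive step uses the update rules for $(\widetilde{\mathbf{x}},\boldsymbol{\Psi},\mathbf{L})$ in Appendix~A, which read only $I_{SC}(t)$; the inductive hypothesis guarantees that the $\mathbf{x}^u$-contribution cancels at every step. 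Consequently $\sigma(\alpha_0^t,\gamma_0^{\nu(t)},\mathbf{y}_0^{\nu(t)})$ carries the same information about $\mathbf{x}^f(t)$ as $\mathcal{F}_C(t)$, and $\mathbf{u}_0^{t-1}$ becomes conditionally independent of $\mathbf{x}^f(t)$ given this smaller $\sigma$-algebra. The tower property then yields $\mathbb{E}[\boldsymbol{\Delta}^T(t)\boldsymbol{\Delta}(t)\mid I_C(t)]=\mathbb{E}[\boldsymbol{\Delta}^T(t)\boldsymbol{\Delta}(t)\mid \alpha_0^t,\gamma_0^{\nu(t)},\mathbf{y}_0^{\nu(t)}]$, which is the claim.

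I expect the main obstacle to be the rigorous verification of the control-free representation for $\mathbf{y}(s)$ and for the quantizer parameters $(\boldsymbol{\Psi}(t),\mathbf{L}(t))$, since the quantizer recursion is coupled with the random channel outcomes $\gamma_0^{\nu(t)}$ through the feedback link inside $I_{SC}(t)$. Keeping the bookkeeping clean requires separating the ``control ledger'' $\mathbf{x}^u$ from the ``estimation ledger'' $\mathbf{x}^f$ at every recursion and verifying that the $\mathbf{x}^u$ contribution cancels between $\mathbf{x}(t)$ and $\widetilde{\mathbf{x}}(t)$ before entering the mapper; this cancellation is exactly what allows the standard LQG no-dual-effect argument to extend to the present quantized, fading-channel setting.
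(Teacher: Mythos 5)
Your proposal is correct and follows essentially the same route as the paper's Appendix~B: the $\mathbf{x}^u+\mathbf{x}^f$ split is the inline version of the paper's coupling to an auxiliary autonomous NCS (Lemma~6, which shows the $\mathbf{B}(t)\vec{\mathbf{u}}$ term cancels in $\mathbf{x}(t)-\mathbb{E}[\mathbf{x}(t)\mid I_C(t)]$), and your induction showing that the delivered observations, quantizer parameters, and $\mathbb{E}[\mathbf{x}^f(t)\mid I_{SC}(t)]$ are control-free functions of the primitive randomness is exactly the content of the paper's Lemma~7 (stated there as the Markov chain $\overline{\mathbf{x}}(t)\rightarrow\{\alpha_0^t,\gamma_0^{\nu(t)},\mathbf{y}_0^{\nu(t)}\}\rightarrow\mathbf{u}_0^{t-1}$ together with $\overline{\mathbf{y}}(\nu(t))=\mathbf{y}(\nu(t))$). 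The bookkeeping obstacle you flag at the end is precisely what the paper's two-step induction resolves.
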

\begin{proof}
	please refer to Appendix B.
\end{proof}}
\txtblue{Using the no dual effect property in Lemma \ref{lemmadual} and Prop. 3.1 of \cite{rate2} (or  Theorem 1  in Section III of \cite{dualeffect}), the optimal plant control policy  is given by the certainty equivalent (CE) controller:
\begin{align}	\label{statsrr}
	\Omega_{\mathbf{u}}^\ast  \big(I_C(t)\big)=\mathbf{u}^\ast(t), \quad         \text{where }   \mathbf{u}^\ast(t)=-\mathbf{K} \hat{\mathbf{x}}(t), \quad \forall t
\end{align}
where $\mathbf{K}= (\mathbf{G}^T\mathbf{P} \mathbf{G} + \mathbf{D})^{-1}\mathbf{G}^T\mathbf{P} \mathbf{F}$ is the feedback gain matrix,  and    $\mathbf{P}$ satisfies the following  discrete time algebraic Riccati equation\footnote{We assume that  $(\mathbf{F}, \mathbf{W}^{1/2})$ is observable  as in the classical LQG control theories. This assumption together with Assumption \ref{assumpstabl} ensures that the DARE has a unique symmetric positive semidefinite solution \cite{poweronl}.} (DARE): $\mathbf{P} = \mathbf{F}^T\mathbf{P}\mathbf{F} - \mathbf{F}^T\mathbf{P} \mathbf{G}(\mathbf{G}^T\mathbf{P} \mathbf{G} + \mathbf{D})^{-1}\mathbf{G}^T\mathbf{P}\mathbf{F} + \mathbf{W}$.}\vspace{-0.2cm}\txtblue{\subsection{Communication Power  Problem Formulation}}
\txtblue{Under the primitive quantizer  and the CE controller, the per-stage state estimation error and communication power cost is given by
\begin{align}	
	c\big(\boldsymbol{\Delta}(t), p(t)\big)=\boldsymbol{\Delta}^T(t)\mathbf{S} \boldsymbol{\Delta}(t)+\lambda p(t)	\label{asdsddpers}
\end{align}
where  $\mathbf{S}$ is a positive definite symmetric weighting matrix. We consider the following  communication power optimization problem:
\begin{Problem}	\emph{(Communication Power Optimization Problem):}	\label{probformu}
\begin{align} 
	\min_{\Omega_p}	\quad  &  \limsup_{T\rightarrow \infty} \ \frac{1}{T}\sum_{t=0}^{T-1}\mathbb{E}^{\Omega_\mathbf{u}^\ast, \Omega_p} \left[c\big(\boldsymbol{\Delta}(t), p(t)\big)\right]\label{perstagecost} 
\end{align}
where $\boldsymbol{\Delta}(t)$ has the following dynamics:
\begin{align}	\label{deltadyntext}	
	 \boldsymbol{\Delta}(t)=
 		\left\{
			\begin{aligned}	 
				& \mathbf{F}\boldsymbol{\Delta}(t-1)+\mathbf{w}(t-1),  \quad &\text{if }	 \gamma(t)=0	   	\\
				&  -\boldsymbol{\Psi}^{-1}(t)\mathbf{e}(\mathbf{L}(t), t)	, \quad &\text{if }	\gamma(t)=1
			 \end{aligned}
   		\right.
 \end{align}
 where $\mathbf{e}(\mathbf{L}(t), t)$ is the quantization noise in (\ref{quantprop2}).~\hfill\IEEEQED
\end{Problem}}

We need to design a communication power control policy such that the plant system and the primitive are stable, and  state estimation error is bounded. Specifically, we have the following definition on the admissible communication power control  policy of the NCS under the CE controller $\Omega_{\mathbf{u}}^\ast$ in (\ref{statsrr}):
\begin{Definition}	\emph{(Admissible  Communication Power Control Policy):}	\label{admisscontrolpol}
	A communication power control  policy $\Omega_p$  of the NCS is admissible if,
	\begin{itemize}
		\item	 The plant state process $\left\{\mathbf{x}\left(t\right)\right\}$ is stable in the sense that $\lim_{t \rightarrow \infty}\mathbb{E}^{\Omega_{\mathbf{u}}^\ast, \Omega_p}\big[\left\|\mathbf{x}(t)\right\|^2 \big]< \infty$, where $ \mathbb{E}^{\Omega_p,\Omega_{\mathbf{u}}}$ means taking expectation w.r.t. the probability measure induced by  $\left(\Omega_{\mathbf{u}}^\ast, \Omega_p\right)$.
		\item 	 The  dynamic range of the quantizer  $\left\{\mathbf{L}(t) \right\}$   is stable in the sense that $\lim_{t \rightarrow \infty}\mathbb{E}^{\Omega_{\mathbf{u}}^\ast, \Omega_p}\big[\left\|\mathbf{L}(t)\right\|^2 \big]  < \infty$.
		\item 	 The  process $\left\{\boldsymbol{\Delta}(t) \right\}$ at the sensor  is stable in the sense that $\lim_{t \rightarrow \infty}\mathbb{E}^{\Omega_{\mathbf{u}}^\ast \Omega_p}\big[\left\|\boldsymbol{\Delta}(t)\right\|^2 \big]< \infty$.~\hfill~\IEEEQED
	\end{itemize}
\end{Definition}

\txtblue{Problem \ref{probformu} is an MDP and we show in Appendix C that it is without loss of optimality that we restrict the system state to be  $\boldsymbol{\chi}(t) \triangleq \big\{\boldsymbol{\Delta}(t-1), \alpha(t-1), \boldsymbol{\Psi}(t),\mathbf{L}(t) \big\}\in \sigma\left(\left\{I_S(s):   0\leq s \leq t\right\}\right)$ with transition kernel $\Pr\big[\boldsymbol{\chi}(t+1)\big| \boldsymbol{\chi}(t), p(t)\big]$.  Using dynamic programming  theories \cite{mdpcref2}, the  optimality conditions of Problem \ref{probformu} are given as follows:}

\begin{Theorem}	[Sufficient Conditions for Optimality]	\label{mdpvarify}
	If there exists $\left(\theta^\ast, V^\ast(\boldsymbol{\chi})\right)$ that satisfies the following optimality equation (Bellman equation):
	\bs\begin{align}
		 \theta^\ast \tau +  V^\ast \left(\boldsymbol{\chi} \right) = &\min_{p\in \Omega_p(\boldsymbol{\chi})}  \mathbb{E}\big[\big((\boldsymbol{\Delta}')^T\mathbf{S} (\boldsymbol{\Delta}') + \lambda p \big)\tau \notag \\
		 &+ \sum_{\boldsymbol{\chi} '}\Pr\left[\boldsymbol{\chi}'\big| \boldsymbol{\chi}, p  \right] V^\ast \left(\boldsymbol{\chi}'\right)\big|  \boldsymbol{\chi}\big],\quad \forall \boldsymbol{\chi} 	\label{OrgBel}
	\end{align}\bsc and for all admissible communication power control policies $\Omega_p$, $V^\ast (\boldsymbol{\chi})$ satisfies the following transversality condition:
	 \begin{align}	\label{transodts}
	\lim_{T\rightarrow \infty} \frac{1}{T}\mathbb{E}^{\Omega_p}\left[ V^\ast\left(\boldsymbol{\chi} \left(T \right) \right) |\boldsymbol{\chi} \left(0 \right)\right]=0
\end{align}
Then, we have  the following results:
\begin{itemize}
	\item $\theta^\ast =\min_{\Omega_p}  \limsup_{T\rightarrow \infty}  \frac{1}{T}\sum_{t=0}^{T-1}\mathbb{E}^{\Omega_\mathbf{u}^\ast, \Omega_p} \left[c\big(\boldsymbol{\Delta}(t), p(t)\big)\right]$ is the optimal average cost  of Problem \ref{probformu}.
	\item Suppose there exists an admissible communication power  control policy $\Omega_p^\ast$ with  $\Omega_p^*\left(\boldsymbol{\chi} \right) = p^\ast$, where  $p^\ast$ attains the minimum of the R.H.S. in (\ref{OrgBel}) for given $\boldsymbol{\chi}$. \txtblue{Then, the optimal communication power  policy of Problem \ref{probformu} is given by a stationary Markovian policy $\Omega_p^\ast$.}	
\end{itemize}
\end{Theorem}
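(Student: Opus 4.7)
The plan is to apply the standard verification argument for infinite horizon average-cost MDPs to the Bellman equation~(\ref{OrgBel}). I fix an arbitrary admissible communication power control policy $\Omega_p$ and an arbitrary initial state $\boldsymbol{\chi}(0)$. Since $\Omega_p(\boldsymbol{\chi})\in [0,p_{\max}]$ is a feasible choice at every state $\boldsymbol{\chi}$, the right-hand side of~(\ref{OrgBel}) evaluated at $p=\Omega_p(\boldsymbol{\chi})$ is no smaller than the minimum, giving the one-step inequality
\begin{align}
\theta^\ast \tau + V^\ast(\boldsymbol{\chi}) \le \mathbb{E}\bigl[c(\boldsymbol{\Delta}',\Omega_p(\boldsymbol{\chi}))\tau + V^\ast(\boldsymbol{\chi}')\,\big|\,\boldsymbol{\chi}\bigr],
\end{align}
where $\boldsymbol{\chi}'$ denotes the next system state under action $\Omega_p(\boldsymbol{\chi})$.

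Next, I would iterate this inequality along the trajectory $\boldsymbol{\chi}(0),\boldsymbol{\chi}(1),\ldots,\boldsymbol{\chi}(T)$ induced by $\Omega_p$. Taking conditional expectation with respect to $\boldsymbol{\chi}(0)$, using the tower property, and telescoping the $V^\ast$ terms yields
\begin{align}
T\theta^\ast \tau + V^\ast(\boldsymbol{\chi}(0)) \le \sum_{t=0}^{T-1}\mathbb{E}^{\Omega_\mathbf{u}^\ast,\Omega_p}\bigl[c(\boldsymbol{\Delta}(t),p(t))\tau\,\big|\,\boldsymbol{\chi}(0)\bigr] + \mathbb{E}^{\Omega_p}\bigl[V^\ast(\boldsymbol{\chi}(T))\,\big|\,\boldsymbol{\chi}(0)\bigr].
\end{align}
Dividing by $T\tau$, taking $\limsup_{T\to\infty}$, and invoking the transversality condition~(\ref{transodts}) to kill the last term gives
\begin{align}
\theta^\ast \le \limsup_{T\to\infty}\frac{1}{T}\sum_{t=0}^{T-1}\mathbb{E}^{\Omega_\mathbf{u}^\ast,\Omega_p}\bigl[c(\boldsymbol{\Delta}(t),p(t))\bigr].
\end{align}
Because $\Omega_p$ was an arbitrary admissible policy, taking the infimum over $\Omega_p$ on the right-hand side establishes $\theta^\ast \le \min_{\Omega_p}(\cdot)$.

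For the matching upper bound and the second bullet, I would repeat the telescoping argument with $\Omega_p=\Omega_p^\ast$. By the very definition of $\Omega_p^\ast$ as the minimizer in~(\ref{OrgBel}), each one-step inequality becomes an equality, so the chain collapses to an equality in the limit, showing that $\Omega_p^\ast$ achieves $\theta^\ast$ and is therefore optimal. That $\Omega_p^\ast$ is stationary and Markovian follows from the fact that the minimizer in~(\ref{OrgBel}) is a time-invariant function of the current state $\boldsymbol{\chi}$ alone, which is itself a sufficient statistic as argued in Appendix~C.

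The main obstacle is the rigorous use of the transversality condition: one must ensure that under every admissible policy, $\mathbb{E}^{\Omega_p}[V^\ast(\boldsymbol{\chi}(T))]/T\to 0$. This is nontrivial because $V^\ast$ is not assumed bounded; it is the second-moment stability of $\mathbf{x}$, $\mathbf{L}$, and $\boldsymbol{\Delta}$ built into Definition~\ref{admisscontrolpol} that supplies the growth control needed to validate~(\ref{transodts}) once the closed-form $V^\ast$ is constructed later. A secondary subtlety is verifying existence and measurability of the minimizer attaining the right-hand side of~(\ref{OrgBel}); this is routine given compactness of $[0,p_{\max}]$ together with continuity of the one-stage cost and of the transition kernel in $p$ (via the SER formula~(\ref{serequ})).
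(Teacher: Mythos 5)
Your argument is the standard verification proof for average-cost MDPs (iterate the one-step Bellman inequality, telescope $V^\ast$, kill the tail by transversality, and close the gap by noting that the minimizing policy turns each inequality into an equality), and it is correct in outline; you also rightly flag the main technical point, namely that $V^\ast$ is unbounded and it is the second-moment stability built into Definition~\ref{admisscontrolpol} that makes $\mathbb{E}^{\Omega_p}[V^\ast(\boldsymbol{\chi}(T))]/T\to 0$ plausible. However, you and the paper prove complementary halves. The paper's Appendix~C does \emph{not} reprove the verification argument at all --- it simply invokes Prop.~4.6.1 of Bertsekas once the MDP structure is in place. What Appendix~C actually establishes is the \emph{information-state reduction}: it writes the full controlled process $\mathbf{Y}(t)=\{\mathbf{x}_0^t,\mathbf{i}_0^t,\mathbf{u}_0^{t-1},\boldsymbol{\xi}_0^t,\alpha_0^{t-1},\gamma_0^{t-1},\boldsymbol{\Psi}_0^t,\mathbf{L}_0^t\}$, shows the per-stage cost and the one-step transition kernel factor through the reduced state $\boldsymbol{\chi}(t)=\{\boldsymbol{\Delta}(t-1),\alpha(t-1),\boldsymbol{\Psi}(t),\mathbf{L}(t)\}$, and thereby justifies posing the Bellman equation~(\ref{OrgBel}) on $\boldsymbol{\chi}$ in the first place. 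Your write-up defers exactly this step (``which is itself a sufficient statistic as argued in Appendix~C''), so you are expanding the part the paper merely cites and taking as given the part the paper actually proves. Both are legitimate, but a self-contained proof would need the reduction from $I_S$ to $\boldsymbol{\chi}$, since without it the conclusion that the optimal policy is a stationary Markov function of $\boldsymbol{\chi}$ alone --- rather than of the full history --- does not follow from the telescoping argument by itself.
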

\begin{proof}
	please refer to Appendix C.
\end{proof}

Unfortunately, the Bellman equation in (\ref{OrgBel}) is very difficult to solve because it involves a huge number of fixed point equations w.r.t. $\left(\theta^\ast, V^\ast(\boldsymbol{\chi})\right)$. Numerical solutions such as numerical VIA \cite{mdpcref2}, \cite{solvebellman}  have exponential complexity\footnote{\txtblue{Since the state space of  $\boldsymbol{\chi}$  is continuous, the numerical VIA refers to the finite difference method for  solving an equivalent discretized Bellman equation \cite{mdpcref2}, \cite{solvebellman} using the conventional VIA. Suppose the state space of each element in  $\boldsymbol{\chi} = \big\{\boldsymbol{\Delta}, \alpha, \boldsymbol{\Psi},\mathbf{L} \big\}$  is discretized into $\beta$ intervals. Then the cardinality of  $\boldsymbol{\chi}$ is $\beta^{d^2+2d+1}$.}} w.r.t. $d^2$, where $d$ is the dimension of the plant state $\mathbf{x}$).

%\vspace{-0.1cm}
%\framebox{\begin{minipage}[t]{0.88\columnwidth}
%	\begin{center}
%	 	\vspace{0.1cm} \emph{Challenge 1: }\ Exponential complexity to solve the optimality equation in (\ref{OrgBel}).\vspace{0.1cm}
%	\end{center} 
%\end{minipage}}
%\vspace{0.3cm}

 In Section \ref{pertuabationana}, we shall derive a closed-form approximation for $V(\boldsymbol{\chi})$ using continuous-time perturbation techniques. 
\vspace{-0.2cm}

\section{Low Complexity Power Control Solution}	\label{pertuabationana}

In this section, we  first adopt a continuous-time perturbation approach to analyze the difference between a closed-form approximate value function and the optimal value function. We  analyze the performance gap between the policy obtained from the continuous-time perturbation approach and the optimal control policy. Then, we focus on deriving   the closed-form approximate value function and  proposing a low complexity power control scheme. We also  give sufficient conditions for the NCS stability.

\subsection{Continuous-Time Approximation}
We first have the following theorem for solving the optimality equation in (\ref{OrgBel}) using continuous-time perturbation analysis \cite{fluid2}, \cite{dominmethod}:
\begin{Lemma}	\emph{(Perturbation Analysis for Solving the Optimality Equation):}	\label{perturbPDE}
	If there exists $\left(\theta, V(\boldsymbol{\chi})\right)$ where\footnote{$V\in \mathcal{C}^2$ means that $V(\boldsymbol{\chi})$ is second order differentiable w.r.t. to each variable in $\boldsymbol{\chi}$.} $V\in \mathcal{C}^2$  that satisfies
	
	\begin{itemize}
		\item the following multi-dimensional PDE:
	\txtblue{\begin{align}	
		&\theta= \boldsymbol{\Delta}^T \mathbf{S} \boldsymbol{\Delta} \label{bellman2} \\
		&+  \min_{0\leq p  \leq p_{max}}\big[\lambda + \left(V(\boldsymbol{\chi})\big|_{\boldsymbol{\Delta}=\mathbf{0}} - V(\boldsymbol{\chi})\right)\frac{  \alpha}{\kappa(R) B_W}\big]p      \notag \\
		&+\nabla_{\boldsymbol{\Delta}}^TV(\boldsymbol{\chi}) \widetilde{\mathbf{F}} \boldsymbol{\Delta}  +\frac{1}{2}\text{Tr}\left( \nabla_{\boldsymbol{\Delta}}^2 V(\boldsymbol{\chi} )\widetilde{\mathbf{W}}\right)    \notag \\
		&+\frac{\partial V(\boldsymbol{\chi})}{\partial \alpha} \left(2 \widetilde{a} \alpha +2 \widetilde{a}  \right)+\frac{\partial^2V^\ast(\boldsymbol{\chi})}{\partial \alpha^2}4 \widetilde{a}\alpha \notag \\
	& + \text{Tr}\Big(\frac{\partial V(\boldsymbol{\chi})}{\partial \boldsymbol{\Psi}}(\mathbf{H}\boldsymbol{\Psi}-\boldsymbol{\Psi})/\tau\Big) \notag \\
	&+ \nabla^T_{\mathbf{L}}V(\boldsymbol{\chi})\big(\boldsymbol{\Gamma}\mathbf{F}_{\mathbf{R}}\mathbf{L}+w_{max}\|\mathbf{H}\boldsymbol{\Psi}\|\mathbf{1}-\mathbf{L}\big)/\tau\notag
	\end{align}}
	\item and $V(\boldsymbol{\chi})=\mathcal{O}(\left\|\boldsymbol{\Delta}\right\|^2)$.
	\end{itemize}
	  Then,  for any $\boldsymbol{\chi}$,
	\begin{align}	\label{errortermtaur}
		V^\ast \left(\boldsymbol{\chi} \right)  =  V(\boldsymbol{\chi}) + \mathcal{O}\left(\tau\right)+ \txtblue{\mathcal{O}\Big(\frac{\tau^2}{2^{2R_{min}}}\Big)}
	\end{align}
	as  $\tau \rightarrow 0$, where \txtblue{$R_{min}=\min\{R_n: \forall n\}$}, and $\mathcal{O}\left(\tau\right)$  and \txtblue{$\mathcal{O}\big(\frac{\tau^2}{2^{2R_{min}}}\big)$} are the error terms  due to  the continuous time approximation and the quantization, respectively. Furthermore,   $V(\boldsymbol{\chi})$ satisfies the transversality condition in (\ref{transodts}).
\end{Lemma}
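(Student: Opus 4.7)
The plan is to verify that the claimed continuous-time PDE characterizes the infinitesimal generator of the discrete-time dynamic programming operator in (\ref{OrgBel}), so that any smooth $V$ solving the PDE differs from $V^\ast$ only by the error terms introduced when the continuous-time approximation and the quantizer replace the exact discrete-time transitions. The overall strategy is to plug the conjectured $V(\boldsymbol{\chi})$ into the discrete-time Bellman equation, Taylor-expand $V(\boldsymbol{\chi}')$ around $\boldsymbol{\chi}$ using the one-step dynamics, and match the surviving $\mathcal{O}(\tau)$ terms with the right-hand side of (\ref{bellman2}); the unmatched pieces are then bounded explicitly.

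First, I would compute the conditional expectation $\mathbb{E}[V(\boldsymbol{\chi}') \mid \boldsymbol{\chi}, p]$ component by component. For the channel magnitude, since $h$ satisfies an Ornstein--Uhlenbeck SDE, Ito's formula applied to $\alpha = |h|^2$ yields a generator whose drift is $-2\widetilde{a}\alpha + 2\widetilde{a}$ and whose quadratic variation contributes $4\widetilde{a}\alpha \, \partial^2_\alpha V$, matching the $\alpha$-terms in (\ref{bellman2}). For the error process, conditional on $\gamma=0$ (which occurs with probability $\exp(-p\tau \alpha/(\kappa(R)B_W)) \approx 1 - p\tau\alpha/(\kappa(R)B_W)$ for small $\tau$), $\boldsymbol{\Delta}'$ evolves by $\mathbf{F}\boldsymbol{\Delta}+\mathbf{w}$, giving a Taylor expansion that produces the drift $\nabla_{\boldsymbol{\Delta}}^T V\, \widetilde{\mathbf{F}}\boldsymbol{\Delta}$ and the Hessian trace $\tfrac{1}{2}\text{Tr}(\nabla^2_{\boldsymbol{\Delta}} V\,\widetilde{\mathbf{W}})$. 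Conditional on $\gamma=1$, the error resets to $-\boldsymbol{\Psi}^{-1}\mathbf{e}(\mathbf{L},t)$, whose value is approximately $V(\boldsymbol{\chi})\big|_{\boldsymbol{\Delta}=\mathbf 0}$ up to an error controlled by the quantization variance $\mathcal{O}(L_{\min}^2/2^{2R_{\min}})$. Multiplying by the success probability produces the $p$-dependent term inside the minimization. The $\boldsymbol{\Psi}$ and $\mathbf{L}$ updates from Appendix A are deterministic of the form $\boldsymbol{\Psi}' = \mathbf{H}\boldsymbol{\Psi}$ and $\mathbf{L}' = \boldsymbol{\Gamma}\mathbf{F}_{\mathbf{R}}\mathbf{L}+w_{\max}\|\mathbf{H}\boldsymbol{\Psi}\|\mathbf{1}$, so $(\boldsymbol{\Psi}'-\boldsymbol{\Psi})/\tau$ and $(\mathbf{L}'-\mathbf{L})/\tau$ contribute the two chain-rule terms on the last two lines of (\ref{bellman2}).

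Second, I would bound the residual. After subtracting $V(\boldsymbol{\chi})$, dividing by $\tau$, and letting $\tau \to 0$, the second-order Taylor remainder for the $(\boldsymbol{\Delta},\alpha)$ components is $\mathcal{O}(\tau)$ uniformly in $\boldsymbol{\chi}$ on compact sets because of the $V\in\mathcal{C}^2$ assumption and the boundedness of the plant disturbance $\widetilde{w}_{\max}$. The quantization contributes independently through the $\gamma=1$ branch: replacing the reset value by $V(\boldsymbol{\chi})\big|_{\boldsymbol{\Delta}=\mathbf 0}$ ignores a quadratic-in-$\boldsymbol{\Delta}$ term of variance $\mathcal{O}(L_{\min}^2/2^{2R_{\min}})$, which, after multiplication by the slot length and the success probability, yields an error of order $\mathcal{O}(\tau^2/2^{2R_{\min}})$. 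Collecting the two sources gives the stated error $\mathcal{O}(\tau)+\mathcal{O}(\tau^2/2^{2R_{\min}})$, and a standard perturbation argument for average-cost MDPs (e.g., bounding the Poisson-equation residual through the uniform ergodicity implied by admissibility of $\Omega_p$) then lifts this pointwise difference to $V^\ast(\boldsymbol{\chi}) - V(\boldsymbol{\chi})$. The transversality condition (\ref{transodts}) follows immediately from $V(\boldsymbol{\chi})=\mathcal{O}(\|\boldsymbol{\Delta}\|^2)$ together with $\lim_{t\to\infty}\mathbb{E}[\|\boldsymbol{\Delta}(t)\|^2]<\infty$ guaranteed by Definition \ref{admisscontrolpol}, since $\mathbb{E}[V(\boldsymbol{\chi}(T))]$ remains bounded while the prefactor $1/T$ vanishes.

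The main obstacle will be the $\boldsymbol{\Psi}$ and $\mathbf{L}$ components: unlike $(\boldsymbol{\Delta},\alpha)$, they do not arise from SDEs and their one-step changes are not $o(1)$ in $\tau$, so the passage from the discrete update to the continuous-time generator is not a Taylor expansion but a formal identification via the $1/\tau$ scaling in (\ref{bellman2}). Justifying that this identification does not introduce errors larger than $\mathcal{O}(\tau)$ requires carefully using the stability of the primitive quantizer (ensuring $\boldsymbol{\Psi}$ and $\mathbf{L}$ stay in a bounded region under admissible policies) and absorbing the higher-order differences of $V$ along the discrete $(\boldsymbol{\Psi},\mathbf{L})$ orbit into the same residual. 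A secondary subtlety is verifying that the minimization over $p\in[0,p_{\max}]$ commutes with the continuous-time limit, which I would handle by showing that both sides attain their minima at the same boundary or interior point up to $\mathcal{O}(\tau)$ perturbation.
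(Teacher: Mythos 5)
Your overall strategy — Taylor-expand the one-step Bellman operator around $\boldsymbol{\chi}$, match the $\mathcal{O}(\tau)$ terms with the PDE, bound the quantization and discretization residuals, then verify transversality from $V(\boldsymbol{\chi})=\mathcal{O}(\|\boldsymbol{\Delta}\|^2)$ and admissibility — is the same as the paper's (Appendix D, Lemmas~\ref{applema}--\ref{tenlemma}). Your term-by-term identification of the $\alpha$-generator via It\^o's formula for $|h|^2$, the $\gamma=0$/$\gamma=1$ decomposition of the error dynamics with the quantization variance controlling the reset, and the matching of the success probability with the $p$-dependent term in the minimization all mirror the paper's computation (and your sign $-2\widetilde{a}\alpha+2\widetilde{a}$ for the $\alpha$-drift is what the discrete dynamics \eqref{chnain} actually produce). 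Your remark that the minimization over $p$ must be shown to commute with the limit corresponds to the paper's Lemma~\ref{applemma}.

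Two points are worth flagging. First, the ``main obstacle'' you raise for the $\boldsymbol{\Psi}$ and $\mathbf{L}$ coordinates is a non-issue once one observes (as the paper does, citing $\mathbf{F}=\mathbf{I}+\widetilde{\mathbf{F}}\tau+\mathcal{O}(\tau^2)$ and $w_{max}=\mathcal{O}(\tau)$) that $\mathbf{H}\boldsymbol{\Psi}-\boldsymbol{\Psi}=\mathcal{O}(\tau)$ and $\boldsymbol{\Gamma}\mathbf{F}_{\mathbf{R}}\mathbf{L}+w_{max}\|\mathbf{H}\boldsymbol{\Psi}\|\mathbf{1}-\mathbf{L}=\mathcal{O}(\tau)$: these one-step increments \emph{are} $\mathcal{O}(\tau)$, so an ordinary first-order Taylor expansion applies and the $1/\tau$ factors in \eqref{bellman2} simply bring the resulting drift terms to $\mathcal{O}(1)$. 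Second, your final lifting step invokes ``uniform ergodicity implied by admissibility of $\Omega_p$,'' but Definition~\ref{admisscontrolpol} only guarantees boundedness of certain second moments, not uniform ergodicity; this premise is not available from the hypotheses. The paper instead closes the argument (Lemma~\ref{tenlemma}) by assuming that the Bellman fixed-point plus transversality has a unique solution $(\theta^\ast,V^\ast)$ and deriving a contradiction from a persistent $\mathcal{O}(1)$ gap $V-V^\ast$ as $\tau\to 0$. If you want the ergodicity-based route, you would need to establish that ergodicity separately; otherwise, switch to the paper's uniqueness argument to keep the proof self-contained.
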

\begin{proof}
	please refer to Appendix D.
\end{proof}

As a result, solving the  optimality equation in (\ref{OrgBel}) is transformed into a calculus problem of solving the PDE in (\ref{bellman2}), and the difference between $V(\boldsymbol{\chi})$ and $V^\ast(\boldsymbol{\chi})$ is $\mathcal{O}\left(\tau\right)+ \mathcal{O}\big(\frac{\tau^2}{2^{2R_{min}}}\big)$ for sufficiently small  $\tau$.

Let $\widetilde{\Omega}_p^\ast$ be the control policy, under which the generated control action achieves the minimization in the PDE in (\ref{bellman2}) for any $\boldsymbol{\chi}$. Let  $\widetilde{\theta}^\ast=\limsup_{T\rightarrow \infty}  \frac{1}{T}\sum_{t=0}^{T-1}\mathbb{E}^{\Omega_\mathbf{u}^\ast, \widetilde{\Omega}_p^\ast} \big[c\big(\boldsymbol{\Delta}(t), p(t)\big)\big]$ be the associated    performance. The  gap between $\widetilde{\theta}^\ast$ and the optimal average  cost $\theta^{\ast}$ in (\ref{OrgBel}) is established as follows:
\begin{Theorem}	[Performance Gap between $\widetilde{\theta}^\ast$ and $\theta^{\ast}$]	\label{perfgap}
	If $V(\boldsymbol{\chi})=\mathcal{O}(\left\|\boldsymbol{\Delta}\right\|^2)$ and $\widetilde{\Omega}_p^\ast$ is admissible, we have 
	\begin{align}		\label{perfgapequ}
		\widetilde{\theta}^\ast - \theta^{\ast} =\mathcal{O}(\tau)+\mathcal{O}\Big(\frac{\tau^2}{2^{2R_{min}}}\Big) 
	\end{align}
	as  $\tau \rightarrow 0$.
	\end{Theorem}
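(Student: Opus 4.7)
The plan is to combine the value-function approximation bound of Lemma \ref{perturbPDE} with a one-step Taylor expansion of the discrete Bellman equation (\ref{OrgBel}) around the continuous-time generator producing the PDE (\ref{bellman2}), and then to transfer the resulting per-step residual into an average-cost gap via a standard telescoping / Cesaro argument.

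In Step 1, I would establish that for every action $p\in[0,p_{max}]$,
\begin{equation*}
c(\boldsymbol{\chi},p)\tau + \mathbb{E}[V(\boldsymbol{\chi}')\mid\boldsymbol{\chi},p] - V(\boldsymbol{\chi}) \;=\; \tau\cdot\mathcal{A}_p V(\boldsymbol{\chi}) + \mathcal{O}(\tau^2)+\mathcal{O}\!\left(\tau^3/2^{2R_{min}}\right),
\end{equation*}
where $\mathcal{A}_p V$ denotes the $p$-dependent integrand inside the min on the RHS of the PDE (\ref{bellman2}). This is the calculation underlying the proof of Lemma \ref{perturbPDE}: the $\mathcal{O}(\tau^2)$ piece comes from second-order Taylor remainders in the continuous drift of $\boldsymbol{\Delta},\alpha,\boldsymbol{\Psi},\mathbf{L}$ and in $\Pr[\gamma=1]=1-\exp(-p\tau\alpha/(\kappa B_W))$, while the $\mathcal{O}(\tau^3/2^{2R_{min}})$ piece collects the quantization contribution $\mathbb{E}[V|_{\boldsymbol{\Delta}=-\boldsymbol{\Psi}^{-1}\mathbf{e}}-V|_{\boldsymbol{\Delta}=\mathbf{0}}]$ weighted by the jump probability (the extra $\tau^2$ is supplied by the scaling $L=\mathcal{O}(\tau)$ from Assumption \ref{assumpstabl}, together with $\mathbb{E}[\|\mathbf{e}\|^2]=\mathcal{O}(L^2/2^{2R_{min}})$). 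Taking the min over $p$, using (\ref{bellman2}) to identify the leading term with the PDE constant $\theta$ --- which in turn equals $\theta^\ast+\mathcal{O}(\tau)+\mathcal{O}(\tau^2/2^{2R_{min}})$ by Lemma \ref{perturbPDE} --- and evaluating at the PDE-minimizer $\widetilde{p}^\ast(\boldsymbol{\chi})$ yields the key one-step identity
\begin{equation*}
c(\boldsymbol{\chi},\widetilde{p}^\ast)\tau + \mathbb{E}[V(\boldsymbol{\chi}')\mid\boldsymbol{\chi},\widetilde{p}^\ast] - V(\boldsymbol{\chi}) - \theta^\ast\tau \;=\; \mathcal{O}(\tau^2)+\mathcal{O}\!\left(\tau^3/2^{2R_{min}}\right).
\end{equation*}

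In Step 2, I would apply this identity along sample paths driven by $\widetilde{\Omega}_p^\ast$: taking expectation, summing from $t=0$ to $T-1$ so that the $V$-terms telescope to $\mathbb{E}[V(\boldsymbol{\chi}(0))-V(\boldsymbol{\chi}(T))]$, and dividing by $T\tau$ gives
\begin{equation*}
\widetilde{\theta}^\ast - \theta^\ast \;=\; \lim_{T\to\infty}\frac{\mathbb{E}[V(\boldsymbol{\chi}(0))-V(\boldsymbol{\chi}(T))]}{T\tau} + \mathcal{O}(\tau) + \mathcal{O}\!\left(\tau^2/2^{2R_{min}}\right),
\end{equation*}
since the $T$ per-step residuals sum to $T\cdot[\mathcal{O}(\tau^2)+\mathcal{O}(\tau^3/2^{2R_{min}})]$ and are then divided by $T\tau$. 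The boundary term vanishes thanks to $V=\mathcal{O}(\|\boldsymbol{\Delta}\|^2)$ combined with the admissibility of $\widetilde{\Omega}_p^\ast$, which forces $\sup_t\mathbb{E}^{\widetilde{\Omega}_p^\ast}[\|\boldsymbol{\Delta}(t)\|^2]<\infty$ --- essentially the transversality condition (\ref{transodts}) restated for $V$. This delivers (\ref{perfgapequ}).

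The main obstacle is the \emph{uniformity} of the per-step residual in Step 1: the Taylor remainders involve $\nabla^2_{\boldsymbol{\Delta}}V$, $\partial_\alpha^2 V$, cross derivatives of $V$ in $(\boldsymbol{\Psi},\mathbf{L})$, moments of the bounded plant disturbance, and the quantization-noise moments $\mathbb{E}[e_n^{2k}]=\mathcal{O}(L_n^{2k}/2^{2kR_n})$, all of which must be controlled well enough along the closed-loop trajectory so that their expectation under the induced measure stays bounded and the $\mathcal{O}$-constants can be pulled outside the Cesaro average. The hypothesis $V=\mathcal{O}(\|\boldsymbol{\Delta}\|^2)$ caps the quadratic growth of $V$ in $\boldsymbol{\Delta}$; Assumption \ref{assumpstabl} caps the initial condition and the driving noise; and the admissibility of $\widetilde{\Omega}_p^\ast$ caps $\mathbb{E}[\|\boldsymbol{\Delta}(t)\|^2]$ and $\mathbb{E}[\|\mathbf{L}(t)\|^2]$ uniformly in $t$. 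A secondary subtlety is that the reset $\boldsymbol{\Delta}\to -\boldsymbol{\Psi}^{-1}\mathbf{e}(\mathbf{L},t)$ at $\gamma(t)=1$ must be expanded separately from the continuous drift so that its quantization-noise contribution lands in the $\mathcal{O}(\tau^2/2^{2R_{min}})$ term and not in the pure time-discretization term $\mathcal{O}(\tau)$.
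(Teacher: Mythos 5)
Your Step 1 expansion and Step 2 telescoping are indeed the same ingredients the paper uses (its Lemma \ref{applema} and the stationarity step at the start of Appendix E), but there is a genuine gap at the decisive point: you pass from the PDE constant $\theta$ to the optimal discrete cost $\theta^\ast$ by asserting $\theta=\theta^\ast+\mathcal{O}(\tau)+\mathcal{O}\big(\tau^2/2^{2R_{min}}\big)$ ``by Lemma \ref{perturbPDE}.'' That lemma only states $V^\ast(\boldsymbol{\chi})-V(\boldsymbol{\chi})=\mathcal{O}(\tau)+\mathcal{O}\big(\tau^2/2^{2R_{min}}\big)$; it says nothing about $\theta-\theta^\ast$, and the average-cost closeness does not follow from the value-function closeness: subtracting the Bellman relation $\theta^\ast\tau+V^\ast(\boldsymbol{\chi})=\min_p\{\cdot\}$ from the (expanded) PDE relation, the value-function discrepancy enters through one-step differences that must be divided by $\tau$, so a pointwise $\mathcal{O}(\tau)$ bound on $V-V^\ast$ only yields $\theta-\theta^\ast=\mathcal{O}(1)$. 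Since your Step 2 already gives $\widetilde{\theta}^\ast=\theta+\mathcal{O}(\tau)+\mathcal{O}\big(\tau^2/2^{2R_{min}}\big)$, the missing claim $|\theta-\theta^\ast|=\mathcal{O}(\tau)+\mathcal{O}\big(\tau^2/2^{2R_{min}}\big)$ is essentially the theorem itself, so invoking it is circular as written.

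The paper avoids ever comparing $\theta$ with $\theta^\ast$ directly. It evaluates the approximate operator at the \emph{true optimal discrete policy}: since $\widetilde{\Omega}_p^\ast$ minimizes the PDE operator, $T^\dagger_{\boldsymbol{\chi}}(V,\widetilde{\Omega}_p^\ast(\boldsymbol{\chi}))\leq T^\dagger_{\boldsymbol{\chi}}(V,\Omega_p^\ast(\boldsymbol{\chi}))$; Lemma \ref{applema} converts $T^\dagger$ to the discrete operator $T$ up to the stated errors; the Bellman equation gives $T_{\boldsymbol{\chi}}(V^\ast,\Omega_p^\ast(\boldsymbol{\chi}))=\theta^\ast$; and the residual $T_{\boldsymbol{\chi}}(V,\Omega_p^\ast(\boldsymbol{\chi}))-T_{\boldsymbol{\chi}}(V^\ast,\Omega_p^\ast(\boldsymbol{\chi}))$ is controlled by a weighted sup-norm contraction bound (Lemma 3 of \cite{hinidhsd}) times $\|\mathbf{V}^\ast-\mathbf{V}\|_{\infty}^{\overline{\omega}}$, which is where Lemma \ref{perturbPDE} legitimately enters. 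Your route could be repaired without the contraction machinery by a second comparison pass: since $\theta$ is a minimum over $p$ in (\ref{bellman2}), your one-step expansion at $p=\Omega_p^\ast(\boldsymbol{\chi})$ gives $\theta\tau\leq c(\boldsymbol{\chi},\Omega_p^\ast(\boldsymbol{\chi}))\tau+\mathbb{E}[V(\boldsymbol{\chi}')\mid\boldsymbol{\chi},\Omega_p^\ast]-V(\boldsymbol{\chi})+\mathcal{O}(\tau^2)+\mathcal{O}\big(\tau^3/2^{2R_{min}}\big)$ pointwise, and telescoping this along the $\Omega_p^\ast$-trajectory (using admissibility of $\Omega_p^\ast$ and $V=\mathcal{O}(\|\boldsymbol{\Delta}\|^2)$ for the transversality of the boundary term, as in (\ref{transodts})) yields $\theta\leq\theta^\ast+\mathcal{O}(\tau)+\mathcal{O}\big(\tau^2/2^{2R_{min}}\big)$, which combined with your Step 2 and $\widetilde{\theta}^\ast\geq\theta^\ast$ closes the argument; but this step is absent from your proposal, and it --- not Lemma \ref{perturbPDE} --- is what carries the load.
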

\begin{proof}
Please refer to Appendix E.		
\end{proof}

Theorem \ref{perfgap} suggests that  $\widetilde{\theta}^\ast \rightarrow \theta^{\ast} $ as $\tau \rightarrow 0$. In other words, the  power control  policy  $\widetilde{\Omega}_p^\ast$ is asymptotically optimal for sufficiently small $\tau$.

%\vspace{-0.1cm}
%\framebox{\begin{minipage}[t]{0.9\columnwidth}
%	\begin{center}
%	 	\vspace{0.1cm}  \emph{Challenge 2: } \ Multi-dimensional coupled PDE  in (\ref{bellman2}).\vspace{0.1cm} 
%	\end{center}
%\end{minipage}}
%\vspace{0.2cm}

\vspace{-0.2cm}

\subsection{Closed-Form Approximate Value Function}
In this subsection, we solve the PDE in (\ref{bellman2}) to obtain the closed-form approximate value function $V(\boldsymbol{\chi})$. It can be observed that the PDE in (\ref{bellman2}) is multi-dimensional and coupled in the variables $(\boldsymbol{\Delta}, \alpha, \boldsymbol{\Psi}, \mathbf{L})$, and   it is quite challenging to obtain the closed-form solution.  In the following Lemma, we derive an asymptotic solution of the PDE using  the asymptotic expansion technique \cite{dominmethod}.
\begin{Lemma}	[Asymptotic Solution of the PDE]	\label{solpde}
The asymptotic solution of the PDE in (\ref{bellman2}) is given as follows:
	\begin{itemize}
		\item  for small $\|\boldsymbol{\Delta}\|^2\alpha$ (Bad Transmission Opportunity/Low Urgency Regime),
		\begin{align}	\label{valuefunc1}
			\hspace{-1cm}\log\left(V(\boldsymbol{\chi})\right)= \log (\boldsymbol{\Delta}^TA_{1,\mathbf{U}}(\alpha)\boldsymbol{\Delta}+b_1(\alpha))+\mathcal{O}\left({\alpha}\right)
		\end{align}
		 where $A_{1,\mathbf{U}}(\alpha)\in \mathbb{R}^{d\times d}$ and  $b_1(\alpha)\in \mathbb{R}$ are given in (\ref{appeox1}) in Appendix F.
		\item for large $\|\boldsymbol{\Delta}\|^2\alpha$ (Good Transmission Opportunity/High Urgency Regime), 
		\begin{align}	
			&\hspace{-1cm}\log\left(V(\boldsymbol{\chi})\right)= \log (\boldsymbol{\Delta}^TA_{2,\mathbf{U}}(\alpha)\boldsymbol{\Delta}+ \exp\left(2B_2\right)\alpha^{C_2} ) 	\notag\\
			&-\frac{\widetilde{a}-\widetilde{c}}{4}\alpha-\big(\frac{1}{4}-\frac{\widetilde{a}}{4\widetilde{c}}\big)\log \alpha+\mathcal{O}\big(\frac{1}{\alpha \sqrt{p_{max}}}\big)\label{valuefunc2}
		\end{align}
		where $A_{2,\mathbf{U}}(\alpha)\in \mathbb{R}^{d\times d}$ and  $B_2\in \mathbb{R}$  are given in (\ref{appeox2}) in Appendix F.
		\end{itemize}
\end{Lemma}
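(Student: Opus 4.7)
The plan is to exploit the bang--bang structure of the minimization over $p$ to split the state space into the two indicated regimes, reduce the PDE in (\ref{bellman2}) to an effective equation in $(\boldsymbol{\Delta},\alpha)$ by absorbing the fast $\boldsymbol{\Psi}$-- and $\mathbf{L}$--dynamics (which carry a $1/\tau$ prefactor and hence equilibrate on their deterministic manifolds in the continuous--time limit), and then solve each reduced equation with a suitable ansatz together with the dominant--balance asymptotic expansion technique \cite{dominmethod}. For the bang--bang step, the quantity inside $\min_{0\le p \le p_{max}}$ is affine in $p$ with coefficient $\lambda + \bigl(V(\boldsymbol{\chi})|_{\boldsymbol{\Delta}=\mathbf{0}} - V(\boldsymbol{\chi})\bigr)\alpha/(\kappa(R)B_W)$. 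Since $V$ is expected to be monotone increasing in $\|\boldsymbol{\Delta}\|^2$, this coefficient is negative precisely when $\|\boldsymbol{\Delta}\|^2\alpha$ exceeds a constant threshold, yielding $p^\ast = p_{max}$ in the high--urgency regime and $p^\ast = 0$ in the low--urgency regime.

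In the low--urgency regime I plug $p=0$ into (\ref{bellman2}), which annihilates the minimand, and try the quadratic ansatz $V(\boldsymbol{\chi}) = \boldsymbol{\Delta}^T A_{1,\mathbf{U}}(\alpha)\boldsymbol{\Delta} + b_1(\alpha)$ on top of the quasi--steady--state reduction of $\boldsymbol{\Psi}$ and $\mathbf{L}$. Matching coefficients of $\boldsymbol{\Delta}^T(\cdot)\boldsymbol{\Delta}$ delivers a continuous--time Lyapunov--type equation involving $\widetilde{\mathbf{F}}$, $\mathbf{S}$ and the $\alpha$--derivatives of $A_{1,\mathbf{U}}$; matching the $\boldsymbol{\Delta}$--independent terms delivers a first--order linear ODE in $\alpha$ for $b_1$ driven by $\operatorname{Tr}(A_{1,\mathbf{U}}(\alpha)\widetilde{\mathbf{W}})$. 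Taking logarithms and keeping only the leading contributions reproduces (\ref{valuefunc1}), with the residual $\mathcal{O}(\alpha)$ arising from the small--$\alpha$ expansion of the ODE solution and from the discarded subdominant terms in the $\boldsymbol{\Psi}$,$\mathbf{L}$ elimination.

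In the high--urgency regime I plug $p=p_{max}$ into (\ref{bellman2}); the resulting equation contains the non--local source $(V|_{\boldsymbol{\Delta}=\mathbf{0}} - V)\,\alpha p_{max}/(\kappa(R)B_W)$, which acts as a damping term for $V - V|_{\boldsymbol{\Delta}=\mathbf{0}}$. I use the multiplicative ansatz $V(\boldsymbol{\chi}) = \bigl[\boldsymbol{\Delta}^T A_{2,\mathbf{U}}(\alpha)\boldsymbol{\Delta} + \exp(2B_2)\alpha^{C_2}\bigr]\,\phi(\alpha)$ and pick the gauge factor $\phi(\alpha)$ so that the first--order $\alpha$--drift term in the reduced PDE is absorbed. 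Balancing the residual $\alpha$--diffusion $4\widetilde{a}\alpha\,\partial_\alpha^2$ against the damping term produces a quadratic characteristic equation whose admissible (decaying) root defines the auxiliary constant $\widetilde{c}$; keeping the leading and next--to--leading terms in the asymptotic expansion for $\alpha\to\infty$ generates the exponential prefactor $\exp(-(\widetilde{a}-\widetilde{c})\alpha/4)$ and the logarithmic correction $-\bigl(1/4 - \widetilde{a}/(4\widetilde{c})\bigr)\log\alpha$, with remainder $\mathcal{O}\bigl(1/(\alpha\sqrt{p_{max}})\bigr)$ as claimed.

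The main obstacle is precisely this high--urgency analysis: the non--local coupling $V|_{\boldsymbol{\Delta}=\mathbf{0}}$ inside a multi--dimensional PDE forces a careful choice of the gauge $\phi$, and only one root of the characteristic equation produces a solution that remains sub--exponential as $\alpha \to \infty$; picking this root correctly is what determines both $\widetilde{c}$ and the exponent $C_2$. A secondary difficulty is verifying that the two regime--specific asymptotic solutions are consistent at the threshold $\|\boldsymbol{\Delta}\|^2\alpha \sim \mathrm{const}$, which is what legitimizes splicing them into a single approximate value function for the power--control policy derived in the next subsection.
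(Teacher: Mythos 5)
Your broad strategy --- bang--bang split into the $p^\ast=0$ and $p^\ast=p_{max}$ regimes, reduction to a PDE in $(\boldsymbol{\Delta},\alpha)$, quadratic-in-$\boldsymbol{\Delta}$ ansatz, and method of dominant balance with $\widetilde{c}$ emerging from the admissible root of a characteristic balance --- matches the paper's Appendix F. But there are a few substantive slips.

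First, your justification for dropping the $\boldsymbol{\Psi}$- and $\mathbf{L}$-terms is incorrect even though the final reduction is the right one. You claim these terms carry a $1/\tau$ prefactor and so ``equilibrate'' on fast manifolds. In fact the bracketed drifts $(\mathbf{H}\boldsymbol{\Psi}-\boldsymbol{\Psi})$ and $(\boldsymbol{\Gamma}\mathbf{F}_{\mathbf{R}}\mathbf{L}+w_{max}\|\mathbf{H}\boldsymbol{\Psi}\|\mathbf{1}-\mathbf{L})$ are themselves $\mathcal{O}(\tau)$, so after dividing by $\tau$ those terms are $\mathcal{O}(1)$ --- there is no time-scale separation to exploit. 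The paper instead posits an ansatz $V(\boldsymbol{\chi})=\boldsymbol{\Delta}^T A(\alpha)\boldsymbol{\Delta}+b(\alpha)$ that simply does not depend on $\boldsymbol{\Psi}$ or $\mathbf{L}$, whereupon $\partial V/\partial\boldsymbol{\Psi}$ and $\nabla_{\mathbf{L}}V$ vanish identically and the PDE becomes separable; this is self-consistent because neither the per-stage cost nor the SER term $(V|_{\boldsymbol{\Delta}=\mathbf{0}}-V)$ involves $\boldsymbol{\Psi},\mathbf{L}$ under that ansatz. Second, the equation for $b_1(\alpha)$ (and for the entries of $A_{1,\mathbf{U}}(\alpha)$) is second-order in $\alpha$, not first-order --- the channel generator contributes both $(2\widetilde{a}\alpha+2\widetilde{a})\partial_\alpha$ and $4\widetilde{a}\alpha\,\partial_\alpha^2$ --- and getting that order right is what produces the $\exp(-\mu_i/(\widetilde{a}\alpha))\alpha^{-(\cdot)}$ small-$\alpha$ asymptotics.

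You also omit two steps the paper relies on. The matrix-coefficient matching is not solved directly: the paper diagonalizes $\widetilde{\mathbf{F}}=\mathbf{U}^{-1}\boldsymbol{\Gamma}_{\widetilde{\mathbf{F}}}\mathbf{U}$, changes variables $A_{1,\mu}=(\mathbf{U}^{-1})^\dagger A_{1,\mathbf{U}}\mathbf{U}^{-1}$, and thereby decouples the matrix ODE into scalar second-order ODEs for the diagonal entries and $2\times 2$ coupled pairs for the off-diagonals (solved by first adding the pair). Without this reduction ``matching coefficients'' leaves an intractable matrix ODE. And you replace the paper's a-posteriori regime verification (that $\lambda > (V-V|_{\boldsymbol{\Delta}=\mathbf{0}})\alpha/(\kappa(R)B_W)$ indeed holds for small $\|\boldsymbol{\Delta}\|^2\alpha$, using that $A_{1,\mathbf{U}}(\alpha)$ tends to a constant, and the opposite for large $\|\boldsymbol{\Delta}\|^2\alpha$, using monotonicity of $V$ in $\|\boldsymbol{\Delta}\|$ and $\widetilde{c}>\widetilde{a}$) with a vaguer ``consistency at the threshold'' check --- the paper does not attempt any such matching, it simply introduces $\eta_{th}$ as a free splicing parameter tuned empirically, so your proposed consistency argument is extra work that the paper deliberately avoids. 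Your multiplicative gauge ansatz in the high-urgency regime is a legitimate reparametrization of the paper's additive ansatz (it produces the same $\log V$ expansion), and your identification of $\widetilde{c}$ as the decaying root is correct; note that the residual $\mathcal{O}(1/(\alpha\sqrt{p_{max}}))$ comes from the $c_{ij}/\alpha$ correction terms in the MDB expansion, whose coefficients scale as $\mathcal{O}(1/\sqrt{p_{max}})$ because $\widetilde{c}=\mathcal{O}(\sqrt{p_{max}})$.
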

\begin{proof}
	please refer to Appendix F.
\end{proof}

Based on Lemma \ref{solpde},  we adopt the following approximation for the solution of the PDE  in (\ref{bellman2}):
\begin{align}	
	&\log\left(V(\boldsymbol{\chi})\right)\approx \txtblue{\log\big(\widetilde{V}_{\eta_{th}}(\boldsymbol{\chi})\big)}\notag \\
	\triangleq &\left\{
	\begin{aligned}\label{approxvaluefunc} 
		 & \log (\boldsymbol{\Delta}^TA_{1,\mathbf{U}}(\alpha)\boldsymbol{\Delta}+b_1(\alpha)), \text{for } \|\boldsymbol{\Delta}\|^2\alpha<\eta_{th}\\
		&\log (\boldsymbol{\Delta}^TA_{2,\mathbf{U}}(\alpha)\boldsymbol{\Delta}+ \exp\left(2B_2\right)\alpha^{C_2} )-\frac{\widetilde{a}-\widetilde{c}}{4}\alpha \\
		&-\left(\frac{1}{4}-\frac{\widetilde{a}}{4\widetilde{c}}\right)\log \alpha,  \hspace{0.1cm} \text{for } \|\boldsymbol{\Delta}\|^2\alpha \geq \eta_{th}
	\end{aligned}
	\right.
\end{align}
where $\eta_{th}>0$ is a solution parameter. 

\vspace{-0.2cm}

\subsection{Structural Properties of the Low Complexity Power Control}
Using Lemma  \ref{perturbPDE}  and (\ref{approxvaluefunc}), we obtain a low complexity power control policy in the following theorem:
\begin{Theorem}		[Structural Properties of Power Control Policy]	\label{thmpower}
	The optimizing power control policy  $\widetilde{\Omega}_p^\ast$ that minimize the R.H.S. of  the PDE in (\ref{bellman2}) is given by
	\bs\begin{align}	\label{powersolu}
		&\widetilde{\Omega}_p^\ast\left(\boldsymbol{\chi}\right)=p^\ast  \\
		=&\left\{\begin{aligned}
	&0,	\ \text{if }\lambda >\txtblue{\big(\widetilde{V}_{\eta_{th}}(\boldsymbol{\chi}) -\widetilde{V}_{\eta_{th}}(\boldsymbol{\chi})\big|_{\boldsymbol{\Delta}=\mathbf{0}} \big) \frac{  \alpha}{\kappa(R) B_W}}	\\
	&p_{max}, \   \text{if }\lambda \leq \txtblue{\big(\widetilde{V}_{\eta_{th}}(\boldsymbol{\chi}) -\widetilde{V}_{\eta_{th}}(\boldsymbol{\chi})\big|_{\boldsymbol{\Delta}=\mathbf{0}} \big) \frac{  \alpha}{\kappa(R) B_W}}\notag
	\end{aligned}
	\right.
	\end{align}\bsc~\hfill~\IEEEQED
\end{Theorem}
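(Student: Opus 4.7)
The plan is to exploit the fact that, inside the PDE of Lemma \ref{perturbPDE}, the transmit power $p$ enters only through a single affine term, so that the inner minimization reduces to a one-dimensional linear program on a compact interval and is solved by inspection.

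First I would isolate the $p$-dependent part of the right-hand side of (\ref{bellman2}). All the drift, diffusion, quantization-range and coordinate-transformation terms are independent of $p$; the only place $p$ appears is in
\begin{align}
\min_{0\le p\le p_{\max}} \Big[\lambda + \big(V(\boldsymbol{\chi})\big|_{\boldsymbol{\Delta}=\mathbf{0}} - V(\boldsymbol{\chi})\big)\frac{\alpha}{\kappa(R)B_W}\Big]\,p. \notag
\end{align}
Because this expression is affine in $p$ with coefficient
\begin{align}
\mu(\boldsymbol{\chi}) \;\triangleq\; \lambda \;+\; \big(V(\boldsymbol{\chi})\big|_{\boldsymbol{\Delta}=\mathbf{0}} - V(\boldsymbol{\chi})\big)\frac{\alpha}{\kappa(R)B_W}, \notag
\end{align}
and the feasible set $[0,p_{\max}]$ is a bounded interval, the minimizer is necessarily attained at an endpoint: $p^\ast=0$ whenever $\mu(\boldsymbol{\chi})>0$, and $p^\ast=p_{\max}$ whenever $\mu(\boldsymbol{\chi})\le 0$ (with any choice being optimal in the degenerate case $\mu(\boldsymbol{\chi})=0$, for which we adopt $p^\ast=p_{\max}$ to match the stated policy).

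Next I would rewrite the condition $\mu(\boldsymbol{\chi})>0$ in the equivalent form
\begin{align}
\lambda \;>\; \big(V(\boldsymbol{\chi}) - V(\boldsymbol{\chi})\big|_{\boldsymbol{\Delta}=\mathbf{0}}\big)\frac{\alpha}{\kappa(R)B_W}, \notag
\end{align}
which is exactly the threshold appearing in (\ref{powersolu}) once $V$ is replaced by the closed-form surrogate $\widetilde{V}_{\eta_{th}}$ from (\ref{approxvaluefunc}). This substitution is justified by Lemma \ref{perturbPDE} together with Lemma \ref{solpde}, which together guarantee that $\widetilde{V}_{\eta_{th}}$ approximates $V$ (and hence $V^\ast$) up to the perturbation error $\mathcal{O}(\tau)+\mathcal{O}(\tau^2/2^{2R_{\min}})$, so the policy $\widetilde{\Omega}_p^\ast$ defined by plugging $\widetilde{V}_{\eta_{th}}$ into the threshold is precisely the minimizer of the approximate PDE.

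There is no real obstacle to overcome: the entire content of the theorem is the affine-in-$p$ structure of the communication cost plus error-probability term and the linearity of the resulting Bellman operator, which together force a bang-bang (event-driven) solution. The only point that requires a line of care is the tie-breaking at $\mu(\boldsymbol{\chi})=0$ and making sure the value-function difference $\widetilde{V}_{\eta_{th}}(\boldsymbol{\chi})-\widetilde{V}_{\eta_{th}}(\boldsymbol{\chi})|_{\boldsymbol{\Delta}=\mathbf{0}}$ is nonnegative (so that the threshold has the natural interpretation of urgency times channel gain exceeding the power price), which follows directly from inspection of the closed-form expressions in (\ref{approxvaluefunc}) since the leading quadratic $\boldsymbol{\Delta}^T A_{i,\mathbf{U}}(\alpha)\boldsymbol{\Delta}$ vanishes at $\boldsymbol{\Delta}=\mathbf{0}$ while being positive semidefinite elsewhere.
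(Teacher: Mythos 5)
Your proposal is correct and takes essentially the same approach as the paper, whose entire argument (stated in the opening line of Appendix F) is that the R.H.S. of (\ref{bellman2}) is a linear program in $p$ over $[0,p_{\max}]$, hence minimized at an endpoint determined by the sign of the affine coefficient. Your write-up merely spells out what the paper asserts in a single sentence, including the immediate algebraic rearrangement of the sign condition into the stated threshold form; the remarks about tie-breaking and nonnegativity of the value-function difference are harmless extras the paper does not bother to make.
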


%\begin{figure}
%\centering
%  \includegraphics[width=4.3in]{fig7}
%  \caption{ The system parameters are configured as follows: $\widetilde{a}=-5$, $\tau=0.05$,  $p_{max}=180$, $R=3$, $B_W=1$,  $\widetilde{\mathbf{w}}=1$,  $\widetilde{\mathbf{F}}=-3$ with system state dimension $L=1$, $\lambda=2000$, and $\eta_{th}=0.76$ (which is determined according to the method in Section \ref{labelsdsa}).}
%  \label{fig7}
%\end{figure}

\begin{figure}
\centering
\subfigure[]{
\includegraphics[width=2.4in]{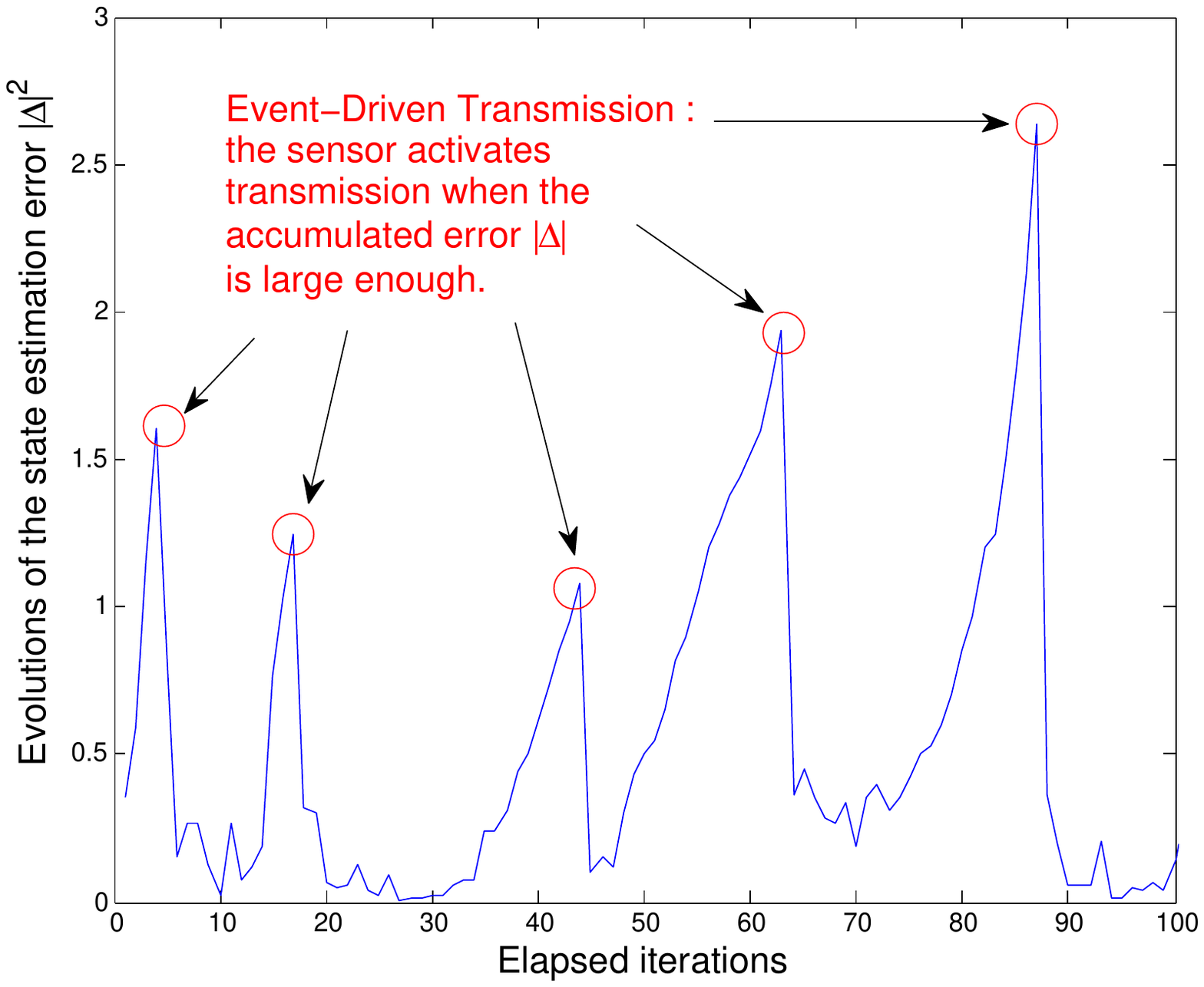}}\vspace{-0.2cm}
\subfigure[]{
\includegraphics[width=2.7in]{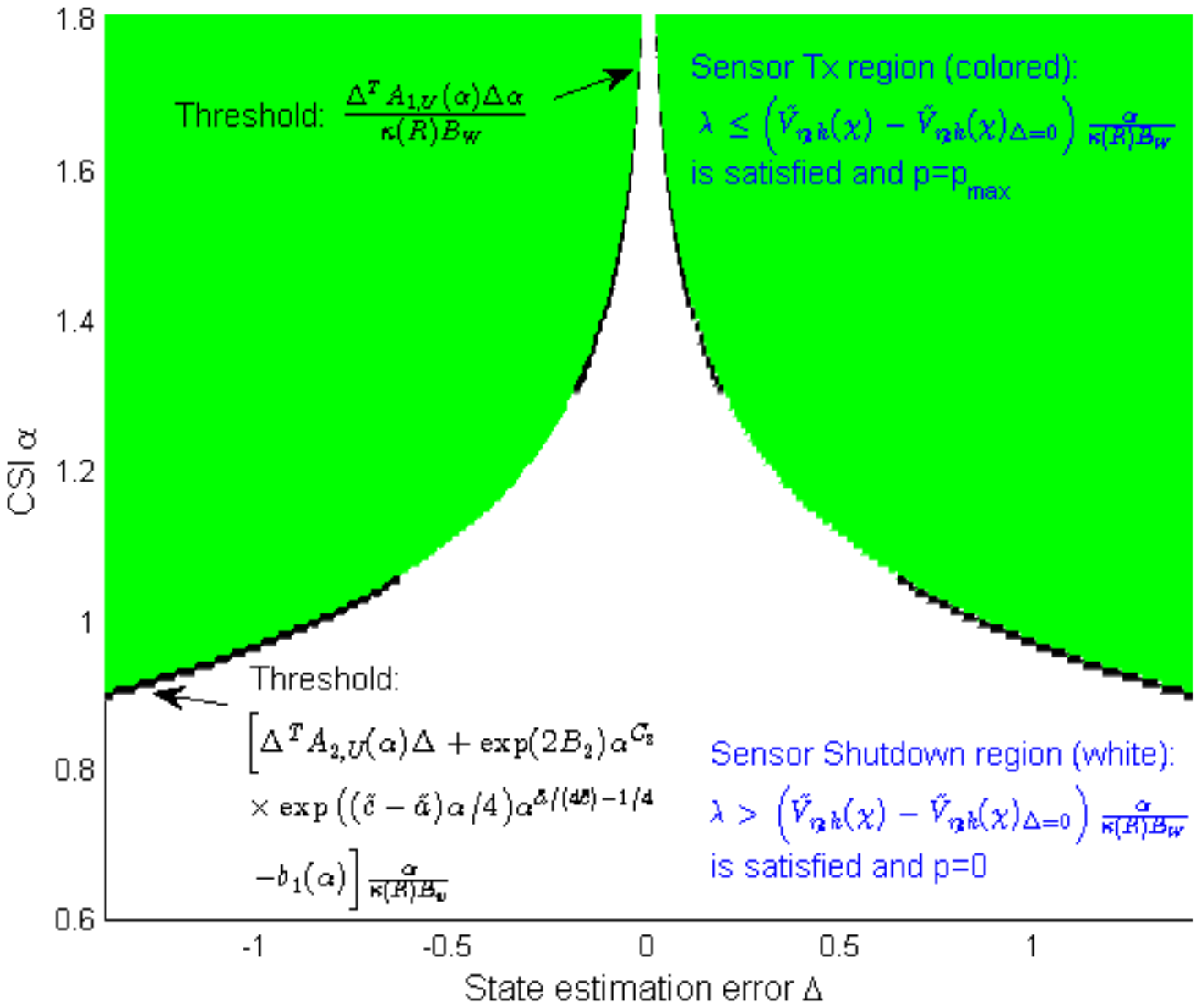}}\vspace{-0.2cm}
\caption{\txtblue{(a) Evolutions of  $\|\boldsymbol{\Delta}(t)\|^2$ under $\widetilde{\Omega}_p^\ast$.} (b) Decision region (sensor shutdown/Tx region) under $\widetilde{\Omega}_p^\ast$. The system parameters are configured as in Fig. \ref{comparsdeees2}, and  $\widetilde{\mathbf{F}}=-3$ with system state dimension $L=1$, $\lambda=2000$, and $\eta_{th}=0.43$ (which is determined according to the method in Fig.  \ref{compareees2}).}
\label{approxqua}\vspace{-0.6cm}
\end{figure}

It can be observed that the power control policy  $\widetilde{\Omega}_p^\ast$ has an \emph{event-driven control structure} with a dynamically changing threshold  $\txtblue{\big(\widetilde{V}_{\eta_{th}}(\boldsymbol{\chi}) -\widetilde{V}_{\eta_{th}}(\boldsymbol{\chi})\big|_{\boldsymbol{\Delta}=\mathbf{0}} \big) \frac{  \alpha}{\kappa(R) B_W}}$. Specifically, the sensor either transmits using the maximum power or shots down depending whether the dynamic threshold  is larger than $\lambda$ or not. \txtblue{Fig. \ref{approxqua}(a) illustrates a sample path  of the state estimation error $\boldsymbol{\Delta}(t)$  and it can be observed that the sensor only activates transmission when the accumulated  error is large enough.} Furthermore, the dynamic threshold is adaptive to the   plant state estimation error  $\boldsymbol{\Delta}$  and the CSI $\alpha$. Using the approximate value function  in (\ref{approxvaluefunc}), we have the following discussions\footnote{\txtblue{Please refer to Appendix G on  the order growth results in Remark \ref{remadksdsds}.}} on the dynamic threshold:
\begin{Remark}	[Properties of the Dynamic Threshold]	\label{remadksdsds}
	The dynamic threshold  is affected by the following factors:
	\begin{itemize}
		\item	\textbf{Dynamic Threshold  w.r.t. $\boldsymbol{\Delta}$:}  For given CSI  $\alpha$ and data rate $R$,  the dynamic threshold  increases w.r.t. the state estimation error $\boldsymbol{\Delta}$ at the order of $\mathcal{O}(\|\boldsymbol{\Delta}\|^2)$. This means that large state estimation error $\|\boldsymbol{\Delta}\|$ tends to use full  power. This is reasonable because  large state estimation error means  \txtblue{the {\em urgency} of delivering information to the controller}, which leads to use  large power.  
		\item	\textbf{Dynamic Threshold  w.r.t. $\alpha$:}   For  given state estimation error   $\boldsymbol{\Delta}$ and data rate $R$, the dynamic threshold increases w.r.t. the CSI $\alpha$  at the order of $\mathcal{O}( \exp(\alpha))$. This means that large  CSI $\alpha$ tends to use full  power.  Note that large $\alpha$ means good transmission \txtblue{opportunities}. Hence, it is reasonable to use more power to reduce the SER.
		\item	\textbf{Dynamic Threshold  w.r.t. $R$:} For given large CSI $\alpha$  or large  state estimation error   $\|\boldsymbol{\Delta}\|$, the dynamic threshold  increases w.r.t. $R$ at the order of \bs$\mathcal{O}(\exp(\sqrt{\kappa(R)})  \big/\sqrt{\kappa(R)})$\bsc. This means that large  data rate $R$ tends to use full  power. This is reasonable because large data rate leads to high SER\footnote{\txtblue{Large $R$ leads to the decrease of the minimal distance between the transmitted symbols in the constellation diagram, which results in an increase in the SER \cite{digitalcom}.}}, which also leads to use large power to  increase the chance of using $p_{max}$.~\hfill~\IEEEQED
	\end{itemize}
\end{Remark}

Fig. \ref{approxqua}(b) illustrates the decision region for $p^\ast = 0$ and $p^\ast = p_{max}$ for given system parameter configurations.

\vspace{-0.4cm}
\subsection{Stability Conditions and   Performance Gaps} 
We verify that the derived power control policy $\widetilde{\Omega}_p^\ast$ in (\ref{powersolu}) belongs to an admissible control policy according to Definition \ref{admisscontrolpol}. The conclusion  is summarized below. 
\txtblue{\begin{Theorem}	[Sufficient Conditions for NCS Stability]	\label{stab}
	If the NCS satisfies the following conditions,
	\bs\begin{align}	\label{sadsaxxxx}
		&\frac{ p_{max} \tau}{\kappa(R) B_W+p_{max} \tau} >\\
		&\max\Big\{ \frac{1}{R}\sum_{\mu_i(\mathbf{F})} \max\left\{0, \log|\mu_i(\mathbf{F})|\right\}, 1-\frac{1}{\mu_{max}(\mathbf{F}^T\mathbf{F})}\Big\} \notag 
	\end{align}\bsc 	then  using the  power control policy $\widetilde{\Omega}_p^\ast$ in (\ref{powersolu}) and the plant control policy  $\Omega_{\mathbf{u}}^\ast$ in (\ref{statsrr}),  the NCS is stable in the sense that $\lim_{t \rightarrow \infty}\mathbb{E}^{\Omega_{\mathbf{u}}^\ast, \widetilde{\Omega}_p^\ast}[\left\|\mathbf{x}(t)\right\|^2 ]< \infty$,  $\lim_{t \rightarrow \infty}\mathbb{E}^{\Omega_{\mathbf{u}}^\ast, \widetilde{\Omega}_p^\ast}[\left\|\mathbf{L}(t)\right\|^2 ]< \infty$, and $\lim_{t \rightarrow \infty}\mathbb{E}^{\Omega_{\mathbf{u}}^\ast, \widetilde{\Omega}_p^\ast}[\left\|\boldsymbol{\Delta}(t)\right\|^2 ]< \infty$.
\end{Theorem}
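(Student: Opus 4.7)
The plan is to interpret each side of condition (\ref{sadsaxxxx}) and then establish the three stability claims separately, using a sequence of coupled stochastic Lyapunov arguments. The left-hand side $\frac{p_{max}\tau}{\kappa(R)B_W+p_{max}\tau}$ is, after a routine computation, exactly the stationary successful-decoding probability $\mathbb{E}\left[1-\exp\left(-p_{max}\tau\alpha/(\kappa(R)B_W)\right)\right]$ under peak-power transmission, using the fact that $\alpha=|h|^2$ is $\mathrm{Exp}(1)$-distributed in steady state from the dynamics (\ref{chnain}). Denote this quantity by $q$. The right-hand side decomposes into two classical ingredients: $1-1/\mu_{max}(\mathbf{F}^T\mathbf{F})$ is the mean-square stability threshold for packet-erasure linear systems, while $\frac{1}{R}\sum_{i}\max\{0,\log|\mu_i(\mathbf{F})|\}$ is a Tatikonda--Mitter-type rate condition normalized by the total channel rate $R$.

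First I would establish $\sup_t\mathbb{E}[\|\boldsymbol{\Delta}(t)\|^2]<\infty$ from the two-regime recursion (\ref{deltadyntext}). The event-driven structure of Theorem \ref{thmpower}, together with Remark \ref{remadksdsds}, ensures that on the tail $\{\|\boldsymbol{\Delta}\|^2\alpha \text{ large}\}$ the sensor always fires at $p_{max}$; hence the conditional successful-decoding probability matches (or dominates) the peak-power one. After averaging over $\alpha$, the per-slot contraction factor of $\mathbb{E}[\|\boldsymbol{\Delta}(t+1)\|^2\mid\mathcal{F}_t]$ is at most $(1-q)\mu_{max}(\mathbf{F}^T\mathbf{F})$, with additive driving bounded by $\mathrm{Tr}(\mathbf{W})$ plus the reset-branch quantization variance, which is $\mathcal{O}(\|\mathbf{L}(t)\|^2)$ by Lemma \ref{lemmaqunt}. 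The assumed inequality $q>1-1/\mu_{max}(\mathbf{F}^T\mathbf{F})$ makes this contraction factor strictly less than one, so a geometric Foster--Lyapunov bound closes as soon as $\sup_t\mathbb{E}[\|\mathbf{L}(t)\|^2]<\infty$, which I handle next.

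For the quantizer range, I would exploit the update extracted from the PDE term $\nabla^T_{\mathbf{L}}V(\boldsymbol{\chi})\bigl(\boldsymbol{\Gamma}\mathbf{F}_{\mathbf{R}}\mathbf{L}+w_{max}\|\mathbf{H}\boldsymbol{\Psi}\|\mathbf{1}-\mathbf{L}\bigr)/\tau$ in (\ref{bellman2}): between successful transmissions $\mathbf{L}$ dilates coordinatewise by the eigen-factors of $\mathbf{F}$ acting through the coordinate transform $\boldsymbol{\Psi}$, while a successful transmission contracts the $i$-th coordinate by the quantization factor $2^{-R_i}$. Passing to logarithms, the expected one-step log-drift along coordinate $i$ is at most $(1-q)\log|\mu_i(\mathbf{F})|-qR_i\log 2$ plus a bounded term controlled by $\widetilde{w}_{max}$. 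Summing with the allocation $\sum_i R_i=R$, the condition $qR>\sum_i\max\{0,\log|\mu_i(\mathbf{F})|\}$ of (\ref{sadsaxxxx}) is exactly what forces the total log-drift to be negative outside a compact set, and a standard stochastic-drift argument upgrades log-stability to $\sup_t\mathbb{E}[\|\mathbf{L}(t)\|^2]<\infty$.

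Finally, $\sup_t\mathbb{E}[\|\mathbf{x}(t)\|^2]<\infty$ follows from the decomposition $\mathbf{x}(t)=\hat{\mathbf{x}}(t)+\boldsymbol{\Delta}(t)$ together with the classical LQG fact that $\mathbf{F}-\mathbf{G}\mathbf{K}$ is Schur (Assumption \ref{assumpstabl} and the DARE solution), so the recursion $\hat{\mathbf{x}}(t+1)=(\mathbf{F}-\mathbf{G}\mathbf{K})\hat{\mathbf{x}}(t)+(\text{driving linear in }\boldsymbol{\Delta},\mathbf{w})$ inherits uniformly bounded second moments from Steps 1--2. The main obstacle I anticipate is the coupling between the $\boldsymbol{\Delta}$ and $\mathbf{L}$ Lyapunov arguments: the reset branch of $\boldsymbol{\Delta}$ depends on $\mathbf{L}$ through $\boldsymbol{\Psi}^{-1}\mathbf{e}(\mathbf{L},\cdot)$, while the update of $\mathbf{L}$ implicitly tracks the $\boldsymbol{\Delta}$-driven innovation magnitude, so closing the argument cleanly will likely require a joint Lyapunov function of the form $\mathbb{E}[\|\boldsymbol{\Delta}\|^2]+\rho\mathbb{E}[\|\mathbf{L}\|^2]$ with a carefully chosen weight $\rho$, so that the two contraction factors extracted from the two clauses of (\ref{sadsaxxxx}) jointly dominate the cross-coupling terms.
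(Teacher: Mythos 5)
Your reading of the left-hand side of (\ref{sadsaxxxx}) as the stationary peak-power success probability $q$ (via $\alpha\sim\mathrm{Exp}(1)$), the use of the clause $q>1-1/\mu_{max}(\mathbf{F}^T\mathbf{F})$ to contract $\mathbb{E}[\|\boldsymbol{\Delta}\|^2]$, and the recovery of $\mathbf{x}$-stability from the Schur matrix $\mathbf{F}-\mathbf{G}\mathbf{K}$ all match the paper's ingredients. The first genuine gap is in your treatment of the rate clause: you sum the per-coordinate log-drifts and argue that $qR>\sum_i\max\{0,\log|\mu_i(\mathbf{F})|\}$ makes the \emph{total} log-drift negative, but negativity of the sum does not control $\|\mathbf{L}\|$ --- one coordinate of $\mathbf{L}$ can be exponentially unstable while another is strongly stable and the summed drift stays negative. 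The sum condition in (\ref{sadsaxxxx}) is only the feasibility condition for a \emph{per-coordinate} allocation $\{R_i\}$ with $\frac{p_{max}\tau}{\kappa(R)B_W+p_{max}\tau}R_i>\max\{0,\log|\mu_i(\mathbf{F})|\}$ for every $i$, which is exactly how the paper uses it (its equation (\ref{pexpreetiel})); moreover, since $\boldsymbol{\Gamma}$ carries Jordan off-diagonal terms and the contraction acts through the random matrices $\boldsymbol{\Gamma}\widetilde{\mathbf{F}}_{\mathbf{R}}(i)$, the passage from a negative logarithmic drift to $\sup_t\mathbb{E}[\|\mathbf{L}(t)\|^2]<\infty$ is not ``standard'' coordinatewise bookkeeping --- the paper invokes Prop.~4.4 of \cite{ratenoise} and the random-matrix-product result of \cite{stabprodthem} to bound $\|\prod_i(\boldsymbol{\Gamma}\widetilde{\mathbf{F}}_{\mathbf{R}}(i))\|$, and your proposal would need the same machinery.

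The second gap is that your stand-alone drift argument for $\mathbf{L}$ presumes every slot enjoys success probability at least $q$, but under the event-driven policy (\ref{powersolu}) the sensor shuts down whenever the threshold is not met; in those slots $\gamma(t)=0$ with certainty and, by (\ref{28equa}), $\mathbf{L}$ dilates deterministically by $\boldsymbol{\Gamma}$, so your claimed per-slot bound $(1-q)\log|\mu_i(\mathbf{F})|-qR_i\log 2$ simply does not hold there. This is precisely the $\boldsymbol{\Delta}$--$\mathbf{L}$ coupling you flag at the end, and the proposed joint Lyapunov function $\mathbb{E}[\|\boldsymbol{\Delta}\|^2]+\rho\,\mathbb{E}[\|\mathbf{L}\|^2]$ is not shown to absorb these idle-period dilations, so the circularity between your Steps 1 and 2 is not actually resolved. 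The paper sidesteps an unconditional drift on $\mathbf{L}$ altogether: it only establishes boundedness of the \emph{conditional} moment $\mathbb{E}[\|\mathbf{L}(t)\|^2\mid\boldsymbol{\Delta}(t-1)=\boldsymbol{\Delta}]$ for large $\|\boldsymbol{\Delta}\|$, where the policy provably transmits at $p_{max}$ (cf.\ (\ref{95sdsas}), (\ref{94equationsd})), feeds that into a single negative-drift argument for $\|\boldsymbol{\Delta}\|$, converts the drift into $\lim_t\mathbb{E}[\|\boldsymbol{\Delta}(t)\|^2]<\infty$ via a Kingman-type bound on the semi-invariant MGF, and only then recovers stability of $\mathbf{L}$ and $\mathbf{x}$ from stability of $\boldsymbol{\Delta}$ through Lemma \ref{sda2323s} (the range-tracking geometry of the quantizer plus $\|\mathbf{F}-\mathbf{G}\mathbf{K}\|<1$). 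To repair your proposal you would either have to adopt this ordering or prove an explicit bound on how long the sensor can idle and how much $\mathbf{L}$ can grow during such periods as a function of $\boldsymbol{\Delta}$.
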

\begin{proof}
	please refer to Appendix H.
\end{proof}}
\begin{Remark}	[Discussions of the Stability Conditions]
	\txtblue{Theorem \ref{stab} gives the conditions to ensure  NSC stability. Under the conditions, we have $R>(1+\frac{\kappa(R) B_W}{ p_{max} \tau}) \mathcal{I}(\mathbf{F})> \mathcal{I}(\mathbf{F})$,}  where $\mathcal{I}(\mathbf{F})\triangleq \sum_{\mu_i(\mathbf{F})} \max\{0, \log|\mu_i(\mathbf{F})|\}$ is the \emph{instability measure} \cite{rate1}, \cite{rate2} of the plant system.  \txtblue{Note that  the term $(1+\frac{\kappa(R) B_W}{ p_{max} \tau})\mathcal{I}(\mathbf{F})$ is equivalent to $\frac{1}{P_{Succ, \widetilde{\Omega}_{p}^\ast}}\mathcal{I}(\mathbf{F})$,} where $P_{Succ, \widetilde{\Omega}_{p}^\ast}$ is the average successful transmission probability under  $\widetilde{\Omega}_{p}^\ast$. Hence, the   condition in (\ref{sadsaxxxx}) is equivalent to $RP_{Succ, \widetilde{\Omega}_{p}^\ast}  >\mathcal{I}(\mathbf{F})$, where $R P_{Succ, \widetilde{\Omega}_{p}^\ast} $ is the average number of bits  that are successfully delivered to the controller per channel use. Therefore,  our sufficient condition is consistent with the classical results for error-free channels \cite{rate1}, \cite{rate2}  after accounting for the SER.~\hfill~\IEEEQED
\end{Remark}

Finally, the approximate value function in (\ref{approxvaluefunc}) satisfies $\widetilde{V}_{\eta_{th}}(\boldsymbol{\chi})=\mathcal{O}(\left\|\boldsymbol{\Delta}\right\|^2)$. Based on Theorem \ref{perfgap}, the performance of the NCS  under $\widetilde{\Omega}_p^\ast$  in (\ref{powersolu}) (i.e., $\widetilde{\theta}^\ast $) is order-optimal, i.e., $\widetilde{\theta}^\ast - \theta^{\ast} =\mathcal{O}(\tau)+\mathcal{O}\big(\frac{\tau^2}{2^{2R_{min}}}\big)$ as $\tau \rightarrow 0$.  \txtblue{We give the necessary conditions for NCS stability as follows.
\begin{Theorem}	[Necessary Conditions for NCS Stability]	\label{stabasdsa}
	Using the  power control policy $\widetilde{\Omega}_p^\ast$ in (\ref{powersolu}) and the plant control policy  $\Omega_{\mathbf{u}}^\ast$ in (\ref{statsrr}), the NCS is stable (i.e., $\lim_{t \rightarrow \infty}\mathbb{E}^{\Omega_{\mathbf{u}}^\ast, \widetilde{\Omega}_p^\ast}[\left\|\mathbf{x}(t)\right\|^2 ]< \infty$, $\lim_{t \rightarrow \infty}\mathbb{E}^{\Omega_{\mathbf{u}}^\ast, \widetilde{\Omega}_p^\ast}[\left\|\mathbf{L}(t)\right\|^2 ] < \infty$, and $\lim_{t \rightarrow \infty}\mathbb{E}^{\Omega_{\mathbf{u}}^\ast, \widetilde{\Omega}_p^\ast}[\left\|\boldsymbol{\Delta}(t)\right\|^2 ]< \infty$) only if 
	\bs\begin{align}	\label{necss3}
		&\frac{ p_{max} \tau}{\kappa(R) B_W+p_{max} \tau} > \\
		&\min\Big\{ \frac{1}{R}\sum_{\mu_i(\mathbf{F})} \max\left\{0, \log|\mu_i(\mathbf{F})|\right\}, 1-\frac{1}{\mu_{max}(\mathbf{F}^T\mathbf{F})}\Big\}\notag 
	\end{align}\bsc
\end{Theorem}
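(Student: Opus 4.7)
The plan is to sandwich the asymptotic average successful-transmission probability $P_{\text{succ}}:=\lim_{T\to\infty}\frac{1}{T}\sum_{t=0}^{T-1}\mathbb{E}^{\Omega_{\mathbf{u}}^\ast,\widetilde{\Omega}_p^\ast}[\gamma(t)]$ between an upper bound that coincides with the left-hand side of (\ref{necss3}) and a lower bound of $\min\{\mathcal{I}(\mathbf{F})/R,\,1-1/\mu_{max}(\mathbf{F}^T\mathbf{F})\}$ that is forced by stability; transitivity then yields (\ref{necss3}).

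For the upper bound, the SER expression (\ref{serequ}) gives per-slot success probability $\Pr[\gamma(t){=}1\mid p(t),\alpha(t)]=1-\exp(-p(t)\tau\alpha(t)/(\kappa(R)B_W))$, which is monotone increasing in $p(t)\in[0,p_{max}]$ and hence is dominated pointwise by its value at $p\equiv p_{max}$. The channel recursion (\ref{chnain}) with $a=\exp(-\widetilde{a}\tau)$ and $Z=1-\exp(-2\widetilde{a}\tau)$ has stationary variance $Z/(1-a^2)=1$, so $h(t)$ converges to $\mathcal{CN}(0,1)$ and $\alpha(t)=|h(t)|^2$ to a unit-mean exponential. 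Direct integration of $1-\exp(-p_{max}\tau\alpha/(\kappa(R)B_W))$ against this density gives exactly $\frac{p_{max}\tau}{\kappa(R)B_W+p_{max}\tau}$, and by dominated convergence this upper-bounds $P_{\text{succ}}$ under any admissible policy, including $\widetilde{\Omega}_p^\ast$.

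For the lower bound, I would invoke two complementary stability converses and keep the weaker of the two. First, an information-theoretic argument adapted from \cite{rate1,rate2}: if $\mathbb{E}[\|\mathbf{x}(t)\|^2]$ is to remain bounded, then the average information rate actually delivered to the controller must exceed the entropy-growth rate $\mathcal{I}(\mathbf{F})=\sum_i\max\{0,\log|\mu_i(\mathbf{F})|\}$ of the unstable modes of $\mathbf{F}$; since (\ref{equivalentchn}) makes the link an erasure channel that delivers $R$ bits on a successful slot and none on a failure, this forces $R\cdot P_{\text{succ}}>\mathcal{I}(\mathbf{F})$. Second, a worst-case second-moment recursion on (\ref{deltadyntext}) along the principal singular direction of $\mathbf{F}$ shows that each failure inflates $\|\boldsymbol{\Delta}\|^2$ by a factor of at most $\mu_{max}(\mathbf{F}^T\mathbf{F})$, whereas a successful slot resets it to a bounded quantization term; iterating and demanding bounded $\mathbb{E}[\|\boldsymbol{\Delta}(t)\|^2]$ as in Definition~\ref{admisscontrolpol} forces $(1-P_{\text{succ}})\mu_{max}(\mathbf{F}^T\mathbf{F})<1$. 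Since each bound is individually necessary, the weaker min-bound in particular must hold under any stabilizing policy.

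The delicate step is the information-theoretic converse, which in \cite{rate1,rate2} is stated for noiseless rate-limited channels, whereas here the channel is an erasure channel with correlated side state $\alpha(t)$ and a fixed primitive quantizer. My plan is to condition on the common information $I_{SC}(t)$ so that the quantization noise of Lemma~\ref{lemmaqunt} plays the role of bounded reconstruction side information, apply a Fano-type inequality to the erasure model (\ref{equivalentchn}) to upper-bound the per-slot mutual information about $\mathbf{x}(t)$ by $R\,\Pr[\gamma(t){=}1]$, and then pass to the Cesaro limit to obtain $R\cdot P_{\text{succ}}>\mathcal{I}(\mathbf{F})$. Combining this with the upper bound from the second paragraph and rearranging produces (\ref{necss3}).
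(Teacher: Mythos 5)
Your high-level structure — bound the long-run success probability $P_{\text{succ}}$ above by $\frac{p_{max}\tau}{\kappa(R)B_W+p_{max}\tau}$ via the stationary exponential distribution of $\alpha$ and the peak constraint $p\le p_{max}$, and below by a stability-forced floor — matches Appendix I, and the upper-bound calculation is the same. The divergence, and the genuine gap, is the rate converse. The paper does not use Fano: it takes the second moment of the quantizer's dynamic range $\mathbf{L}(t)$ in (\ref{93equationsss}) directly, observes that $\lim_{t\to\infty}\mathbb{E}[\|\mathbf{L}(t)\|^2]<\infty$ forces a negative top Lyapunov exponent for the random matrix product $\prod_i \boldsymbol{\Gamma}\widetilde{\mathbf{F}}_{\mathbf{R}}(i)$, and invokes Prop.~4.4 of \cite{ratenoise} together with Theorem~1 of \cite{stabprodthem} to convert this into $\mathbb{E}^{\widetilde{\Omega}_p^\ast}[\widetilde{p}(\boldsymbol{\chi})]R_i>\max\{0,\log|\mu_i(\mathbf{F})|\}$ for every $i$, hence $\mathbb{E}^{\widetilde{\Omega}_p^\ast}[\widetilde{p}(\boldsymbol{\chi})]R>\mathcal{I}(\mathbf{F})$. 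Your Fano/mutual-information plan, while in the spirit of \cite{rate1} and \cite{rate2}, is only sketched; it would have to cope with the adaptive primitive quantizer and the feedback loop, which you flag as delicate but leave unresolved — so this step is a real missing ingredient, not merely a stylistic choice. The paper's Lyapunov-exponent route sidesteps all of the mutual-information bookkeeping by working directly with the $\mathbf{L}(t)$ recursion.

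Two smaller points. First, ``each failure inflates $\|\boldsymbol{\Delta}\|^2$ by a factor of at most $\mu_{max}(\mathbf{F}^T\mathbf{F})$'' is the wrong direction for a converse: to force $(1-P_{\text{succ}})\mu_{max}(\mathbf{F}^T\mathbf{F})<1$ from bounded $\mathbb{E}[\|\boldsymbol{\Delta}(t)\|^2]$ you need a growth \emph{lower} bound along a dominant direction, which the paper obtains by passing through the Jordan form $\boldsymbol{\Phi}^T(\boldsymbol{\Upsilon}^t)^T\boldsymbol{\Upsilon}^t\boldsymbol{\Phi}$ in (\ref{1032ddsds}). Second, ``keep the weaker of the two'' is unnecessary and slightly misleading: each of the two quantities inside the $\min$ of (\ref{necss3}) is itself an upper bound on the $\min$, so establishing either one of your two lower bounds on $P_{\text{succ}}$ already yields the stated theorem, and establishing both (as the paper does) actually proves the stronger $\max$ version, of which (\ref{necss3}) is a corollary.
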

\begin{proof}
	please refer to Appendix I.
\end{proof}}

\txtblue{From Theorem \ref{stab} and Theorem \ref{stabasdsa}, the tightness of the sufficient condition compared with the necessary condition depends on the difference between $\frac{1}{R}\mathcal{I}(\mathbf{F})$ and $1-\frac{1}{\mu_{max}(\mathbf{F}^T\mathbf{F})}$. For NCSs with $\frac{1}{R}\mathcal{I}(\mathbf{F})\approx1-\frac{1}{\mu_{max}(\mathbf{F}^T\mathbf{F})}$,  the sufficient condition is tight compared with the necessary condition.}
\txtblue{\begin{Remark}[Extension to Plant Output Feedback]	\label{rqads11dsdds}
	The solution framework can be easily extended to the case of plant output feedback, i.e., the input of the sensor is $\mathbf{s}(t)=\mathbf{C}\mathbf{x}(t)$, where $\mathbf{s}\in \mathbb{R}^{l}$ and $\mathbf{C}=\mathbb{R}^{l\times d}$. Using Prop. 5.1 of \cite{rate1}, the encoder has access to a Luenberger-like observer $\overline{\mathbf{x}}(t+1)=\mathbf{F}\overline{\mathbf{x}}(t)+\mathbf{G}\mathbf{u}(t)+\mathbf{M}\big(\mathbf{s}(t)-\mathbf{C}\big(\mathbf{F}\overline{\mathbf{x}}(t)+\mathbf{G}\mathbf{u}(t)\big)\big)$, where $\mathbf{M}$ is chosen such that $\big(\mathbf{I}-\mathbf{M}\mathbf{C}\big)\mathbf{A}$ is stable. Let $\overline{\mathbf{e}}(t)\triangleq \mathbf{x}(t)-\overline{\mathbf{x}}(t)$ and it can be shown\footnote{Please refer to Appendix J on the related proofs regarding the extension.} that $\|\overline{\mathbf{e}}(t)\|\leq C$ for some constant $C>0$. Therefore, we can obtain an upper bound of the per-stage state estimation error cost in (\ref{asdsddpers}) as follows:
\begin{align}
	\boldsymbol{\Delta}^T(t)\mathbf{S} \boldsymbol{\Delta}(t) \leq 4 C^2 \mu_{max}(\mathbf{S})+ 2 \overline{\boldsymbol{\Delta}}^T(t)\mathbf{S} \overline{\boldsymbol{\Delta}}(t)	\label{2388sdssads11}
\end{align}
where $\overline{\boldsymbol{\Delta}}(t)\triangleq \overline{\mathbf{x}}(t)-\mathbb{E}\big[\overline{\mathbf{x}}(t)\big|I_C(t)\big]$. Since the sensor cannot observe perfect plant state, the state estimation error  $\boldsymbol{\Delta}$ is not available at the sensor. Therefore, instead of optimizing  the average state estimation error cost  $ \limsup_{T\rightarrow \infty}  \frac{1}{T}\sum_{t=0}^{T-1}\mathbb{E} [\boldsymbol{\Delta}(t)^T\mathbf{S}\boldsymbol{\Delta}(t)]$, we optimize the following upper bound  based on (\ref{2388sdssads11}):
\begin{align}	\label{asd1csor}
	\limsup_{T\rightarrow \infty}  \frac{1}{T}\sum_{t=0}^{T-1}\mathbb{E} [\overline{\boldsymbol{\Delta}}(t)^T\mathbf{S}\overline{\boldsymbol{\Delta}}(t)]
\end{align}
Furthermore, we write the dynamics of $\overline{\mathbf{x}}(t)$ as follows:
\bs\begin{align}
	\overline{\mathbf{x}}(t+1)=\mathbf{F}\overline{\mathbf{x}}(t)+\mathbf{G}\mathbf{u}(t)+\mathbf{M}\mathbf{C} \big(\mathbf{F}\overline{\mathbf{e}}(t) + \mathbf{w}(t)\big)	\label{asdsad21sds}
\end{align}\bsc where $\mathbf{M}\mathbf{C} \big(\mathbf{F}\overline{\mathbf{e}}(t) + \mathbf{w}(t)\big)$ can be treated as the disturbance for $\overline{\mathbf{x}}(t)$, and it is bounded since both $\overline{\mathbf{e}}(t)$ and $\mathbf{w}(t)$ are bounded. Therefore, the optimization problem for plant output feedback fits into the proposed framework, where the average state estimation cost in (\ref{asd1csor}) corresponds to (\ref{perstagecost}), and the   dynamics in (\ref{asdsad21sds}) with bounded disturbance correspond to (\ref{plantain}).~\hfill~\IEEEQED
\end{Remark}}

\vspace{-0.2cm}

\section{Simulations}	\label{simsec}

In this section, we compare the  performance  of the proposed   power control scheme  in Theorem \ref{thmpower}  with   four baselines. Baseline 1 refer to a \emph{fixed power control (FPC)}, where  $p(t)=p_0$ for a fixed power $p_0>0$. Baseline 2 refer to a \emph{CSI-only power control (COPC)}, where $p(t)=\big\{\frac{\lambda}{a\alpha(t-1)}, p_{max}\big\}$ where  $a\alpha(t-1)=\mathbb{E}[\alpha(t)|\alpha(t-1)]$ is the  CSI estimation based on the previous-slot CSI and $\lambda$ is a tradeoff parameter. Baseline 3 refer to a \emph{power control for error-free channel (PCEFC)} \cite{onoff2}, where the sensor minimizes an average  weighted state estimation error and an average number of channel uses under error-free channel. Baseline 4 refers to a \emph{power control for I.I.D. channel with special information structure (PCICSIS)} \cite{mdp1}, where the sensor minimizes an average weighted state estimation error and the average power cost under i.i.d. fading  channel.  The power control  depends on $\{\boldsymbol{\Theta}(t-1), \alpha(t-1)\}$, where $\boldsymbol{\Theta}(t)\triangleq {\mathbf{F}}\boldsymbol{\Delta}(t)+\mathbf{w}(t)$. The solutions for Baseline 3 and  4 are obtained using the brute-force VIA \cite{mdpcref2}, \cite{solvebellman}.  We consider an NCS with parameters: $\widetilde{\mathbf{F}}=\bigl( \begin{smallmatrix} 
  -1 & -2\\
  3 & -4 
\end{smallmatrix} \bigr)$, $\widetilde{\mathbf{G}}=\text{diag}(2,1)$, $\widetilde{\mathbf{W}}=\text{diag}(1,1)$, $\widetilde{w}_{max}=1$, $\mathbf{Q}=\text{diag}(1,1)$, $\mathbf{D}=\text{diag}(1,2)$, $a=-5$, $B_W=1$, $R=4$, $p_{max}=160$, and $\tau=0.05$.

\begin{figure}
  \centering
  \includegraphics[width=2.5in]{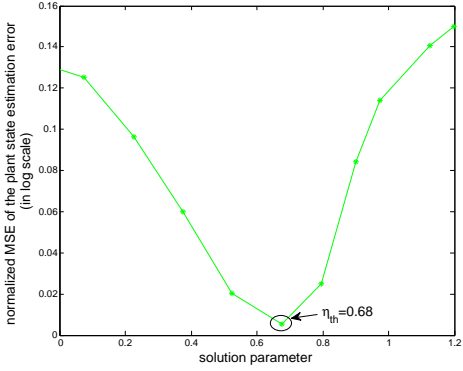}\vspace{-0.2cm}
  \caption{Normalized MSE of the plant state estimation  versus $\eta_{th}$ at  power cost 14dB.}
  \label{compareees2} \vspace{-0.2cm}
\end{figure}

\begin{figure}
\centering
  \includegraphics[width=2.5in]{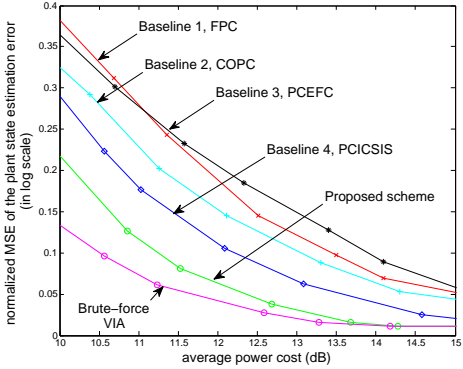}\vspace{-0.2cm}
  \caption{Normalized MSE  of the plant state estimation average power cost.}
  \label{comparsdeees2}\vspace{-0.5cm}
\end{figure}

Fig.~\ref{compareees2} illustrates the normalized MSE of the plant state estimation versus   $\eta_{th}$ at  average power cost 14dB under  our proposed scheme.  The normalized MSE achieves the minimum when $\eta_{th}$ is around 0.68. Therefore, we choose $\eta_{th}= 0.68$  when the average power cost is 14dB. The optimal choices of $\eta_{th}$ at other average power costs can be obtained using similar methods. Fig. \ref{comparsdeees2} illustrates the normalized MSE of the plant state estimation versus the average power cost. Our proposed scheme achieves  significant performance gain  compared with all the baselines and the performance of the proposed scheme is very close to that of the brute-force  VIA \cite{mdpcref2}.  Table \ref{tables} illustrates the comparison of the  computational time of the baselines, the proposed scheme, and the brute-force VIA \cite{mdpcref2}.  Our proposed solution has very small computational cost   due to the closed-form approximate value function in (\ref{approxvaluefunc}) and also outperforms the baselines.

\begin{table}[t]
	\centering
\begin{tabular}{|c | c c c c |}
	\hline
		  {Dimension of $\mathbf{x}$}& \multicolumn{1}{c|}{2}  & \multicolumn{1}{c|}{4} & \multicolumn{1}{c|}{6}  & \multicolumn{1}{c|}{8}    \\
	\hline
		 {BL 1 and  2} &  \multicolumn{4}{c|}{0.0034ms}   			\\ \hline
		 {BL 3} &   \multicolumn{1}{c|}{1.9s} &   \multicolumn{1}{c|}{3048.7s} &    \multicolumn{2}{c|}{$> 10^5$s}   	 \\ \hline
	            {BL 4} &   \multicolumn{1}{c|}{76.2s} &  \multicolumn{3}{c|} {$> 10^5$s} 	\\	\hline
	            {Proposed scheme} &     \multicolumn{1}{c|}{$0.0924$s} &   \multicolumn{1}{c|}{$0.1696$s}  &    \multicolumn{1}{c|}{$0.3316$s} &     \multicolumn{1}{c|}{$0.6284$s}	\\\hline
	             {VIA} &     \multicolumn{1}{c|}{108.6s} &   \multicolumn{3}{c|} {$> 10^5$s} 	\\
	\hline
\end{tabular}
	\caption{ {Comparison of the MATLAB computational time of the baselines, the proposed scheme, and the brute-force  VIA.}}	
		\label{tables}\vspace{-1cm}
\end{table}

\vspace{-0.2cm}

\section{Conclusion}
In this paper, we propose a low complexity   power control scheme for the NCS by solving a  weighted average state estimation error and communication power minimization  problem.  Using a continuous-time perturbation approach, we derive a closed-form approximate value  function and a low complexity  power control scheme.  The proposed  solution   is shown to have an event-driven structure with a dynamically changing  threshold. We  establish the  conditions for asymptotic optimality, and also give sufficient conditions for ensuring the NCS stability under our proposed scheme.  Numerical results show that the proposed scheme has low complexity and much better performance compared with  the baselines.

\vspace{-0.5cm}
\txtblue{\section*{Appendix A: Proof of Lemma \ref{lemmaqunt}} }

\hspace{-0.3cm} \emph{\txtblue{A. Analysis of the Primitive Quantizer}} 
\vspace{0.1cm}

\txtblue{Following \cite{rate1} and \cite{ratenoise}, we  define the following constant matrices. Let $\boldsymbol{\Phi}$ be a nonsingular matrix and $\boldsymbol{\Upsilon}\in \mathbb{R}^{d\times d}$ such that $\boldsymbol{\Phi}\mathbf{F}\boldsymbol{\Phi}^{-1}=\boldsymbol{\Upsilon}=\text{diag}(\mathbf{J}_1,\dots,\mathbf{J}_m)$, where each $\mathbf{J}_j$ is a Jordan block of dimension $d_j$ (and $d_j$ is the geometric multiplicity associated with the eigenvalue $\lambda_j$). Define $\mathbf{H}=\text{diag}(\mathbf{H}_1,\dots, \mathbf{H}_m)$ and $\boldsymbol{\Gamma}=\text{diag}(\boldsymbol{\Gamma}_1,\dots, \boldsymbol{\Gamma}_m)$, where $ \mathbf{H}_j=\mathbf{I}$ if $\mathbf{J}_j$ and $\boldsymbol{\Gamma}_j=\Bigg( \begin{smallmatrix} 
  |\lambda_j| & 1,                    &    & \\
                          & |\lambda_j| & 1 &  \\
                           &                        & \ddots & 1 \\
                           & & & |\lambda_j|         
\end{smallmatrix} \Bigg)$ if they are associated with real eigenvalue $\lambda_j$, and $\mathbf{H}_j=\text{diag}(r(\theta)^{-1},\dots, r(\theta)^{-1})$ with $r(\theta)=\Big( \begin{array}{cc}
		\cos(\theta),   \ \sin(\theta)    \\
		-\sin(\theta),   \cos(\theta)
	\end{array} \Big)$ and $\boldsymbol{\Gamma}_j=\Bigg( \begin{smallmatrix} 
  |\lambda_j| & 1,                    &    & \\
                          & |\lambda_j| & 1 &  \\
                           &                        & \ddots & 1 \\
                           & & & |\lambda_j|         
\end{smallmatrix} \Bigg)$ if $\mathbf{J}_j$ if they are associated with complex  eigenvalues $\rho(\cos(\theta)\pm\sin(\theta))$. Define $\mathbf{F}_{\mathbf{R}}=\text{diag}(2^{-R_1},\dots, 2^{-R_d})$. We summarize  the working flow of the primitive quantizer as follows:}
\txtblue{\begin{Algorithm}	[Dynamic Update of  the Primitive Quantizer]	 \label{algprimquant}
At each time slot $t$, the primitive quantizer takes the plant state $\mathbf{x}(t)$ as input and does the following:
	\begin{itemize}
		\item  \textbf{Step 1, Initialization: } If $t=0$, initialize the quantizer as follows: $\widetilde{\mathbf{x}}(0)=\mathbf{0}$, $\boldsymbol{\Psi}(0)=\boldsymbol{\Phi}$, $\mathbf{L}(0)=\|\boldsymbol{\Phi}\|L$, and  go to Step 4. Otherwise, go to Step 2. 
		\item  \textbf{Step 2, Update of the Shifting Vector and Data Pre-Processing:}
		\bs\begin{align}
			\hspace{-0.5cm}\widetilde{\mathbf{x}}(t)=\left\{
				\begin{aligned}	 \label{errordynfake}
				&  \mathbf{F}\widetilde{\mathbf{x}}(t-1)+\mathbf{G}\mathbf{u}(t-1),  \   \text{if }	\gamma(t-1)=0	   	\\
				&  \mathbf{F}\left(\boldsymbol{\Psi}^{-1}(t-1)\boldsymbol{\xi}(t-1)+\widetilde{\mathbf{x}}(t-1)\right)  \\
				&    \hspace{1cm}+\mathbf{G}\mathbf{u}(t-1),  \ \text{if }	 \gamma(t-1)=1
	   			\end{aligned}
  						  \right.
		\end{align}\bsc  and $\widetilde{\mathbf{x}}(0)=\mathbf{x}_0$.  Calculate innovation  $\mathbf{i}(t)=\mathbf{x}(t)-\widetilde{\mathbf{x}}(t)$.		
		\item  \textbf{Step 3, Update of the Coordinate Transformation Matrix and Dynamic Range:}
		\bs\begin{align}	\label{27equa}
			\boldsymbol{\Psi}(t)=\mathbf{H}\boldsymbol{\Psi}(t-1)
		\end{align}
		\begin{align} \label{28equa}
		\mathbf{L}(t)=\left\{
				\begin{aligned}	 
				&  \boldsymbol{\Gamma}\mathbf{L}(t-1)+w_{max}\|\boldsymbol{\Psi}(t)\|\mathbf{1},   \quad \text{if }	\gamma(t-1)=0	   	\\
				&  \boldsymbol{\Gamma}\mathbf{F}_{\mathbf{R}}\mathbf{L}(t-1)+w_{max}\|\boldsymbol{\Psi}(t)\|\mathbf{1},     \text{if }	 \gamma(t-1)=1
	   			\end{aligned}
  						  \right.	
		\end{align}\bsc where \bs$w_{max}=\widetilde{w}_{max}\sqrt{ \int_0^\tau\int_0^\tau\mu_{max}\big(\exp( \widetilde{\mathbf{F}}^Ts + \widetilde{\mathbf{F}}^Ts' ) \big)d s ds'}  \\ =\mathcal{O}(\tau)$\bsc and $\|\mathbf{w}(t)\|\leq w_{max}$ according to (\ref{plantain}).
		\item  \textbf{Step 4, Determination of the Output Symbol:} The quantizer partitions the following region into $2^{2R}$ boxes:
		\begin{align}
			&\mathcal{D}_{\widetilde{\mathbf{x}}}(t)=\big\{\mathbf{x}\in \mathbb{R}^d: \boldsymbol{\Psi}(t)\big(\mathbf{x}-\widetilde{\mathbf{x}}(t)\big) \notag \\
			&\in \left\{[-L_1(t),L_1(t)]\times \cdots \times  [-L_d(t),L_d(t)]\right\}\big\}\label{dregion}
		\end{align}
		with side lengths $\big\{\frac{2L_n(t)}{2^{R_n}}: \forall n\big\}$. Determine which box   $\mathbf{i}(t)$ falls into within the region $\mathcal{D}_{\widetilde{\mathbf{x}}}(t)$, and the quantizer outputs $\boldsymbol{\xi}(t)$, which is the centroid of that box. If  $\mathbf{i}(t)$ falls outside the region $\mathcal{D}_{\widetilde{\mathbf{x}}}(t)$, the quantizer outputs a special symbol representing an overflow\footnote{\txtblue{We will show in Lemma \ref{importapp} that $\mathbf{x}(t)$ never leaves $\mathcal{D}_{\widetilde{\mathbf{x}}}(t)$.}}.	~\hfill~\IEEEQED
	\end{itemize}
\end{Algorithm}}

\hspace{-0.3cm} \emph{\txtblue{B. Properties  of the Primitive Quantizer in Algorithm \ref{algprimquant}}}
\vspace{0.1cm}

\txtblue{Based on  Algorithm \ref{algprimquant}, we can write the relationship between $\mathbf{x}(t)$ and  $\boldsymbol{\xi}(t)$ as follows: 
\begin{align}
	\boldsymbol{\xi}(t)= \boldsymbol{\Psi}(t)\big(\mathbf{x}(t)-\widetilde{\mathbf{x}}(t)\big)+\mathbf{e}(t)	\label{modelinapp}
\end{align}
where $\mathbf{e}(t)$ is the quantization noise. From (\ref{plantain}), we know that $\mathbf{x}(t)$ is symmetrically distributed about the origin. Together with Algorithm \ref{algprimquant}, the primitive quantizer behaves like a multi-dimensional uniform quantizer \cite{quantmul}. Therefore, the quantization noise $\mathbf{e}(t)$ has  zero mean. We obtain the dynamics of  $\hat{\mathbf{x}}(t)$ and $\widetilde{\mathbf{x}}(t)$ as follows  to show that $\mathbf{x}(t)$  never leaves  $\mathcal{D}_{\widetilde{\mathbf{x}}}(t)$.}

\txtblue{\emph{1) Dynamics of $\hat{\mathbf{x}}(t)=\mathbb{E}\big[\mathbf{x}(t)\big|I_C(t)\big]$:} If $\gamma(t)=0$,
\begin{align}
	\hat{\mathbf{x}}(t)&\overset{(a)}{=} \mathbb{E}\big[\mathbf{x}(t)\big| \mathbf{u}(t-1), \mathbf{u}_0^{t-2}, \alpha_0^{t}, \gamma_0^{\nu(t-1)}, \mathbf{y}_0^{\nu(t-1)}\big]	\notag \\
	&\overset{(b)}= \mathbb{E}\big[\mathbf{F}\mathbf{x}(t-1)+\mathbf{w}(t-1)\big| I_C(t-1)\big]+\mathbf{G}\mathbf{u}(t-1)	\notag \\
	&\overset{(c)}= \mathbf{F} \hat{\mathbf{x}}(t-1)+\mathbf{G}\mathbf{u}(t-1)	\label{deri1}
\end{align}
where $(a)$ is  according to  the definition of $I_C(t)$ in (\ref{controlis}), (b) is because $\mathbf{u}(t-1)$ is a function of $I_C(t-1)$ and $\{\mathbf{x}(t-1),\mathbf{w}(t-1)\}$ is independent of $\alpha(t)$, and $(c)$ is because $\mathbb{E}\big[\mathbf{w}(t-1)\big|I_C(n-1)\big]=0$. If $\gamma(t)=1$,
\begin{align}
	\hat{\mathbf{x}}(t) &  \overset{(d)}{=} \mathbb{E}\big[\boldsymbol{\Psi}^{-1}(t)\big(\boldsymbol{\xi}(t)-\mathbf{e}(t)\big)+\widetilde{\mathbf{x}}(t)\big| \notag \\
			& \hspace{1cm }\boldsymbol{\xi}(t), \widetilde{\mathbf{x}}(t), \mathbf{u}_0^{t-1}, \alpha_0^{t}, \gamma_0^{\nu(t-1)}, \mathbf{y}_0^{\nu(t-1)}\big]\notag \\
			& \overset{(e)}{=} \boldsymbol{\Psi}^{-1}(t)\boldsymbol{\xi}(t)+\widetilde{\mathbf{x}}(t)    \label{deri2}	
\end{align}
where $(d)$ is due to the successful transmission of $\boldsymbol{\xi}(t)$ and $(e)$ is due to the zero mean quantization noise $\mathbf{e}(t)$ and  the fact that $\widetilde{\mathbf{x}}(t)$ can be locally obtained  at the controller. }

 \txtblue{Therefore, 
\begin{align}	
\hat{\mathbf{x}}(t)=
 \left\{
	\begin{aligned}	 \label{errordyn}
		& \mathbf{F}\hat{\mathbf{x}}(t-1)+\mathbf{G}\mathbf{u}(t-1),  \quad &\text{if }	 \gamma(t)=0   	\\
		&  \boldsymbol{\Psi}^{-1}(t)\boldsymbol{\xi}(t)+\widetilde{\mathbf{x}}(t)  , \quad &\text{if }	\gamma(t)=1	
	   \end{aligned}
   \right.
 	 \end{align}
where $\hat{\mathbf{x}}(0)=\mathbf{x}_0$ if $\gamma(0)=0$.}

\txtblue{\emph{2) Dynamics of $\widetilde{\mathbf{x}}(t)=\mathbb{E}\big[\mathbf{x}(t)\big|I_{SC}(t)\big]$:} If $\gamma(t-1)=0$, 
\begin{align}
	\widetilde{\mathbf{x}}(t)&= \mathbb{E}\big[\mathbf{F} \mathbf{x}(t-1)+\mathbf{w}(t-1)\big|I_{SC}(t-1)\big]+\mathbf{G}\mathbf{u}(t-1)	\notag \\
	&= \mathbf{F} \widetilde{\mathbf{x}}(t-1)+\mathbf{G}\mathbf{u}(t-1)   \label{37123sds}
\end{align}
where (\ref{37123sds}) follows  (b) and (c) in (\ref{deri1}). If $\gamma(t-1)=1$,
\bs\begin{align}
	\widetilde{\mathbf{x}}(t)&= \mathbb{E}\big[\mathbf{F} \mathbf{x}(t-1)+\mathbf{w}(t-1)\big|\boldsymbol{\xi}(t-1), \widetilde{\mathbf{x}}(t-1),\notag \\
	&  \mathbf{u}_0^{t-1}, \alpha_0^{t-1},   \gamma_0^{\nu(t-2)}, \mathbf{y}_0^{\nu(t-2)}\big]+\mathbf{G}\mathbf{u}(t-1)	\notag \\
	& =\mathbf{F} \big(\boldsymbol{\Psi}^{-1}(t-1)\boldsymbol{\xi}(t-1)+\widetilde{\mathbf{x}}(t)\big)+\mathbf{G}\mathbf{u}(t-1)	\label{37asda121}
\end{align}\bsc where (\ref{37asda121}) follows the calculations in (\ref{deri2}).} 

\txtblue{\emph{3) Dynamic Range of $\mathbf{x}(t)$:}  We prove the following lemma:
\begin{Lemma}	\label{importapp}
	At each time slot $t$, we have $\mathbf{x}(t)\in \mathcal{D}_{\widetilde{\mathbf{x}}}(t)$, where $\mathcal{D}_{\widetilde{\mathbf{x}}}(t)$ is given in (\ref{dregion}).
\end{Lemma}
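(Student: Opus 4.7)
The plan is to prove the claim by induction on $t$, exploiting the recursive structure of Algorithm \ref{algprimquant}. The base case $t=0$ follows directly from the initialization $\widetilde{\mathbf{x}}(0)=\mathbf{0}$, $\boldsymbol{\Psi}(0)=\boldsymbol{\Phi}$, $\mathbf{L}(0)=\|\boldsymbol{\Phi}\|L\mathbf{1}$ combined with the bounded initial condition $\|\mathbf{x}_0\|\leq L$ from Assumption \ref{assumpstabl}: each component of $\boldsymbol{\Phi}\mathbf{x}_0$ is dominated by $\|\boldsymbol{\Phi}\mathbf{x}_0\|\leq \|\boldsymbol{\Phi}\|L=L_n(0)$.

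For the inductive step, I would assume $|[\boldsymbol{\Psi}(t)(\mathbf{x}(t)-\widetilde{\mathbf{x}}(t))]_n|\leq L_n(t)$ for every $n$ and treat the two cases $\gamma(t)=0$ and $\gamma(t)=1$ separately. Substituting the plant recursion (\ref{plantain}) into the appropriate branch of (\ref{errordynfake}), then applying the coordinate update (\ref{27equa}) and the quantizer identity (\ref{modelinapp}), both cases can be put into the unified form
\begin{align*}
\boldsymbol{\Psi}(t+1)\bigl(\mathbf{x}(t+1)-\widetilde{\mathbf{x}}(t+1)\bigr) = \mathbf{H}\boldsymbol{\Psi}(t)\mathbf{F}\boldsymbol{\Psi}^{-1}(t)\,\mathbf{v}(t) + \mathbf{H}\boldsymbol{\Psi}(t)\mathbf{w}(t),
\end{align*}
where $\mathbf{v}(t)=\boldsymbol{\Psi}(t)(\mathbf{x}(t)-\widetilde{\mathbf{x}}(t))$ if $\gamma(t)=0$ and $\mathbf{v}(t)=-\mathbf{e}(t)$ if $\gamma(t)=1$. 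The disturbance term is controlled by $\|\mathbf{H}\boldsymbol{\Psi}(t)\mathbf{w}(t)\|\leq \|\boldsymbol{\Psi}(t+1)\|w_{max}$, contributing $w_{max}\|\boldsymbol{\Psi}(t+1)\|\mathbf{1}$ componentwise, exactly matching the additive term in (\ref{28equa}).

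The crux of the argument is the entrywise domination $|\mathbf{H}\boldsymbol{\Psi}(t)\mathbf{F}\boldsymbol{\Psi}^{-1}(t)|\leq \boldsymbol{\Gamma}$, which must hold uniformly in $t$. I would establish it by iterating (\ref{27equa}) to obtain $\boldsymbol{\Psi}(t)=\mathbf{H}^t\boldsymbol{\Phi}$, so that $\mathbf{H}\boldsymbol{\Psi}(t)\mathbf{F}\boldsymbol{\Psi}^{-1}(t)=\mathbf{H}^{t+1}\boldsymbol{\Upsilon}\mathbf{H}^{-t}$ decomposes blockwise along the Jordan structure. For a real eigenvalue $\lambda_j$, one has $\mathbf{H}_j=\mathbf{I}$ and the block reduces to $\mathbf{J}_j$, whose entrywise moduli coincide with those of $\boldsymbol{\Gamma}_j$. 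For a complex-eigenvalue pair of modulus $\rho=|\lambda_j|$, each $2\times 2$ sub-block of the conjugated Jordan block takes the form $\rho\,r(\theta)^{-1}$ on the diagonal and $r(\theta)^{-1}$ on the superdiagonal; since $r(\theta)$ is orthogonal, these have spectral norm $\rho$ and $1$, matching the corresponding entries of $\boldsymbol{\Gamma}_j$. Combining this bound with the inductive hypothesis in Case $\gamma(t)=0$ gives $|\mathbf{H}\boldsymbol{\Psi}(t)\mathbf{F}\boldsymbol{\Psi}^{-1}(t)\mathbf{v}(t)|\leq \boldsymbol{\Gamma}\mathbf{L}(t)$, while in Case $\gamma(t)=1$ the quantization interval from Lemma \ref{lemmaqunt} bounds $|\mathbf{v}(t)|$ entrywise by (a constant multiple of) $\mathbf{F}_{\mathbf{R}}\mathbf{L}(t)$, yielding $\boldsymbol{\Gamma}\mathbf{F}_{\mathbf{R}}\mathbf{L}(t)$. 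Adding the disturbance bound reproduces the two branches of (\ref{28equa}), closing the induction.

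The step I expect to be the main obstacle is verifying the entrywise conjugation bound on $\mathbf{H}^{t+1}\boldsymbol{\Upsilon}\mathbf{H}^{-t}$ for Jordan blocks tied to complex eigenvalues, because the rotation matrices $r(\theta)$ do not commute with the Jordan superdiagonal; one must carry out a careful block-by-block computation to confirm that repeated conjugation never inflates an individual sub-block beyond the magnitude predicted by $\boldsymbol{\Gamma}_j$, leveraging the orthogonality of $r(\theta)$ so the $t$-dependence in $\mathbf{H}^{t+1}\boldsymbol{\Upsilon}\mathbf{H}^{-t}$ cancels rather than accumulates.
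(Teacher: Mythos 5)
Your proof is correct and takes a genuinely more direct route than the paper. The paper does not induct on $\mathcal{D}_{\widetilde{\mathbf{x}}}(t)$ alone; it introduces an auxiliary region $\mathcal{E}_{\hat{\mathbf{x}}}(t)$ centered at the controller's estimate $\hat{\mathbf{x}}(t)$ (with its own dynamic-range vector $\boldsymbol{\Lambda}(t)$) and proves a two-step chain: (a) $\mathbf{x}(t)\in\mathcal{D}_{\widetilde{\mathbf{x}}}(t)\Rightarrow\mathbf{x}(t)\in\mathcal{E}_{\hat{\mathbf{x}}}(t)$, and (b) $\mathbf{x}(t)\in\mathcal{E}_{\hat{\mathbf{x}}}(t)\Rightarrow\mathbf{x}(t+1)\in\mathcal{D}_{\widetilde{\mathbf{x}}}(t+1)$, with a three-way split on $(\gamma(t),\gamma(t-1))$ in step (a). You instead fuse both steps into a single inductive passage centered only at $\widetilde{\mathbf{x}}(t)$: in the $\gamma(t)=1$ branch, substituting (\ref{modelinapp}) into (\ref{errordynfake}) gives $\mathbf{x}(t)-\big(\boldsymbol{\Psi}^{-1}(t)\boldsymbol{\xi}(t)+\widetilde{\mathbf{x}}(t)\big)=-\boldsymbol{\Psi}^{-1}(t)\mathbf{e}(t)$ directly, so $\hat{\mathbf{x}}(t)$ and $\boldsymbol{\Lambda}(t)$ never need to appear. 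What each approach buys: yours is shorter and self-contained as a proof of the lemma, but you must establish the entrywise domination on $\mathbf{H}\boldsymbol{\Psi}(t)\mathbf{F}\boldsymbol{\Psi}^{-1}(t)$ yourself, whereas the paper simply cites Lemma 4.1 and 4.2 of \cite{rate1}; the paper's detour through $\mathcal{E}_{\hat{\mathbf{x}}}(t)$ also pays off elsewhere, since the inclusion $\mathcal{D}_{\widetilde{\mathbf{x}}}(t)\subset\mathcal{E}_{\hat{\mathbf{x}}}(t)$ is reused in the stability proof of Theorem \ref{stab} (Appendix H). Finally, the obstacle you flag is milder than you fear: $\mathbf{H}$ and $\boldsymbol{\Upsilon}$ commute block-by-block (for complex blocks, $r(\theta)^{-1}$ commutes with both $\rho\,r(\theta)$ and $\mathbf{I}_2$), so $\mathbf{H}^{t+1}\boldsymbol{\Upsilon}\mathbf{H}^{-t}=\mathbf{H}\boldsymbol{\Upsilon}$ for every $t$ and there is nothing to accumulate; also, the diagonal $2\times2$ sub-block is $\rho\,\mathbf{I}_2$ rather than $\rho\,r(\theta)^{-1}$, though your spectral-norm argument goes through either way.
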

\emph{\qquad Proof:}
We define the following range:
\begin{align}
	\mathcal{E}_{\hat{\mathbf{x}}}(t)= &\big\{\mathbf{x}\in \mathbb{R}^d: \boldsymbol{\Psi}(t)\big(\mathbf{x}-\hat{\mathbf{x}}(t)\big) \label{exasdssdef} \\
	&\in \left\{[-\Lambda_1(t),\Lambda_1(t)]\times \cdots \times  [-\Lambda_d(t),\Lambda_d(t)]\right\}\big\}	\notag
\end{align}
where  	$\boldsymbol{\Lambda}(t)=\big(\Lambda_1(t), \dots, \Lambda_d(t)\big)^T \in \mathbb{R}^d$ with the following dynamics:
\bs\begin{align}\notag 
			\boldsymbol{\Lambda}(t) =\left\{
				\begin{aligned}	
				&  \mathbf{F}_{\mathbf{R}}\mathbf{L}(t),  \hspace{2.9cm}  \text{if }	\gamma(t)=1	   	\\
				&  \mathbf{L}(t),   \hspace{3.4cm}    \text{if }	 \gamma(t)=0, \gamma(t-1)=1\\
				& \boldsymbol{\Gamma}\boldsymbol{\Lambda}(t-1)+w_{max}\|\boldsymbol{\Psi}(t)\|\mathbf{1},    \text{if }	 \gamma(t)=0, \gamma(t-1)=0
	   			\end{aligned}
  						  \right.
\end{align}\bsc and we let $\gamma(-1)=1$.We will show in sequence:  if $\mathbf{x}(t)\in \mathcal{D}_{\widetilde{\mathbf{x}}}(t)$, then $\mathbf{x}(t)\in \mathcal{E}_{\hat{\mathbf{x}}}(t)$, and if $\mathbf{x}(t)\in \mathcal{E}_{\hat{\mathbf{x}}}(t)$, then $\mathbf{x}(t+1)\in \mathcal{D}_{\widetilde{\mathbf{x}}}(t+1)$. Using induction method,  the initial condition of the primitive quantizer in Algorithm \ref{algprimquant} ensures that $\mathbf{x}(0)\in \mathcal{D}_{\widetilde{\mathbf{x}}}(0)$ (according to Prop. 5.1 of \cite{rate1}), and then, \newline
\emph{a. $\mathbf{x}(t)\in \mathcal{D}_{\widetilde{\mathbf{x}}}(t) \Rightarrow \mathbf{x}(t)\in \mathcal{E}_{\hat{\mathbf{x}}}(t)$:} 
\begin{itemize}
	\item Case 1 ($\gamma(t)=1$): Using  (\ref{modelinapp}) and (\ref{errordyn}), we have  $\boldsymbol{\Psi}(t)\ (\mathbf{x}(t)-\hat{\mathbf{x}}(t)\ )=-\mathbf{e}(t)$.  If $\mathbf{x}(t)\in \mathcal{D}_{\widetilde{\mathbf{x}}}(t) $, we have $|e_n(t)|\leq \frac{L_n(t)}{2^{-R_n}}$ for all $n$. Therefore, we have $\mathbf{x}(t)\in \mathcal{E}_{\hat{\mathbf{x}}}(t)$ with $\boldsymbol{\Lambda}(t)=\mathbf{F}_{\mathbf{R}}\mathbf{L}(t)$. 
	\item Case 2 ($\gamma(t)=0$, $\gamma(t-1)=1$): Using (\ref{errordynfake}) and (\ref{errordyn}), we have $\boldsymbol{\Psi}(t)\ (\mathbf{x}(t)-\hat{\mathbf{x}}(t)\ )=\boldsymbol{\Psi}(t)\ (\mathbf{x}(t)-\widetilde{\mathbf{x}}(t)\ )$. Therefore, if $\mathbf{x}(t)\in \mathcal{D}_{\widetilde{\mathbf{x}}}(t) $, then $\mathbf{x}(t)\in \mathcal{E}_{\hat{\mathbf{x}}}(t)$ with $\boldsymbol{\Lambda}(t)=\mathbf{L}(t)$.
	\item Case 3 ($\gamma(t)=0$, $\gamma(t-1)=0$): Using (\ref{plantain}) and (\ref{errordyn}), we have $\boldsymbol{\Psi}(t)\ (\mathbf{x}(t)-\hat{\mathbf{x}}(t)\ )=\boldsymbol{\Psi}(t)\ (\mathbf{F}\ (\mathbf{x}(t-1)-\hat{\mathbf{x}}(t-1)\ )+\mathbf{w}(t-1)\ )=\mathbf{H}\mathbf{\Upsilon}\ (\boldsymbol{\Psi}(t-1)\ (\mathbf{x}(t-1)-\hat{\mathbf{x}}(t-1)\ )\ )+\boldsymbol{\Psi}(t)\mathbf{w}(t-1)$, where the second equality is due to Lemma 4.1 of \cite{rate1}. If $\mathbf{x}(t)\in \mathcal{D}_{\widetilde{\mathbf{x}}}(t) $, using Lemma 4.2 of \cite{rate1}, then $\mathbf{x}(t)\in \mathcal{E}_{\hat{\mathbf{x}}}(t)$ with $\boldsymbol{\Lambda}(t)=\boldsymbol{\Gamma}\boldsymbol{\Lambda}(t-1)+w_{max}\|\boldsymbol{\Psi}(t)\|\mathbf{1}$.
\end{itemize}
\emph{b. $\mathbf{x}(t)\in \mathcal{E}_{\hat{\mathbf{x}}}(t) \Rightarrow \mathbf{x}(t+1)\in \mathcal{D}_{\widetilde{\mathbf{x}}}(t+1)$:} 
\begin{itemize}
	\item Case 1 ($\gamma(t)=1$): Using (\ref{plantain}), (\ref{errordynfake}),  (\ref{errordyn}) and Lemma 4.1 of \cite{rate1}, we have $\boldsymbol{\Psi}(t+1) (\mathbf{x}(t+1)-\widetilde{\mathbf{x}}(t+1) )=\mathbf{H}\mathbf{\Upsilon} (\boldsymbol{\Psi}(t) (\mathbf{x}(t)-\hat{\mathbf{x}}(t) ) )+\boldsymbol{\Psi}(t+1)\mathbf{w}(t)$. If $\mathbf{x}(t)\in \mathcal{E}_{\hat{\mathbf{x}}}(t) $, using Lemma 4.2 of \cite{rate1}, then $\mathbf{x}(t+1)\in \mathcal{D}_{\widetilde{\mathbf{x}}}(t+1)$ with $\mathbf{L}(t+1)=\boldsymbol{\Gamma}\boldsymbol{\Lambda}(t)+w_{max}\|\boldsymbol{\Psi}(t+1)\|\mathbf{1}=\boldsymbol{\Gamma}\mathbf{F}_{\mathbf{R}}\mathbf{L}(t)+w_{max}\|\boldsymbol{\Psi}(t+1)\|\mathbf{1}$. This is the evolution of $\mathbf{L}(t)$ for successful transmission  in (\ref{28equa}).
	\item Case 2 ($\gamma(t)=0$): Using (\ref{plantain}), (\ref{errordynfake}) and Lemma 4.1 of \cite{rate1}, we have $\boldsymbol{\Psi}(t+1) (\mathbf{x}(t+1)-\widetilde{\mathbf{x}}(t+1) )=\mathbf{H}\mathbf{\Upsilon} (\boldsymbol{\Psi}(t) (\mathbf{x}(t)-\widetilde{\mathbf{x}}(t) ) )+\boldsymbol{\Psi}(t+1)\mathbf{w}(t)$. If $\mathbf{x}(t)\in \mathcal{E}_{\hat{\mathbf{x}}}(t) $, using Lemma 4.2 of \cite{rate1}, and then $\mathbf{x}(t+1)\in \mathcal{D}_{\widetilde{\mathbf{x}}}(t+1)$ with $\mathbf{L}(t+1)=\boldsymbol{\Gamma}\boldsymbol{\Lambda}(t)+w_{max}\|\boldsymbol{\Psi}(t+1)\|\mathbf{1}=\boldsymbol{\Gamma}\mathbf{L}(t)+w_{max}\|\boldsymbol{\Psi}(t+1)\|\mathbf{1}$. This is the evolution of $\mathbf{L}(t)$ for unsuccessful transmission  in (\ref{28equa}).~\hfill~\IEEEQED
\end{itemize}}

\txtblue{Lemma \ref{importapp}  proves the first property in Lemma \ref{lemmaqunt}. Together with the results in  Section 5.1.4 of \cite{quantmul} on the  uniform quantizer , the quantization noise $\mathbf{e}(t)=\big(e_1(t),\dots, e_d(t)\big)^T$  has a uniform distribution, where  $e_n(t)$ is uniformly distributed within the region $\big[-\frac{L_n(t)}{2^{R_n-1}},\frac{L_n(t)}{2^{R_n-1}}\big]$. This proves the second   property in Lemma \ref{lemmaqunt} (where we use $\mathbf{e}(\mathbf{L}(t), t)$ to denote the quantization noise to show its dependence on $\mathbf{L}(t)$). }

%qtnoisedist
\vspace{-0.2cm}
\section*{Appendix B: Proof of Lemma \ref{lemmadual}}

\hspace{-0.3cm} \emph{A. Relationship between the Original NCS and an Autonomous NCS} 
\vspace{0.1cm}

We consider two NCSs. The first NCS is given as follows:
\bs\begin{align}
	&\mathbf{x}(t+1)=\mathbf{F}\mathbf{x}(t)+\mathbf{G}\mathbf{u}(t)+\mathbf{w}(t)\notag \\
	&\boldsymbol{\xi}(t)=Q_{\boldsymbol{\Psi}(t), \mathbf{L}(t)} \big(\mathbf{x}(t)-\widetilde{\mathbf{x}}(t)\big),\quad \mathbf{y}(t) = \gamma(t)\boldsymbol{\xi}(t)	\label{ncs11}
\end{align}\bsc where  $Q_{\boldsymbol{\Psi}(t), \mathbf{L}(t)} (\cdot )$ corresponds to the R.H.S. of the mapper model  in (\ref{quantprop2}). The second NCS  is given as follows with no control actions applied (i.e., an autonomous system):
\bs\begin{align}
	&\overline{\mathbf{x}}(t+1)=\mathbf{F}\overline{\mathbf{x}}(t)+\overline{\mathbf{w}}(t)\notag \\
	&\overline{\boldsymbol{\xi}}(t)=Q_{\overline{\boldsymbol{\Psi}}(t),\overline{\mathbf{L}}(t)} \big(\overline{\mathbf{x}}(t)-\widetilde{\overline{\mathbf{x}}}(t)\big), \quad \overline{\mathbf{y}}(t) = \overline{\gamma}(t)\overline{\boldsymbol{\xi}}(t)	\label{ncs22}
\end{align}\bsc where $\widetilde{\overline{\mathbf{x}}}(t)=\mathbb{E}\big[\mathbf{x}(t)\big|\overline{I}_{SC}(t)\big]$ and $\overline{I}_{SC}(t)\triangleq \{\overline{\alpha}_0^{t-1},   \overline{\gamma}_0^{\nu(t-1)}, \overline{\mathbf{y}}_0^{\nu(t-1)}\}$, and we assume the following parameters are identical in the two NCSs, i.e.,  $\mathbf{x}(0)=\overline{\mathbf{x}}(0)$, $\mathbf{w}(t) = \overline{\mathbf{w}}(t)$, $\boldsymbol{\Psi}(t)=\overline{\boldsymbol{\Psi}}(t)$, $\mathbf{L}(t)=\overline{\mathbf{L}}(t)$, $\alpha(t)=\overline{\alpha}(t)$, $\gamma(t)=\overline{\gamma}(t)$, for all $t$. Then, 
\begin{Lemma}		\label{lemma6}
	For the two NCSs in (\ref{ncs11}) and (\ref{ncs22}), we have $\mathbf{x}(t)-\mathbb{E}\big[\mathbf{x}(t)|I_C(t)\big] =\overline{\mathbf{x}}(t)-\mathbb{E}\big[\overline{\mathbf{x}}(t)|I_C(t)\big] $, $\forall t$.
\end{Lemma}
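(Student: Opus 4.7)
The plan is to reduce the claim to a direct observation: the difference between the two plant trajectories is a deterministic (given the controls) affine function of the control history, and this control history is measurable with respect to $I_C(t)$, so it cancels cleanly out of the conditional expectation.

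First I would define $\mathbf{d}(t) \triangleq \mathbf{x}(t) - \overline{\mathbf{x}}(t)$. Subtracting the dynamics in (\ref{ncs22}) from those in (\ref{ncs11}) and using the standing assumptions $\mathbf{x}(0) = \overline{\mathbf{x}}(0)$ and $\mathbf{w}(t) = \overline{\mathbf{w}}(t)$, I obtain
\begin{align}
\mathbf{d}(t+1) = \mathbf{F}\,\mathbf{d}(t) + \mathbf{G}\mathbf{u}(t), \qquad \mathbf{d}(0) = \mathbf{0}, \notag
\end{align}
so that $\mathbf{d}(t) = \sum_{s=0}^{t-1}\mathbf{F}^{t-1-s}\mathbf{G}\mathbf{u}(s)$ is a fixed linear combination of $\mathbf{u}_0^{t-1}$. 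Since $\mathbf{u}_0^{t-1} \subseteq I_C(t)$ by the definition (\ref{controlis}), $\mathbf{d}(t)$ is $\mathcal{F}_C(t)$-measurable.

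From there the result follows by a one-line computation: using linearity of conditional expectation and pulling the measurable $\mathbf{d}(t)$ out,
\begin{align}
\mathbb{E}\!\left[\mathbf{x}(t)\big|I_C(t)\right] = \mathbb{E}\!\left[\overline{\mathbf{x}}(t)+\mathbf{d}(t)\big|I_C(t)\right] = \mathbb{E}\!\left[\overline{\mathbf{x}}(t)\big|I_C(t)\right] + \mathbf{d}(t). \notag
\end{align}
Subtracting $\mathbf{x}(t) = \overline{\mathbf{x}}(t) + \mathbf{d}(t)$ from both sides cancels $\mathbf{d}(t)$ and yields $\mathbf{x}(t) - \mathbb{E}[\mathbf{x}(t)|I_C(t)] = \overline{\mathbf{x}}(t) - \mathbb{E}[\overline{\mathbf{x}}(t)|I_C(t)]$.

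The only delicate point I expect is justifying that the two NCSs can indeed be coupled as described on a single probability space so that the conditioning in the statement is well posed. Specifically, I want the identifications $\boldsymbol{\Psi}(t) = \overline{\boldsymbol{\Psi}}(t)$, $\mathbf{L}(t) = \overline{\mathbf{L}}(t)$, $\alpha(t) = \overline{\alpha}(t)$, $\gamma(t) = \overline{\gamma}(t)$ to be self-consistent; these are taken as coupling assumptions in the construction (the two systems share the same channel realization and the same primitive-quantizer parameter recursion of Algorithm \ref{algprimquant}), so the coupling is legitimate and independent of the control. Note that the coupling need not imply $\mathbf{y}(t) = \overline{\mathbf{y}}(t)$ for the argument above to go through — measurability of $\mathbf{d}(t)$ with respect to $\mathbf{u}_0^{t-1}$ is all that is used. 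Once this coupling is in place, the argument is immediate; the lemma can then be invoked in the proof of Lemma \ref{lemmadual} to replace the controlled error by the autonomous-system error, which depends on $I_C(t)$ only through $\{\alpha_0^{t},\gamma_0^{\nu(t)},\mathbf{y}_0^{\nu(t)}\}$.
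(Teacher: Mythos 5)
Your proof is correct and takes essentially the same approach as the paper: the paper writes $\mathbf{x}(t)=\mathbf{A}(t)\mathbf{x}(0)+\mathbf{B}(t)\vec{\mathbf{u}}(t-1)+\mathbf{C}(t)\vec{\mathbf{w}}(t-1)$ and $\overline{\mathbf{x}}(t)=\mathbf{A}(t)\mathbf{x}(0)+\mathbf{C}(t)\vec{\mathbf{w}}(t-1)$ and observes that the control term $\mathbf{B}(t)\vec{\mathbf{u}}(t-1)$ is $I_C(t)$-measurable and hence cancels, which is exactly your $\mathbf{d}(t)=\sum_{s=0}^{t-1}\mathbf{F}^{t-1-s}\mathbf{G}\mathbf{u}(s)$. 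Your formulation via the difference process is marginally more explicit, but the argument is identical.
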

\begin{proof}
	The linearity of the  dynamics for $\mathbf{x}(t)$ and $\overline{\mathbf{x}}(t)$ implies  the existence of  $\mathbf{A}(t)$, $\mathbf{B}(t)$ and $\mathbf{C}(t)$ such that
\bs\begin{align}
	&\mathbf{x}(t)=\mathbf{A}(t)\mathbf{x}(0)+\mathbf{B}(t)\vec{\mathbf{u}}(t-1)+\mathbf{C}(t)\vec{\mathbf{w}}(t-1)\notag \\
	&\overline{\mathbf{x}}(t)=\mathbf{A}(t)\mathbf{x}(0)+\mathbf{C}(t)\vec{\mathbf{w}}(t-1)
\end{align}\bsc where $\vec{\mathbf{u}}(t)= (\mathbf{u}^T(1), \dots, \mathbf{u}(t) )^T$ and $\vec{\mathbf{w}}(t)= (\mathbf{w}^T(1), \dots, \mathbf{w}(t) )^T$.  Then, we have \bs$\mathbf{x}(t)-\mathbb{E}\big[\mathbf{x}(t)|I_C(t)\big]= \big(\mathbf{A}(t)\mathbf{x}(0)+\mathbf{C}(t)\vec{\mathbf{w}}(t-1)\big)\-\big(\mathbf{A}(t)\mathbb{E}\big[\mathbf{x}(0)|I_C(t)\big]+\mathbf{C}(t)\mathbb{E}\big[\vec{\mathbf{w}}(t-1)|I_C(t)\big]\big)=\overline{\mathbf{x}}(t)-\mathbb{E}\big[\overline{\mathbf{x}}(t)|I_C(t)\big]$\bsc.\end{proof}

\vspace{0.1cm}
\hspace{-0.3cm} \emph{B. State Estimate of an Autonomous System} 
\vspace{0.1cm}

We then prove the following Lemma. Together with Lemma \ref{lemma6}, the no dual effect property of the original NCS can be directly proven.
\begin{Lemma}	\label{lemma7}
	$\mathbb{E}\big[\overline{\mathbf{x}}(t)|I_C(t)\big]=\mathbb{E}\big[\overline{\mathbf{x}}(t)|\alpha_0^{t},   \gamma_0^{\nu(t)}, \overline{\mathbf{y}}_0^{\nu(t)}\big]$, $\forall t$.
\end{Lemma}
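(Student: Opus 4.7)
The plan is to reduce $\mathbb{E}[\overline{\mathbf{x}}(t)|I_C(t)]$ to the claimed channel-only conditioning in two steps: (i) show that the quantizer outputs of the original and autonomous NCSs coincide, $\mathbf{y}(s)=\overline{\mathbf{y}}(s)$ for every $s\le t$; and (ii) show that the control history $\mathbf{u}_0^{t-1}$ is a measurable function of $\{\alpha_0^{t-1},\gamma_0^{\nu(t-1)},\overline{\mathbf{y}}_0^{\nu(t-1)}\}$, so that conditioning on it adds no information. Together these imply $\sigma(I_C(t))=\sigma(\alpha_0^{t},\gamma_0^{\nu(t)},\overline{\mathbf{y}}_0^{\nu(t)})$, yielding the claimed equality of conditional expectations.

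Both claims are established simultaneously by induction on $t$. The base case $t=0$ is immediate: the systems share $\mathbf{x}(0)=\overline{\mathbf{x}}(0)$, no control has yet been applied, and $\boldsymbol{\Psi}(0),\mathbf{L}(0)$ are deterministic initial values, so $\mathbf{y}(0)=\overline{\mathbf{y}}(0)$ and $\mathbf{u}_0^{-1}$ is vacuous. For the inductive step, suppose (i) and (ii) hold up to slot $t-1$; then $\sigma(I_{SC}(t))=\sigma(\overline{I}_{SC}(t),\mathbf{u}_0^{t-1})$, where $\overline{I}_{SC}(t)\triangleq\{\alpha_0^{t-1},\gamma_0^{\nu(t-1)},\overline{\mathbf{y}}_0^{\nu(t-1)}\}$. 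Applying the linearity argument of Lemma \ref{lemma6} to the $I_{SC}$-conditioning, and using that $\mathbf{B}(t)\vec{\mathbf{u}}(t-1)$ is $I_{SC}(t)$-measurable by definition of $I_{SC}$, one obtains $\widetilde{\mathbf{x}}(t)-\widetilde{\overline{\mathbf{x}}}(t)=\mathbf{B}(t)\vec{\mathbf{u}}(t-1)$, which cancels the control contribution in $\mathbf{x}(t)=\overline{\mathbf{x}}(t)+\mathbf{B}(t)\vec{\mathbf{u}}(t-1)$ to give $\mathbf{x}(t)-\widetilde{\mathbf{x}}(t)=\overline{\mathbf{x}}(t)-\widetilde{\overline{\mathbf{x}}}(t)$. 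Since $\boldsymbol{\Psi}(t),\mathbf{L}(t)$ depend only on the shared sequence $\gamma_0^{t-1}$ through (\ref{27equa})-(\ref{28equa}), the mapper outputs coincide, $\boldsymbol{\xi}(t)=\overline{\boldsymbol{\xi}}(t)$, and hence $\mathbf{y}(t)=\gamma(t)\boldsymbol{\xi}(t)=\overline{\mathbf{y}}(t)$, giving (i) at slot $t$. Claim (ii) at slot $t$ then follows because $\mathbf{u}(t-1)$ is $\mathcal{F}_C(t-1)$-adapted and (i) at slots $\le t-1$ allows $\mathbf{y}$ to be replaced by $\overline{\mathbf{y}}$ in this dependence.

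Finally, the induction yields $\mathbf{y}_0^{\nu(t)}=\overline{\mathbf{y}}_0^{\nu(t)}$ and $\mathbf{u}_0^{t-1}$ measurable with respect to $\sigma(\alpha_0^{t-1},\gamma_0^{\nu(t-1)},\overline{\mathbf{y}}_0^{\nu(t-1)})\subseteq\sigma(\alpha_0^{t},\gamma_0^{\nu(t)},\overline{\mathbf{y}}_0^{\nu(t)})$, so that $\sigma(I_C(t))=\sigma(\alpha_0^{t},\gamma_0^{\nu(t)},\overline{\mathbf{y}}_0^{\nu(t)})$, completing the proof. The main obstacle in this program is the inductive step coupling (i) and (ii), where a naive attempt would invoke Lemma \ref{lemma7} itself (via its $I_{SC}$-analog) inside the argument that establishes it. The circularity is broken by observing that $\mathbf{u}_0^{t-1}$ is \emph{definitionally} included in $I_{SC}(t)$, so the cancellation of $\mathbf{B}(t)\vec{\mathbf{u}}(t-1)$ is a pure measurability statement and requires no separate identification of $\widetilde{\overline{\mathbf{x}}}(t)$ with $\mathbb{E}[\overline{\mathbf{x}}(t)|I_{SC}(t)]$ beyond what (ii) at earlier slots already supplies.
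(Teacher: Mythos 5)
Your proposal is correct and follows essentially the same inductive route as the paper: both proofs argue slot by slot that the reconstructed symbols of the original and autonomous NCSs coincide, and then eliminate the control history from the conditioning. The one substantive difference is in how that elimination is framed. The paper proves the Markov-chain (conditional-independence) property $\overline{\mathbf{x}}(t)\rightarrow\{\alpha_0^{t},\gamma_0^{\nu(t)},\mathbf{y}_0^{\nu(t)}\}\rightarrow\mathbf{u}_0^{t-1}$ and then invokes the reversal $Z\rightarrow Y\rightarrow X$; you instead prove the stronger statement that $\mathbf{u}_0^{t-1}$ is $\sigma(\alpha_0^{t-1},\gamma_0^{\nu(t-1)},\overline{\mathbf{y}}_0^{\nu(t-1)})$-measurable, which directly yields $\sigma(I_C(t))=\sigma(\alpha_0^{t},\gamma_0^{\nu(t)},\overline{\mathbf{y}}_0^{\nu(t)})$. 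Under the paper's (deterministic, $\mathcal{F}_C(t)$-adapted) policy definition the two are equivalent — indeed the paper's own proof of the Markov-chain step already uses that $\mathbf{u}(0)$ is a deterministic function of the channel observations — so the difference is essentially a choice of language, with your $\sigma$-algebra framing being the more direct and the paper's framing being the one that would generalize if the policy were randomized. One minor presentational point: in your inductive step you establish (i) at slot $t$ before (ii) at slot $t$, but the cancellation $\widetilde{\mathbf{x}}(t)-\widetilde{\overline{\mathbf{x}}}(t)=\mathbf{B}(t)\vec{\mathbf{u}}(t-1)$ really needs $\mathbf{u}_0^{t-1}$ to already be $\overline{I}_{SC}(t)$-measurable (i.e., (ii) at slot $t$, not merely at earlier slots), since $\widetilde{\overline{\mathbf{x}}}(t)$ is defined by conditioning on $\overline{I}_{SC}(t)$ rather than $I_{SC}(t)$. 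This is not a gap — as you observe, (ii) at slot $t$ is derivable from (i), (ii) at slots $\le t-1$ alone, so the fix is simply to derive (ii) at slot $t$ first and then use it to get the cancellation and hence (i) at slot $t$.
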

\begin{proof}
	We use  $X\rightarrow Y \rightarrow Z$ to denote that $(X, Y, Z)$ forms a Markov chain, i.e., $\Pr[Z|X,Y]=\Pr[Z|Y]$. It can be easily verified that $X\rightarrow Y \rightarrow Z$ implies that $Z\rightarrow Y \rightarrow X$. Using induction method, we will show that  $\overline{\mathbf{y}}(\nu(t))=\mathbf{y}(\nu(t))$ and $\overline{\mathbf{x}}(t)\rightarrow \{\alpha_0^{t}, \gamma_0^{\nu(t)}, \mathbf{y}_0^{\nu(t)}\}\rightarrow \mathbf{u}_0^{t-1}$ for all $t\geq 1$.
	
	\emph{Step 1 ($t=1$):}  If $\gamma(0)=0$, we have $\overline{\mathbf{y}}(\nu(0))=\mathbf{y}(\nu(0))=0$. If $\gamma(0)=1$, we have $\overline{\mathbf{y}}(\nu(0))=Q_{\overline{\boldsymbol{\Psi}}(0),\overline{\mathbf{L}}(0)} \big(\overline{\mathbf{x}}(0)-\widetilde{\overline{\mathbf{x}}}(0)\big)=Q_{{\boldsymbol{\Psi}}(0),{\mathbf{L}}(0)} \big({\mathbf{x}}(0)-\widetilde{{\mathbf{x}}}(0)\big)={\mathbf{y}}(\nu(0))$, where the second equality is due to the construction in part A. Therefore, $\overline{\mathbf{y}}(\nu(0))=\mathbf{y}(\nu(0))$. Since $\mathbf{u}(0) =\sigma (I_C(0))=\sigma(\{\alpha_0, \gamma_0^{\nu(0)}, \mathbf{y}_0^{\nu(0)}\})$, we have $\{\overline{\mathbf{x}}_0, \mathbf{w}(0)\}\rightarrow \{\alpha_0, \gamma_0^{\nu(0)}, \mathbf{y}_0^{\nu(0)}\}  \rightarrow \mathbf{u}(0)$. Together with $\overline{\mathbf{x}}(1)=\mathbf{F}\overline{\mathbf{x}}(0)+\overline{\mathbf{w}}(0)$, we have 
\begin{align}
	\overline{\mathbf{x}}(1)\rightarrow \{\alpha_0^1, \gamma_0^{\nu(0)}, \mathbf{y}_0^{\nu(0)} \}\rightarrow \mathbf{u}(0)	\label{sdadadszs1}
\end{align}
Therefore, we have $\overline{\mathbf{y}}(\nu(1))=\overline{\mathbf{y}}(\nu(0))=\mathbf{y}(\nu(0))=\mathbf{y}(\nu(1))$ if $\gamma(1)=0$. If $\gamma(1)=1$, we have
\bs\begin{align}
	&\overline{\mathbf{y}}(\nu(1))=\overline{\mathbf{y}}(1)=Q_{\boldsymbol{\Psi}(1),{\mathbf{L}}(1)}\big(\overline{\mathbf{x}}(1)-\mathbb{E}\big[\overline{\mathbf{x}}(1) \big| {\alpha}_0^{1},  { \gamma}_0^{\nu(0)}, \mathbf{y}_0^{\nu(0)} \big] \big)\notag \\
	\overset{(a)}{=} &Q_{\boldsymbol{\Psi}(1),{\mathbf{L}}(1)}\big(\overline{\mathbf{x}}(1)-\mathbb{E}\big[\overline{\mathbf{x}}(1) \big|I_{SC}(1) \big] \big) \label{asda2ewe47} \\
	\overset{(b)}{=}& Q_{\boldsymbol{\Psi}(1),{\mathbf{L}}(1)}\big({\mathbf{x}}(1)-\mathbb{E}\big[{\mathbf{x}}(1) \big|I_{SC}(1) \big] \big)=\mathbf{y}(1)=\mathbf{y}(\nu(1))		\notag
\end{align}\bsc where (a) is due to (\ref{sdadadszs1}), and (b) can be verified similarly as in Lemma \ref{lemma6}. Therefore, $\overline{\mathbf{y}}(\nu(1))={\mathbf{y}}(\nu(1))$. Furthermore, since $\overline{\mathbf{y}}(\nu(1))={\mathbf{y}}(\nu(1))=\sigma\big(\alpha_0^1, \gamma_0^{\nu(1)}, \mathbf{y}_0^{\nu(0)}, \overline{\mathbf{x}}(1)\big)$, we have
\begin{align}
	&\Pr[\mathbf{u}(0)| \alpha_0^{1}, \gamma_0^{\nu(1)}, \mathbf{y}_0^{\nu(1)}, \overline{\mathbf{x}}(1)]\notag \\
	=&\Pr[\mathbf{u}(0)| \alpha_0^{1}, \gamma_0^{\nu(1)}, \mathbf{y}_0^{\nu(0)}, \sigma\big(\alpha_0^1, \gamma_0^{\nu(1)}, \mathbf{y}_0^{\nu(0)}, \overline{\mathbf{x}}(1)\big), \overline{\mathbf{x}}(1)]	\notag \\
	\overset{(c)}{=}& \Pr[\mathbf{u}(0)| \alpha_0^{1}, \gamma_0^{\nu(1)}, \mathbf{y}_0^{\nu(0)}, \sigma\big(\alpha_0^1, \gamma_0^{\nu(1)}, \mathbf{y}_0^{\nu(0)}\big)]\label{stepres1}
\end{align}
where $(c)$ is due to (\ref{sdadadszs1}). Therefore, based on (\ref{stepres1}), we have $\overline{\mathbf{x}}(1)\rightarrow \{\alpha_0^{1}, \gamma_0^{\nu(1)}, \mathbf{y}_0^{\nu(1)}\}\rightarrow \mathbf{u}(0)$.

\emph{Step 2 (Induction):} By induction hypothesis, assume that $\overline{\mathbf{y}}(\nu(t))=\mathbf{y}(\nu(t))$ and for $1\leq k \leq t$.	 Note that $\{\overline{\mathbf{x}}(t),\mathbf{w}(t)\}\rightarrow \{\mathbf{u}_0^{t-1}, \alpha_0^{t}, \gamma_0^{\nu(t)}, \mathbf{y}_0^{\nu(t)}\} \rightarrow \mathbf{u}(t)$ because   $\mathbf{u}(t)$ only depends on $I_C(t)$. By the induction hypothesis and $\mathbf{w}(t)$ is independent of $\mathbf{y}_0^{\nu(t)}$ and $\mathbf{u}_0^{t-1}$, we have  $\{\overline{\mathbf{x}}(t), \mathbf{w}(t)\}\rightarrow \{\alpha_0^{t}, \gamma_0^{\nu(t)}, \mathbf{y}_0^{\nu(t)}	\} \rightarrow \mathbf{u}_0^{t-1}$. Combining the above results, we have $\{\overline{\mathbf{x}}(t),\mathbf{w}(t)\}\rightarrow \{\alpha_0^{t}, \gamma_0^{\nu(t)}, y_0^{\nu(t)}	\} \rightarrow \mathbf{u}_0^{t}$.  Together with $\overline{\mathbf{x}}(t+1)=\mathbf{F}\overline{\mathbf{x}}(t)+\overline{\mathbf{w}}(t)$, we have 
\begin{align}
	\overline{\mathbf{x}}(t+1)\rightarrow \{\alpha_0^{t+1}, \gamma_0^{\nu(t)}, \mathbf{y}_0^{\nu(t)}	\} \rightarrow \mathbf{u}_0^{t}	\label{sdad1111}
\end{align}
Therefore, we have $\overline{\mathbf{y}}(\nu(t+1))=\overline{\mathbf{y}}(\nu(t))=\mathbf{y}(\nu(t))=\mathbf{y}(\nu(t+1))$ if $\gamma(t+1)=0$. If $\gamma(t+1)=1$, following the same derivations in  (\ref{asda2ewe47}) and using (\ref{sdad1111}), we have  $\overline{\mathbf{y}}(\nu(t+1))=\overline{\mathbf{y}}(t+1)= Q_{\boldsymbol{\Psi}(t+1),{\mathbf{L}}(t+1)}\big(\overline{\mathbf{x}}(t+1)-\mathbb{E}\big[\overline{\mathbf{x}}(t+1) \big| I_{SC}(t+1) \big] \big) = Q_{\boldsymbol{\Psi}(t+1),{\mathbf{L}}(t+1)}\big({\mathbf{x}}(t+1)-\mathbb{E}\big[{\mathbf{x}}(t+1) \big| I_{SC}(t+1) \big] \big)=\mathbf{y}(\nu(t+1))$.  Therefore, $\overline{\mathbf{y}}(\nu(t+1))={\mathbf{y}}(\nu(t+1))$. Furthermore, since $\overline{\mathbf{y}}(\nu(t+1))={\mathbf{y}}(\nu(t+1))=\sigma\big(\alpha_0^{t+1}, \gamma_0^{\nu(t+1)}, \mathbf{y}_0^{\nu(t)}, \overline{\mathbf{x}}(t+1)\big)$, we have
\begin{align}
	&\Pr[\mathbf{u}_0^t| \alpha_0^{t+1}, \gamma_0^{\nu(t+1)},  \mathbf{y}_0^{\nu(t+1)}, \overline{\mathbf{x}}(t+1)]\notag \\
	=&\Pr[\mathbf{u}_0^t| \alpha_0^{t+1}, \gamma_0^{\nu(t+1)},  \mathbf{y}_0^{\nu(t)}, \notag \\
	 & \sigma\big(\alpha_0^{t+1}, \gamma_0^{\nu(t+1)}, \mathbf{y}_0^{\nu(t)}, \overline{\mathbf{x}}(t+1)\big),  \overline{\mathbf{x}}(t+1)]	\notag \\
	\overset{(d)}{=}&\Pr[\mathbf{u}_0^t| \alpha_0^{t+1}, \gamma_0^{\nu(t+1)},  \mathbf{y}_0^{\nu(t)}, \sigma\big(\alpha_0^{t+1}, \gamma_0^{\nu(t+1)}, \mathbf{y}_0^{\nu(t)}\big)]\label{stepres11}
\end{align}
where $(d)$ is due to (\ref{sdad1111}). Therefore, based on (\ref{stepres11}), we have $\overline{\mathbf{x}}(t+1)\rightarrow \{\alpha_0^{t+1}, \gamma_0^{\nu(t+1)}, \mathbf{y}_0^{\nu(t+1)}\}\rightarrow \mathbf{u}_0^{t}$.\end{proof}

Finally, combining Lemma \ref{lemma6} and Lemma \ref{lemma7}, we have
\bs\begin{align}
	\boldsymbol{\Delta}(t)=\mathbf{x}(t)-\mathbb{E}\big[\mathbf{x}(t)|I_C(t)\big] =\overline{\mathbf{x}}(t)-\mathbb{E}\big[\overline{\mathbf{x}}(t)|\alpha_0^{t},   \gamma_0^{\nu(t)}, \overline{\mathbf{y}}_0^{\nu(t)}\big]\notag
\end{align}\bsc  Therefore, $\boldsymbol{\Delta}(t)$ does not dependent on $\mathbf{u}_0^{t-1}$, which directly induces the no dual effect property in Lemma \ref{lemmadual}.

\vspace{-0.2cm}

\section*{\txtblue{Appendix C:  Proof of Theorem \ref{mdpvarify}}}
	
	\txtblue{For a given  $\Omega_p$, the induced process \bs$\mathbf{Y}(t)=\big\{\mathbf{x}_0^t, \mathbf{i}_0^t, \mathbf{u}_0^{t-1}, \boldsymbol{\xi}_0^{t}, \alpha_0^{t-1},   \gamma_0^{t-1}, \boldsymbol{\Psi}_0^t,\mathbf{L}_0^t \big\} \in \sigma\big(\big\{I_S(s):   0\leq s \leq t\big\}\big)$\bsc  is a controlled Markov process with  transition kernel:
\begin{align}	\label{totaltrans}
	& \Pr\big[\mathbf{Y}(t+1)\big|\mathbf{Y}(t),p(t)\big] = \Pr\big[\mathbf{x}(t+1)\big|\mathbf{x}(t),\mathbf{u}(t)  \big]\notag \\
	& \cdot \Pr\big[\mathbf{u}(t)\big|\gamma(t), \mathbf{Y}(t)\big]  \Pr\big[\gamma(t)\big|\alpha(t),p(t)  \big] \Pr\big[\alpha(t)\big|\alpha(t-1)  \big]   \notag \\
	&    \cdot \Pr\big[\boldsymbol{\xi}(t+1)\big|\mathbf{i}(t+1), \boldsymbol{\Psi}(t+1), \mathbf{L}(t+1)\big] \notag \\
	&\cdot  \Pr\big[\mathbf{i}(t+1)\big|\boldsymbol{\Psi}(t), \boldsymbol{\xi}(t),\gamma(t),\mathbf{i}(t)\big] \notag \\
	& \cdot \Pr\big[\boldsymbol{\Psi}(t+1)\big|\boldsymbol{\Psi}(t) \big] \Pr\big[\mathbf{L}(t+1)\big|\mathbf{L}(t), \gamma(t) \big]
\end{align}
where each term  can be  calculated using the associated state dynamics in (\ref{plantain}), (\ref{errordyn}), (\ref{serequ}), (\ref{chnain}), (\ref{modelinapp}), (\ref{errordynfake}), (\ref{27equa}), and (\ref{28equa}).}

\txtblue{Note that the dynamics of $\boldsymbol{\Delta}(t)$ in (\ref{deltadyntext}) can be obtained from the dynamics of $\mathbf{x}(t)$ and $\hat{\mathbf{x}}(t)$ in (\ref{plantain}) and (\ref{errordyn}). Based on $\boldsymbol{\Delta}(t)$  in (\ref{deltadyntext}), the per-stage cost of the MDP is reduced to 
\bs\begin{align}
	 \hspace{-0.5cm}\mathbb{E}[\big(\boldsymbol{\Delta}^T(t)\mathbf{S} \boldsymbol{\Delta}(t) + \lambda p \big)\tau|  \mathbf{Y}(t)] = \mathbb{E}[\big(\boldsymbol{\Delta}^T(t)\mathbf{S} \boldsymbol{\Delta}(t) + \lambda p \big)\tau|  \boldsymbol{\chi}(t)]\notag 
\end{align}\bsc where $\boldsymbol{\chi}(t) \triangleq \big\{\boldsymbol{\Delta}(t-1), \alpha(t-1), \boldsymbol{\Psi}(t),\mathbf{L}(t) \big\}\subset \mathbf{Y}(t)$ with  the associated transition kernel  given by
\begin{align}	\label{totaltrans}
	&\Pr\big[\boldsymbol{\chi}(t+1)\big|\boldsymbol{\chi}(t),p(t)\big]  =\Pr\big[\alpha(t)\big|\alpha(t-1)\big]	\notag \\
	& \cdot \Pr\big[\boldsymbol{\Psi}(t+1)\big|\boldsymbol{\Psi}(t) \big] \Pr\big[\mathbf{L}(t+1)\big|\mathbf{L}(t), \alpha(t), p(t) \big] \notag \\
	& \cdot \Pr\big[\boldsymbol{\Delta}(t)\big|\boldsymbol{\Delta}(t-1), \boldsymbol{\Psi}(t),\mathbf{L}(t), \alpha(t), p(t) \big] \notag 	
\end{align}
where $\Pr\big[\boldsymbol{\Delta}(t)\big|\boldsymbol{\Delta}(t-1), \boldsymbol{\Psi}(t),\mathbf{L}(t), \alpha(t), p(t) \big]$ is associated with the dynamics of $\boldsymbol{\Delta}(t)$ in (\ref{deltadyntext}). Hence, Problem \ref{probformu} is an MDP with system state being $\boldsymbol{\chi}$, and the optimality condition in Theorem \ref{mdpvarify} directly follows from Prop. 4.6.1 of \cite{mdpcref2}}.

\vspace{-0.2cm}

\section*{Appendix D: Proof of Lemma \ref{perturbPDE}}
For convenience, denote 
\bs\begin{align}
	&T_{\boldsymbol{\chi}}(\theta, V, p)= \frac{1}{\tau} \mathbb{E}\big[\big((\boldsymbol{\Delta}')^T\mathbf{S} (\boldsymbol{\Delta}') + \lambda p \big)\tau \\
	& \qquad + \sum_{\boldsymbol{\chi} '}\Pr\left[\boldsymbol{\chi}'\big| \boldsymbol{\chi}, p  \right] V \left(\boldsymbol{\chi}'\right) -V \left(\boldsymbol{\chi} \right)\big|  \boldsymbol{\chi}\big] - \theta \notag	\\
	&T_{\boldsymbol{\chi}}^\dagger(\theta, V, p)=  \boldsymbol{\Delta}^T \mathbf{S} \boldsymbol{\Delta} +  \big[\lambda + \big(V(\mathbf{0}, \alpha, \boldsymbol{\Psi}, \mathbf{L})\label{49importantequa}   \\
	&- V(\boldsymbol{\Delta}, \alpha, \boldsymbol{\Psi}, \mathbf{L})\big)\frac{  \alpha}{\kappa(R) B_W}\big]p+\nabla_{\boldsymbol{\Delta}}^TV(\boldsymbol{\Delta}, \alpha, \boldsymbol{\Psi},\mathbf{L}) \widetilde{\mathbf{F}} \boldsymbol{\Delta} \notag\\
	& +\frac{1}{2}\text{Tr}\big( \nabla_{\boldsymbol{\Delta}}^2 V(\boldsymbol{\Delta}, \alpha, \boldsymbol{\Psi},\mathbf{L} )\widetilde{\mathbf{W}}\big)+\frac{\partial V(\boldsymbol{\Delta}, \alpha, \boldsymbol{\Psi},\mathbf{L})}{\partial \alpha} \left(2 \widetilde{a} \alpha +2 \widetilde{a}  \right)  \notag \\
	&+  \frac{\partial^2V^\ast(\boldsymbol{\Delta}, \alpha, \boldsymbol{\Psi},\mathbf{L})}{\partial \alpha^2}4 \widetilde{a}\alpha  + \text{Tr}\big(\frac{\partial V(\boldsymbol{\Delta}, \alpha, \boldsymbol{\Psi}, \mathbf{L})}{\partial \boldsymbol{\Psi}}(\mathbf{H}\boldsymbol{\Psi}-\boldsymbol{\Psi})/\tau\big) \notag \\
	& + \nabla^T_{\mathbf{L}}V(\boldsymbol{\Delta}, \alpha, \boldsymbol{\Psi}, \mathbf{L})\big(\boldsymbol{\Gamma}\mathbf{F}_{\mathbf{R}}\mathbf{L}+w_{max}\|\mathbf{H}\boldsymbol{\Psi}\|\mathbf{1}-\mathbf{L}\big)/\tau-\theta \notag
\end{align}\bsc

\hspace{-0.3cm} \emph{A. Relationship between $T_{\boldsymbol{\chi}}(\theta, V, p)$ and $T_{\boldsymbol{\chi}}^\dagger(\theta, V, p)$}
\txtblue{\begin{Lemma}	\label{applema}
	For any $\boldsymbol{\chi}$, $ T_{\boldsymbol{\chi}}(\theta, V, p)=T_{\boldsymbol{\chi}}^\dagger(\theta, V, p)+\mathcal{O}(\tau)+\mathcal{O}\big(\frac{\tau^2}{2^{2R_{min}}}\big)$.
\end{Lemma}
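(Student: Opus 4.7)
The proof plan is to start from the definition of $T_{\boldsymbol{\chi}}(\theta,V,p)$ in~(\ref{49importantequa}), decompose the one-step expectation into the symbol-error branches, and Taylor-expand $V(\boldsymbol{\chi}')$ in each of the four coordinates of $\boldsymbol{\chi}=(\boldsymbol{\Delta},\alpha,\boldsymbol{\Psi},\mathbf{L})$ around $\boldsymbol{\chi}$, while tracking contributions up to $\mathcal{O}(\tau)$ and $\mathcal{O}(\tau^2/2^{2R_{min}})$. The failure probability expands as $\Pr[\gamma=0]=\exp(-p\tau\alpha/(\kappa(R)B_W))=1-p\tau\alpha/(\kappa(R)B_W)+\mathcal{O}(\tau^2)$. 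In the success branch, $\boldsymbol{\Delta}'=-\boldsymbol{\Psi}^{-1}\mathbf{e}(\mathbf{L},t)$ has norm at most $\mathcal{O}(\|\mathbf{L}\|/2^{R_{min}})=\mathcal{O}(\tau/2^{R_{min}})$ by Assumption~\ref{assumpstabl} together with the update~(\ref{28equa}); since $V=\mathcal{O}(\|\boldsymbol{\Delta}\|^2)$, the value in that branch equals $V(\mathbf{0},\alpha,\boldsymbol{\Psi},\mathbf{L})+\mathcal{O}(\tau^2/2^{2R_{min}})$. Multiplying by the above success probability and dividing by $\tau$ produces exactly the power-dependent bracket $[\lambda+(V(\mathbf{0},\alpha,\boldsymbol{\Psi},\mathbf{L})-V(\boldsymbol{\chi}))\alpha/(\kappa(R)B_W)]p$ appearing in $T_{\boldsymbol{\chi}}^\dagger$, modulo the claimed error order.

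Next I would expand the failure-branch value coordinate by coordinate. For $\boldsymbol{\Delta}$, using $\mathbf{F}=\exp(\widetilde{\mathbf{F}}\tau)=\mathbf{I}+\widetilde{\mathbf{F}}\tau+\mathcal{O}(\tau^2)$ and $\mathrm{Cov}[\mathbf{w}]=\widetilde{\mathbf{W}}\tau+\mathcal{O}(\tau^2)$, a second-order Taylor expansion of $V$ around $\boldsymbol{\Delta}$ produces the drift $\nabla_{\boldsymbol{\Delta}}^TV\,\widetilde{\mathbf{F}}\boldsymbol{\Delta}$ and the It\^o-type correction $\tfrac{1}{2}\mathrm{Tr}(\nabla^2_{\boldsymbol{\Delta}}V\,\widetilde{\mathbf{W}})$ after dividing by $\tau$. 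For $\alpha(t)=|ah(t-1)+z(t-1)|^2$, one computes $\mathbb{E}[\alpha(t)-\alpha(t-1)|\alpha(t-1)]=(2\widetilde{a}\alpha+2\widetilde{a})\tau+\mathcal{O}(\tau^2)$ and $\mathrm{Var}[\alpha(t)-\alpha(t-1)|\alpha(t-1)]=4\widetilde{a}\alpha\,\tau+\mathcal{O}(\tau^2)$, and a second-order expansion recovers the generator terms $\partial_\alpha V(2\widetilde{a}\alpha+2\widetilde{a})+\partial^2_\alpha V\cdot 4\widetilde{a}\alpha$ in~(\ref{49importantequa}).

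For $\boldsymbol{\Psi}$ and $\mathbf{L}$ the updates~(\ref{27equa})--(\ref{28equa}) are deterministic. A first-order expansion gives exactly $\mathrm{Tr}\!\left(\tfrac{\partial V}{\partial\boldsymbol{\Psi}}(\mathbf{H}\boldsymbol{\Psi}-\boldsymbol{\Psi})\right)/\tau$ and $\nabla_{\mathbf{L}}^TV(\boldsymbol{\Gamma}\mathbf{F}_{\mathbf{R}}\mathbf{L}+w_{max}\|\mathbf{H}\boldsymbol{\Psi}\|\mathbf{1}-\mathbf{L})/\tau$; second-order Taylor remainders are absorbed into $\mathcal{O}(\tau)$ via the $V\in\mathcal{C}^2$ hypothesis together with the boundedness of the per-slot increments $\mathbf{H}\boldsymbol{\Psi}-\boldsymbol{\Psi}$ and $(\boldsymbol{\Gamma}\mathbf{F}_{\mathbf{R}}-\mathbf{I})\mathbf{L}+w_{max}\|\mathbf{H}\boldsymbol{\Psi}\|\mathbf{1}$. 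Finally, the per-stage cost $\mathbb{E}[(\boldsymbol{\Delta}')^T\mathbf{S}\boldsymbol{\Delta}'|\boldsymbol{\chi},p]\tau$ coincides with $\boldsymbol{\Delta}^T\mathbf{S}\boldsymbol{\Delta}\,\tau$ plus $\mathcal{O}(\tau^2)$ (failure branch) and $\mathcal{O}(\tau^3/2^{2R_{min}})$ (success branch), both absorbed into the stated remainder. Summing all four coordinate contributions, subtracting $\theta$, and dividing by $\tau$ reproduces $T^\dagger_{\boldsymbol{\chi}}(\theta,V,p)$ up to $\mathcal{O}(\tau)+\mathcal{O}(\tau^2/2^{2R_{min}})$.

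The main obstacle will be uniform error bookkeeping across the four coordinates: in particular, (i) ensuring that the quantization jump in the success branch is genuinely $\mathcal{O}(\tau^2/2^{2R_{min}})$, which requires propagating $L=\mathcal{O}(\tau)$ from the initial condition through the recursion~(\ref{28equa}) using the stability of the primitive quantizer; and (ii) verifying that the Taylor remainders in the $\boldsymbol{\Psi}$- and $\mathbf{L}$-coordinates do not get inflated by the $1/\tau$ normalization, which is handled by combining the $V=\mathcal{O}(\|\boldsymbol{\Delta}\|^2)$ growth bound with uniform Hessian bounds on compact sets so that the second-order remainders scale as $\mathcal{O}(\tau^2)/\tau=\mathcal{O}(\tau)$.
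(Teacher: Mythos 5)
Your plan matches the paper's proof of this lemma in structure and content: decompose the one-step operator by symbol-error outcome, expand $\Pr[\gamma=0]$ to first order in $\tau$, Taylor-expand $V\in\mathcal{C}^2$ coordinate by coordinate, bound the residuals via $\|\mathbf{L}\|=\mathcal{O}(\tau)$ and the $\mathcal{O}(\tau)$ per-slot increments of $\boldsymbol{\Psi}$ and $\mathbf{L}$, and flag exactly the same two bookkeeping caveats the paper addresses. Two small corrections when writing it out: the $\mathcal{O}(\tau^2/2^{2R_{min}})$ order in the success branch really comes from $\mathbb{E}[\mathbf{e}(\mathbf{L})]=\mathbf{0}$ annihilating the first-order Taylor term at $\boldsymbol{\Delta}=\mathbf{0}$ (the growth bound $V=\mathcal{O}(\|\boldsymbol{\Delta}\|^2)$ by itself only yields $\mathcal{O}(\tau/2^{R_{min}})$), and your stated $\mathrm{Var}[\alpha'-\alpha]=4\widetilde{a}\alpha\tau$ is a factor of $2$ short of the second moment $8\widetilde{a}\alpha\tau$ that the paper uses so that $\tfrac{1}{2}\partial^2_\alpha V\cdot\mathbb{E}[(\alpha'-\alpha)^2]/\tau$ reproduces the generator term $4\widetilde{a}\alpha\,\partial^2_\alpha V$ in $T^\dagger_{\boldsymbol{\chi}}$.
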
}
\begin{proof}	[Proof of Lemma \ref{applema}]

\txtblue{\emph{a. Calculation of the per-stage cost:} We first calculate the per-stage cost in (\ref{OrgBel}):
\bs\begin{align}	
	&\mathbb{E}\big[(\boldsymbol{\Delta}')^T\mathbf{S} (\boldsymbol{\Delta}')\tau \big| \boldsymbol{\chi} \big]	\label{45cal} \\
	= &\mathbb{E}\big[\mathbf{e}^T(\mathbf{L})\boldsymbol{\Psi}^{-T}\boldsymbol{\Psi}^{-1}\mathbf{e}(\mathbf{L})\tau\Pr\left[\gamma(t)=1 \right] \notag \\
	&+ \left(\mathbf{F}\boldsymbol{\Delta}+\mathbf{w} \right)^T \mathbf{S} \left(\mathbf{F}\boldsymbol{\Delta}+\mathbf{w} \right)\tau\Pr\left[\gamma(t)=0 \right]\big|\boldsymbol{\chi}\big] 		\notag \\
	\overset{(a)}{=} &\mathbb{E}\big[\left(\mathbf{F}\boldsymbol{\Delta}+\mathbf{w} \right)^T \mathbf{S} \left(\mathbf{F}\boldsymbol{\Delta}+\mathbf{w} \right)\tau\Pr\left[\gamma(t)=0 \right]\big|\boldsymbol{\chi}\big] +\mathcal{O}\Big(\frac{\tau^4}{2^{2R_{min}}}\Big)	\notag
\end{align}\bsc where the calculations in $(a)$ are as follows:
\begin{align}
	&\mathbb{E}\big[\mathbf{e}^T(\mathbf{L})\boldsymbol{\Psi}^{-T}\boldsymbol{\Psi}^{-1}\mathbf{e}(\mathbf{L}) \big|\boldsymbol{\chi}\big] \leq \mathbb{E}\big[\|\mathbf{e}(\mathbf{L})\|^2 \|\boldsymbol{\Psi}^{-1}\|^2\big|\boldsymbol{\chi}\big] \notag \\
	\overset{(b)}{\leq} & C_1 \sum_{n=1}^d \frac{L_n^2}{12 \times 2^{2 R_i}}\overset{(c)}{=}\mathcal{O}\Big(\frac{\tau^2}{2^{2R_{min}}}\Big)		\label{51equation}
\end{align}
where (b) is because   $\|\boldsymbol{\Psi}^{-1}\|^2\leq C_1$ for some $C_1>0$ (according to Prop. 5.1 of \cite{rate1}) and the distribution of the quantization noise in Lemma \ref{lemmaqunt}, and (c) is because $L_n(t)=\mathcal{O}(\tau)$ for all $i$ and $t$ under $L_i(0)=\mathcal{O}(\tau)$, $w_{max}=\mathcal{O}(\tau)$ in (\ref{28equa}) and the dynamics of $\mathbf{L}(t)$ in (\ref{28equa}). Furthermore, $\Pr\left[\gamma(t)=1 \right]=1-\exp\big( -\frac{ p(t) \tau |h(t)|^2}{\kappa(R) B_W} \big)=\mathcal{O}(\tau)$. Therefore, $\mathbb{E}\big[\mathbf{e}^T(\mathbf{L})\boldsymbol{\Psi}^{-T}\boldsymbol{\Psi}^{-1}\mathbf{e}(\mathbf{L})\tau   \Pr\left[\gamma(t)=1 \right]\big|\boldsymbol{\chi}\big]  =\mathcal{O}\big(\frac{\tau^4}{2^{2R_{min}}}\big)$.} According to  (\ref{plantain}), we have  $\mathbf{F}=\mathbf{I} +\widetilde{\mathbf{F}}\tau+\mathcal{O}(\tau^2)$, $ \mathbf{W}=\widetilde{\mathbf{W}}\tau+\mathcal{O}(\tau^2)$. Together with $\Pr\left[\gamma(t)=0 \right]=1+\mathcal{O}\left(\tau\right) $. Hence, (\ref{45cal}) can be simplified as 
\begin{align}		\label{veryimp1}
	\bs\mathbb{E}\left[(\boldsymbol{\Delta}')^T\mathbf{S} (\boldsymbol{\Delta}')\tau \big| \boldsymbol{\chi} \right]= \boldsymbol{\Delta}^T \mathbf{S} \boldsymbol{\Delta}\tau +\mathcal{O}(\tau^2)  +\mathcal{O}\Big(\frac{\tau^4}{2^{2R_{min}}}\Big)\bsc
\end{align}

\txtblue{\emph{b. Calculation of the expectation involving the transition kernel:}  Substituting  $V\in \mathcal{C}^2$ that satisfies the PDE in (\ref{bellman2}) into the R.H.S. of  the Bellman equation in (\ref{OrgBel}), we calculate the expectation involving the transition kernel as follows:
\bs\begin{align}
	&\mathbb{E}\big[\sum_{\boldsymbol{\chi} '}\Pr\left[\boldsymbol{\chi}'\big| \boldsymbol{\chi}, p  \right] V \left(\boldsymbol{\chi}'\right)\big|  \boldsymbol{\chi}\big]\notag	 \\
	=& \mathbb{E}\Big[V( -\boldsymbol{\Psi}^{-1}\mathbf{e}(\mathbf{L}), \alpha', \mathbf{H}\boldsymbol{\Psi}, \boldsymbol{\Gamma}\mathbf{F}_{\mathbf{R}}\mathbf{L}+w_{max}\|\mathbf{H}\boldsymbol{\Psi}\|\mathbf{1})\notag \\
	&\Pr\left[\gamma(t)=1 \right]  + V(\mathbf{F}\boldsymbol{\Delta} +\mathbf{w}', \alpha', \mathbf{H}\boldsymbol{\Psi},\boldsymbol{\Gamma}\mathbf{L}+w_{max}\|\mathbf{H}\boldsymbol{\Psi}\|\mathbf{1})  \notag \\
	& \Pr\left[\gamma(t)=0 \right] \Big|\boldsymbol{\chi}\Big]\label{47equal}	
\end{align}\bsc We then handle (\ref{47equal}). Since $V \in \mathcal{C}^2$, we do the Taylor expansion\footnote{\txtblue{Note that although the optimal value function $V^\ast(\boldsymbol{\chi})$ may not be $\mathcal{C}^2$, the proof just requires the approximate value function $V(\boldsymbol{\chi})$ to be $\mathcal{C}^2$. In other words, we are seeking a $\mathcal{C}^2$ approximation of $V^\ast(\boldsymbol{\chi})$ with asymptotically vanishing errors for small $\tau$.}} of  the first term in (\ref{47equal}), and we have
\bs\begin{align}
	&\mathbb{E}\big[V( -\boldsymbol{\Psi}^{-1}\mathbf{e}(\mathbf{L}), \alpha', \mathbf{H}\boldsymbol{\Psi}, \boldsymbol{\Gamma}\mathbf{F}_{\mathbf{R}}\mathbf{L}+w_{max}\|\mathbf{H}\boldsymbol{\Psi}\|\mathbf{1})\notag \\
	&\cdot \Pr\left[\gamma(t)=1 \right]  \big|\boldsymbol{\chi}\big] \label{sub1}\\
	=&\mathbb{E}\Big[\Pr\left[\gamma(t)=1 \right] \big(V(\mathbf{0}, \alpha, \boldsymbol{\Psi}, \mathbf{L})-\nabla_{\boldsymbol{\Delta}}^TV(\mathbf{0}, \alpha, \boldsymbol{\Psi}, \mathbf{L})\boldsymbol{\Psi}^{-1}\mathbf{e}(\mathbf{L})\notag \\
	&+\mathcal{O}\big(\|\boldsymbol{\Psi}^{-1}\mathbf{e}(\mathbf{L})\|^2 \big)+\nabla_{\alpha}V(\mathbf{0}, \alpha, \boldsymbol{\Psi}, \mathbf{L})(\alpha'-\alpha)\notag  \\
	& +\frac{1}{2}\nabla_{\alpha}^2 V(\mathbf{0}, \alpha, \boldsymbol{\Psi}, \mathbf{L})(\alpha'-\alpha)^2  +\mathcal{O}\big((\alpha'-\alpha)^3 \big)\notag  \\
	& + \text{Tr}\Big(\frac{\partial V(\mathbf{0}, \alpha, \boldsymbol{\Psi}, \mathbf{L})}{\partial \boldsymbol{\Psi}}(\mathbf{H}\boldsymbol{\Psi}-\boldsymbol{\Psi})\Big)+\mathcal{O}\big(\|\mathbf{H}\boldsymbol{\Psi}-\boldsymbol{\Psi}\|^2\big)\notag \\
	& + \nabla^T_{\mathbf{L}}V(\mathbf{0}, \alpha, \boldsymbol{\Psi}, \mathbf{L})\big(\boldsymbol{\Gamma}\mathbf{F}_{\mathbf{R}}\mathbf{L}+w_{max}\|\mathbf{H}\boldsymbol{\Psi}\|\mathbf{1}-\mathbf{L}\big)\notag \\
	& +\mathcal{O}\big(\|\boldsymbol{\Gamma}\mathbf{F}_{\mathbf{R}}\mathbf{L}+w_{max}\|\mathbf{H}\boldsymbol{\Psi}\|\mathbf{1}-\mathbf{L}\|^2\big)\big)\big|\boldsymbol{\chi}\Big]\notag \\
	\overset{(d)}{=}&V(\mathbf{0}, \alpha, \boldsymbol{\Psi}, \mathbf{L}) \frac{ p \tau \alpha}{\kappa(R) B_W} +\mathcal{O}(\tau^2)  +\mathcal{O}\Big(\frac{\tau^3}{2^{2R_{min}}}\Big)\notag 
\end{align}\bsc where (d) is because of the following: 1) $\Pr\big[\gamma(t)=1 \big]  =\mathcal{O}(\tau)$, 2)  $\mathbb{E}\big[\mathbf{e}(\mathbf{L})|\big|\boldsymbol{\chi}\big]=0$ and $\mathcal{O}\big(\|\boldsymbol{\Psi}^{-1}\mathbf{e}(\mathbf{L})\|^2 \big)=\mathcal{O}\big(\frac{\tau^2}{2^{2R_{min}}}\big)$, 3) $\mathbb{E}\big[\alpha'-\alpha \big|\boldsymbol{\chi}\big]=(2 \widetilde{a}\alpha+2 \widetilde{a})\tau=\mathcal{O}(\tau)$, $\mathbb{E}\big[(\alpha'-\alpha)^2 \big|\boldsymbol{\chi}\big]=8\widetilde{a}\alpha\tau $ and $\mathcal{O}\big((\alpha'-\alpha)^3 \big)=\mathcal{O}(\tau^2)$ under the dynamics of $h(t)$ in (\ref{chnain}), 4) $\mathbf{H}\boldsymbol{\Psi}-\boldsymbol{\Psi}=\mathcal{O}(\tau)$ and $\|\mathbf{H}\boldsymbol{\Psi}-\boldsymbol{\Psi}\|=\mathcal{O}(\tau^2)$ according to Theorem 4.1 of \cite{rate1} and $\mathbf{F}=\mathbf{I} +\widetilde{\mathbf{F}}\tau+\mathcal{O}\big(\tau^2\big)$ in (\ref{plantain}), and 5) $\boldsymbol{\Gamma}\mathbf{F}_{\mathbf{R}}\mathbf{L}+w_{max}\|\mathbf{H}\boldsymbol{\Psi}\|\mathbf{1}-\mathbf{L}=\mathcal{O}(\tau)$ and $\|\boldsymbol{\Gamma}\mathbf{F}_{\mathbf{R}}\mathbf{L}+w_{max}\|\mathbf{H}\boldsymbol{\Psi}\|\mathbf{1}-\mathbf{L}\|^2=\mathcal{O}(\tau^2)$ according to (c) of (\ref{51equation}).}

\txtblue{Using the same calculations  in  (d) of (\ref{sub1}), we do the second order Taylor expansion of  the second term in (\ref{47equal}) as follows:
\bs\begin{align}
	&\mathbb{E}\big[ V(\mathbf{F}\boldsymbol{\Delta} +\mathbf{w}', \alpha', \mathbf{H}\boldsymbol{\Psi},\boldsymbol{\Gamma}\mathbf{L}+w_{max}\|\mathbf{H}\boldsymbol{\Psi}\|\mathbf{1})\notag \\
	&\cdot \Pr\left[\gamma(t)=0 \right] \big|\boldsymbol{\chi}\big]\label{sub2}  \\
	=&V(\boldsymbol{\Delta}, \alpha, \boldsymbol{\Psi},\mathbf{L})-V(\boldsymbol{\Delta}, \alpha, \boldsymbol{\Psi},\mathbf{L})\frac{ p \tau \alpha}{\kappa(R) B_W}\notag  \\
	& +\nabla_{\boldsymbol{\Delta}}^TV(\boldsymbol{\Delta}, \alpha, \boldsymbol{\Psi},\mathbf{L}) \widetilde{\mathbf{F}} \boldsymbol{\Delta} \tau +\frac{1}{2}\text{Tr}\left( \nabla_{\boldsymbol{\Delta}}^2 V(\boldsymbol{\Delta}, \alpha, \boldsymbol{\Psi},\mathbf{L} )\widetilde{\mathbf{W}}\right) \tau \notag \\
	& +\frac{\partial V(\boldsymbol{\Delta}, \alpha, \boldsymbol{\Psi},\mathbf{L})}{\partial \alpha} \left(2 \widetilde{a} \alpha +2 \widetilde{a}  \right)\tau   +\frac{\partial^2V(\boldsymbol{\Delta}, \alpha, \boldsymbol{\Psi},\mathbf{L})}{\partial \alpha^2}4\widetilde{a}\alpha\tau \notag \\
	& + \text{Tr}\Big(\frac{\partial V(\boldsymbol{\Delta}, \alpha, \boldsymbol{\Psi}, \mathbf{L})}{\partial \boldsymbol{\Psi}}(\mathbf{H}\boldsymbol{\Psi}-\boldsymbol{\Psi})\Big)  + \nabla^T_{\mathbf{L}}V(\boldsymbol{\Delta}, \alpha, \boldsymbol{\Psi}, \mathbf{L})\notag \\
	&\cdot \big(\boldsymbol{\Gamma}\mathbf{F}_{\mathbf{R}}\mathbf{L}+w_{max}\|\mathbf{H}\boldsymbol{\Psi}\|\mathbf{1}-\mathbf{L}\big)+\mathcal{O}(\tau^2) \notag
\end{align}\bsc}Substituting (\ref{sub1}) and (\ref{sub2}) into (\ref{47equal}), and together with (\ref{veryimp1}), we obtain $ T_{\boldsymbol{\chi}}(\theta, V, p)=T_{\boldsymbol{\chi}}^\dagger(\theta, V, p)+\mathcal{O}(\tau)  +\mathcal{O}\Big(\frac{\tau^2}{2^{2R_{min}}}\Big)$. \end{proof}

\vspace{0.1cm}
\hspace{-0.3cm} \emph{B. Growth Rate of $T_{\boldsymbol{\chi}}(\theta, V, p)$}

Denote 
\begin{align}
	\hspace{-0.5cm}T_{\boldsymbol{\chi}}(\theta, V)=\min_{  p} T_{\boldsymbol{\chi}}(\theta, V, p), \ T_{\boldsymbol{\chi}}^\dagger(\theta, V)=\min_{  p} T_{\boldsymbol{\chi}}^\dagger(\theta, V, p)	\label{zerofuncsa}		
\end{align}
Suppose $(\theta^\ast,V^\ast)$ satisfies the Bellman equation in (\ref{OrgBel}) and $(\theta,V)$ satisfies the PDE in (\ref{bellman2}). We have for any $\boldsymbol{\chi}$,
\begin{align}
	T_{\boldsymbol{\chi}}(\theta^\ast, V^\ast)=0, \quad T_{\boldsymbol{\chi}}^\dagger(\theta, V)=0	\label{zerofunc}
\end{align}
Then, we establish the following lemma.
\txtblue{\begin{Lemma}	\label{applemma}
$\big|T_{\boldsymbol{\chi}}  (\theta, V)\big|=\mathcal{O}(\tau)  +\mathcal{O}\Big(\frac{\tau^2}{2^{2R_{min}}}\Big)$,  $\forall {\boldsymbol{\chi}} $.
\end{Lemma}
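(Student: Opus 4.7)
The plan is to deduce the lemma directly from Lemma \ref{applema} together with the second identity in (\ref{zerofunc}), by showing that the $\min$ operator preserves uniform error estimates.

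First, I would observe that the error analysis carried out in the proof of Lemma \ref{applema} is in fact \emph{uniform} in the control $p \in [0, p_{max}]$. The only place $p$ enters the calculation is through the term $\Pr[\gamma(t)=1] = 1 - \exp\bigl(-\tfrac{p\tau\alpha}{\kappa(R)B_W}\bigr)$, which is uniformly $\mathcal{O}(\tau)$ for $p \in [0, p_{max}]$ and any fixed $\alpha$ (more precisely, $\Pr[\gamma(t)=1] \leq \tfrac{p_{max}\tau\alpha}{\kappa(R) B_W}$). Consequently, the Taylor-expansion remainders estimated in (\ref{veryimp1}), (\ref{sub1}) and (\ref{sub2}) can all be bounded by a single $\mathcal{O}(\tau) + \mathcal{O}(\tau^2/2^{2R_{min}})$ term that does not depend on the choice of $p$. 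Hence Lemma \ref{applema} upgrades to the statement
\begin{align}
\sup_{p\in[0,p_{max}]} \bigl| T_{\boldsymbol{\chi}}(\theta,V,p) - T_{\boldsymbol{\chi}}^{\dagger}(\theta,V,p) \bigr| = \mathcal{O}(\tau) + \mathcal{O}\Bigl(\tfrac{\tau^2}{2^{2R_{min}}}\Bigr). \notag
\end{align}

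Next, I would use the elementary fact that $|\min_x f(x) - \min_x g(x)| \leq \sup_x |f(x) - g(x)|$ whenever both minima exist. Applied to $f(p) = T_{\boldsymbol{\chi}}(\theta,V,p)$ and $g(p) = T_{\boldsymbol{\chi}}^{\dagger}(\theta,V,p)$ on $[0,p_{max}]$, this yields
\begin{align}
\bigl| T_{\boldsymbol{\chi}}(\theta,V) - T_{\boldsymbol{\chi}}^{\dagger}(\theta,V) \bigr| = \mathcal{O}(\tau) + \mathcal{O}\Bigl(\tfrac{\tau^2}{2^{2R_{min}}}\Bigr). \notag
\end{align}
Finally, because $(\theta,V)$ is assumed to satisfy the PDE (\ref{bellman2}), the second equation in (\ref{zerofunc}) gives $T_{\boldsymbol{\chi}}^{\dagger}(\theta,V)=0$, and the lemma follows.

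The only genuine technical point is the uniformity of the error in $p$, which is the main obstacle. To discharge it cleanly I would revisit each $\mathcal{O}(\cdot)$ term produced in the proof of Lemma \ref{applema} and verify that its implicit constant depends only on the fixed quantities $p_{max}, \kappa(R), B_W, \widetilde{\mathbf{F}}, \widetilde{\mathbf{W}}, \widetilde{a}, w_{max}$ and on $\boldsymbol{\chi}$ (via smoothness of $V\in \mathcal{C}^2$), but not on $p$. Once this uniformity is recorded, the passage to the min is automatic and no additional probabilistic or PDE machinery is required.
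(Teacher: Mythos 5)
Your proof is correct and follows essentially the same route as the paper's: the paper also establishes the result by a two-sided bound on $T_{\boldsymbol{\chi}}(\theta,V)$ against $\min_p T_{\boldsymbol{\chi}}^\dagger(\theta,V,p)$ using Lemma~\ref{applema} at $p^\dagger = \arg\min_p T_{\boldsymbol{\chi}}^\dagger(\theta,V,p)$ and at the optimizer of $T_{\boldsymbol{\chi}}$, then invokes $T_{\boldsymbol{\chi}}^\dagger(\theta,V)=0$, which is precisely your inequality $|\min_p f - \min_p g| \le \sup_p |f-g|$ unpacked. Your explicit remark that the $\mathcal{O}(\cdot)$ constants in Lemma~\ref{applema} are uniform over $p\in[0,p_{max}]$ (since $p$ enters only through $\Pr[\gamma=1]\le p_{max}\tau\alpha/(\kappa(R)B_W)$) is left implicit in the paper but is genuinely needed for either version of the argument, so spelling it out tightens rather than changes the proof.
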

\begin{proof}	[Proof of Lemma \ref{applemma}]
For any $\boldsymbol{\chi}$, we have \bs$T_{\boldsymbol{\chi}} (\theta, V)=\min_{ p}\big[ T_{\boldsymbol{\chi}}^\dagger(\theta, V, p)+\mathcal{O}(\tau)  +\mathcal{O}\big(\frac{\tau^2}{2^{2R_{min}}}\big)\big]  \geq \min_{ p} T_{\boldsymbol{\chi}}^\dagger(\theta, V, p)  + \mathcal{O}(\tau)  +\mathcal{O}\big(\frac{\tau^2}{2^{2R_{min}}}\big)$\bsc. On the other hand, \bs$T_{\boldsymbol{\chi}} (\theta, V) \leq \min_{ p} T_{\boldsymbol{\chi}}^\dagger(\theta, V, p) + \mathcal{O}(\tau)  +\mathcal{O}\big(\frac{\tau^2}{2^{2R_{min}}}\big)$\bsc, where \ $p^\dagger= \arg \min_{p} T_{\boldsymbol{\chi}}^\dagger(\theta, V, p) $. From (\ref{zerofuncsa}) and (\ref{zerofunc}), \bs$\min_{ p} T_{\boldsymbol{\chi}}^\dagger(\theta, V, p)=T_{\boldsymbol{\chi}}^\dagger(\theta, V)=0$\bsc. Therefore, \bs$\big|T_{\boldsymbol{\chi}}  (\theta, V)\big|=\mathcal{O}(\tau)  +\mathcal{O}\big(\frac{\tau^2}{2^{2R_{min}}}\big)$\bsc.\end{proof}}

\vspace{0.1cm}
\hspace{-0.3cm} \emph{C. Difference between $V^\ast\left(\boldsymbol{\chi}\right)$ and $V\left(\boldsymbol{\chi}\right)$}

We then prove the following Lemma:
\begin{Lemma}		\label{tenlemma}
	Suppose $T_{\boldsymbol{\chi}}(\theta^\ast, V^\ast) = 0$ for all $\boldsymbol{\chi}$ together with the transversality condition in (\ref{transodts})  has a unique solution $(\theta^*, V^\ast)$. If $T_{\boldsymbol{\chi}}^\dagger(\theta, V)=0$ and $V(\boldsymbol{\chi})=\mathcal{O}(\left\|\boldsymbol{\Delta}\right\|^2)$,  then $V^\ast\left(\boldsymbol{\chi}\right)-V\left(\boldsymbol{\chi}\right)=\mathcal{O}(\tau)  +\mathcal{O}\big(\frac{\tau^2}{2^{2R_{min}}}\big)$ for all  $\boldsymbol{\chi} $.
\end{Lemma}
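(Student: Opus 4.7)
The plan is to exploit Lemma \ref{applemma}, which states that $V$ approximately satisfies the exact Bellman operator equation $T_{\boldsymbol{\chi}}(\theta,V)=0$ with a uniform residual of order $\mathcal{O}(\tau)+\mathcal{O}(\tau^2/2^{2R_{min}})$, and then invoke the assumed uniqueness of the exact solution $(\theta^\ast,V^\ast)$ to conclude that an approximate solution must lie within the same order of the exact one.

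First, I would verify that the pair $(\theta,V)$ satisfies the transversality condition in (\ref{transodts}). Since $V(\boldsymbol{\chi})=\mathcal{O}(\|\boldsymbol{\Delta}\|^2)$ by hypothesis and any admissible power control policy (Definition \ref{admisscontrolpol}) yields $\sup_t \mathbb{E}[\|\boldsymbol{\Delta}(t)\|^2]<\infty$, together with uniform bounds on $\alpha(t)$, $\boldsymbol{\Psi}(t)$, $\mathbf{L}(t)$ from the stability of the quantizer, one obtains $\lim_{T\to\infty}\frac{1}{T}\mathbb{E}[V(\boldsymbol{\chi}(T))\,|\,\boldsymbol{\chi}(0)]=0$, so $V$ is a legitimate candidate for the Bellman-with-transversality problem.

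Next, set $\epsilon(\boldsymbol{\chi})\triangleq T_{\boldsymbol{\chi}}(\theta,V)$. Under the policy $\widetilde{\Omega}_p^\ast$ attaining the minimum in this operator, $V$ satisfies the per-slot identity $V(\boldsymbol{\chi})=\mathbb{E}[(c(\boldsymbol{\Delta},\widetilde{p}^\ast)-\theta-\epsilon(\boldsymbol{\chi}))\tau + V(\boldsymbol{\chi}')\,|\,\boldsymbol{\chi}]$. Iterating this identity along trajectories and combining it with the Bellman inequality for $V^\ast$ evaluated under $\widetilde{\Omega}_p^\ast$ produces a one-sided bound of the form $V(\boldsymbol{\chi})-V^\ast(\boldsymbol{\chi})\geq -C(\tau+\tau^2/2^{2R_{min}})$. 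Substituting the optimal policy $\Omega_p^\ast$ into the iterated equation for $V$ yields the reverse bound. The uniqueness of $(\theta^\ast,V^\ast)$ is precisely what prevents any alternative differential value function from absorbing the residual, so the two-sided bounds pinch $V$ to $V^\ast$ at the stated order.

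The main obstacle is controlling the time-accumulated residual: naively summing $T$ per-slot residuals of size $\mathcal{O}(\tau)+\mathcal{O}(\tau^2/2^{2R_{min}})$ gives a bound that grows linearly in $T$. The resolution is to split the comparison into two stages: first, dividing the iterated identity by $T\tau$ and letting $T\to\infty$ (using transversality on both $V$ and $V^\ast$) produces $|\theta^\ast-\theta|=\mathcal{O}(\tau)+\mathcal{O}(\tau^2/2^{2R_{min}})$ with no accumulation, and only afterward does one compare the bounded differential parts at finite horizon to extract the pointwise bound on $V^\ast-V$. A secondary subtlety is that the two $\min_p$ operators in the Bellman equations for $V$ and $V^\ast$ do not commute with subtraction, so an optimality-swap argument (evaluating each value function at the other's minimizer) is needed instead of subtracting the min-operators term-by-term.
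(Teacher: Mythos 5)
Your transversality check is the same as the paper's, but after that the routes diverge. The paper's proof of Lemma \ref{tenlemma} is a brief contradiction-via-uniqueness argument: suppose some $\boldsymbol{\chi}'$ has $V(\boldsymbol{\chi}')-V^\ast(\boldsymbol{\chi}')\to c\neq 0$ as $\tau\to 0$; by Lemma \ref{applemma} the residual $T_{\boldsymbol{\chi}}(\theta,V)$ vanishes in that limit, so $(\theta,V)$ would be a second solution of the exact Bellman equation together with the transversality condition, contradicting the assumed uniqueness of $(\theta^\ast,V^\ast)$; hence $c=0$ and the claimed order follows. There is no trajectory iteration and no optimality-swap step in the paper's argument for this lemma — the uniqueness hypothesis carries the entire weight.

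Your constructive route is genuinely different, and its final step has a gap. The first stage is fine: dividing the iterated identity by $T\tau$, letting $T\to\infty$, and using transversality for both $V$ and $V^\ast$ does yield $|\theta-\theta^\ast|=\mathcal{O}(\tau)+\mathcal{O}(\tau^2/2^{2R_{min}})$ without accumulation, because the terminal terms die after division by $T$. But the second stage — ``compare the bounded differential parts at finite horizon to extract the pointwise bound on $V^\ast-V$'' — is circular. The iterated comparison at horizon $T$ leaves you with a right-hand side containing both the accumulated term $\sum_{t=0}^{T-1}(\theta^\ast-\theta-\epsilon(\boldsymbol{\chi}(t)))\tau$, which scales like $T\tau\cdot(\mathcal{O}(\tau)+\mathcal{O}(\tau^2/2^{2R_{min}}))$ and hence still grows in $T$, and the terminal term $\mathbb{E}[V(\boldsymbol{\chi}(T))-V^\ast(\boldsymbol{\chi}(T))\,|\,\boldsymbol{\chi}(0)]$, which is exactly the object you are trying to bound. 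Picking $T\sim 1/\tau$ keeps the accumulated sum at the right order but does not detach the terminal term from the unknown. To close this you would need a separate quantitative ingredient — a weighted-sup-norm contraction or ergodicity estimate of the kind the paper invokes in Appendix E to prove Theorem \ref{perfgap}, forcing $\mathbb{E}[V(\boldsymbol{\chi}(T))-V^\ast(\boldsymbol{\chi}(T))]$ to relax toward a state-independent constant at a controlled rate — or you must fall back on the uniqueness hypothesis exactly as the paper does.
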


%applemuseful

\begin{proof}	[Proof of Lemma \ref{tenlemma}]
Using $V(\boldsymbol{\chi})=\mathcal{O}(\left\|\boldsymbol{\Delta}\right\|^2)$ and definition \ref{admisscontrolpol}, we have $\lim_{t \rightarrow \infty}\mathbb{E}^{\Omega_p} \left[V(\boldsymbol{\chi})\right]<\infty$ for any admissible policy $\Omega_p$. Then, we have $\lim_{T\rightarrow \infty} \frac{1}{T}\mathbb{E}^{\Omega_p}\left[ V\left(\boldsymbol{\chi}(T)\right)|\boldsymbol{\chi} \left(0 \right)\right]= 0$ and the transversality condition in (\ref{transodts}) is satisfied for $V(\boldsymbol{\chi})$.

	Suppose for some $\boldsymbol{\chi}'$, we have $V\left(\boldsymbol{\chi}' \right)=V^\ast\left(\boldsymbol{\chi}' \right)+c$ for some $c \neq 0$ as $\tau \rightarrow 0$.  Let $\tau \rightarrow 0$. From Lemma \ref{applemma}, we have $(\theta, V)$ satisfies $T_{\boldsymbol{\chi}}(\theta, V)= 0$ for all $\boldsymbol{\chi}$ and   the transversality condition in (\ref{transodts}). However, $V\left(\boldsymbol{\chi}' \right) \neq V^\ast\left(\boldsymbol{\chi}' \right)$ because of the assumption that $V\left(\boldsymbol{\chi}' \right)=V^\ast\left(\boldsymbol{\chi}' \right)+c$. This contradicts  the condition that $(\theta^*, V^\ast)$ is a unique solution of $T_{\boldsymbol{\chi}}(\theta^\ast, V^\ast)=0$ for all $\boldsymbol{\chi}$  and the transversality condition in (\ref{transodts}). Hence, we must have $V\left(\boldsymbol{\chi} \right)-V^\ast\left(\boldsymbol{\chi} \right)=\mathcal{O}(\tau)  +\mathcal{O}\big(\frac{\tau^2}{2^{2R_{min}}}\big)$ for all  $\boldsymbol{\chi} $.\end{proof}

\vspace{-0.2cm}

\section*{Appendix E: Proof of Theorem \ref{perfgap}}

We  calculate the performance under policy $\widetilde{\Omega}_p^\ast$ as follows:
\bs\begin{align}
	&  \widetilde{\theta}^\ast\tau  =\mathbb{E}^{\widetilde{\Omega}_p^\ast}\big[\mathbb{E}\big[\big((\boldsymbol{\Delta}')^T\mathbf{S} (\boldsymbol{\Delta}') + \lambda p \big)\tau\big]\big| \boldsymbol{\chi}\big]\label{finalexpr}  \\
	\overset{(a)}=& \mathbb{E}^{\widetilde{\Omega}_p^\ast}\big[\mathbb{E}\big[\big((\boldsymbol{\Delta}')^T\mathbf{S} (\boldsymbol{\Delta}') + \lambda p \big)\tau\notag \\
	& +\sum_{\boldsymbol{\chi} '} {\Pr}\big[ \boldsymbol{\chi} '| (\boldsymbol{\chi},  \widetilde{\Omega}_p^\ast\left((\boldsymbol{\chi}  \right)\big]V \left(\boldsymbol{\chi}  '\right)  - V \left(\boldsymbol{\chi}  \right)   \big| \boldsymbol{\chi}\big]\big] 	\notag \\
	\overset{(b)}=& \mathbb{E}^{\widetilde{\Omega}_p^\ast}\Big[T_{\boldsymbol{\chi}}^\dagger(\theta, V, p)\tau+\theta \tau+\txtblue{\mathcal{O}(\tau^2)  +\mathcal{O}\Big(\frac{\tau^3}{2^{2R_{min}}}\Big)} \Big]\notag 
\end{align}\bsc where ${\Pr}\big[ \boldsymbol{\chi} '|\boldsymbol{\chi},  \widetilde{\Omega}_p^\ast\left(\boldsymbol{\chi}\right)\big]$ is the  transition kernel under policy $\widetilde{\Omega}_p^\ast$. (a) is due to 1) $\mathbb{E}^{\widetilde{\Omega}_p^\ast}\big[V(\boldsymbol{\chi} ) \big]<\infty$  (according to $V(\boldsymbol{\chi})=\mathcal{O}(\left\|\boldsymbol{\Delta}\right\|^2)$ and Definition \ref{admisscontrolpol}, and  2) \bs$ \mathbb{E}^{\widetilde{\Omega}_p^\ast}\big[\sum_{\boldsymbol{\chi} '} \mathbb{E} [ {\Pr}\big[ \boldsymbol{\chi} '\big| \boldsymbol{\chi},    \widetilde{\Omega}_p^\ast\left(\boldsymbol{\chi}  \right)\big]\big| \boldsymbol{\chi}]V \left(\boldsymbol{\chi}  '\right)   \big]=\mathbb{E}^{\widetilde{\Omega}_p^\ast}\big[\mathbb{E}^{\widetilde{\Omega}_p^\ast}\big[V(\boldsymbol{\chi} ') \big|\boldsymbol{\chi}  \big]\big]=\mathbb{E}^{\widetilde{\Omega}_p^\ast}\big[V(\boldsymbol{\chi} ) \big]$\bsc, and $(b)$ is due to the calculations in (\ref{47equal}).

Following the notation in  Appendix D, we define two mappings: $T_{\boldsymbol{\chi}}^\dagger( V, p) = T_{\boldsymbol{\chi}}^\dagger(\theta, V, p) + \theta $, $T_{\boldsymbol{\chi}}(V, p)=T_{\boldsymbol{\chi}}^\dagger(V, p)+\nu  G_{\boldsymbol{\chi}}(V,p)$. Let $\Omega_p^\ast$ be the optimal policy solving the discrete time Bellman equation in (\ref{OrgBel}). Then we have
\begin{align}
	T_{\boldsymbol{\chi}}({V^\ast}, \Omega_p^\ast(\boldsymbol{\chi}))= \theta^\ast, \quad \forall \boldsymbol{\chi} 	\label{asdadadasd}
\end{align}
Furthermore, we have
\begin{align}	\label{minach}
	T_{\boldsymbol{\chi}}^\dagger( V, \widetilde{\Omega}_p^\ast(\boldsymbol{\chi} ))= \min_{\Omega_p(\boldsymbol{\chi} )} T_{\boldsymbol{\chi}}^\dagger( V, \Omega_p(\boldsymbol{\chi})), \quad \forall \boldsymbol{\chi}	
\end{align}
Dividing $\tau$ on both sizes of (\ref{finalexpr}), we obtain
\begin{align}
	&\widetilde{\theta}^\ast =\mathbb{E}^{\widetilde{\Omega}_p^\ast}\big[T_{\boldsymbol{\chi}}^\dagger(V, \widetilde{\Omega}_p^\ast(\boldsymbol{\chi} ))+ \txtblue{\mathcal{O}(\tau)  +\mathcal{O}\big(\frac{\tau^2}{2^{2R_{min}}}\big)}\big] 	\notag \\
	\overset{(c)} \leq & \mathbb{E}^{\widetilde{\Omega}_p^\ast}\big[T_{\boldsymbol{\chi}}^\dagger(V, {\Omega}_p^\ast(\boldsymbol{\chi} ))+ \txtblue{\mathcal{O}(\tau)  +\mathcal{O}\big(\frac{\tau^2}{2^{2R_{min}}}\big)}\big] \notag \\
	 \overset{(d)}{=} & \mathbb{E}^{\widetilde{\Omega}_p^\ast}\big[T_{\boldsymbol{\chi}}(V, {\Omega}_p^\ast(\boldsymbol{\chi} ))+ \txtblue{\mathcal{O}(\tau)  +\mathcal{O}\big(\frac{\tau^2}{2^{2R_{min}}}\big)}\big] 	\notag \\
	\overset{(e)}=&  \mathbb{E}^{\widetilde{\Omega}_p^\ast}\big[T_{\boldsymbol{\chi}}(V, {\Omega}_p^\ast(\boldsymbol{\chi} ))-T_{\boldsymbol{\chi}}(V^\ast, {\Omega}_p^\ast(\boldsymbol{\chi} ))+\theta^\ast \notag \\
	& + \txtblue{\mathcal{O}(\tau)  +\mathcal{O}\big(\frac{\tau^2}{2^{2R_{min}}}\big)}\big] 	\label{finaleeeee}
\end{align} where (c) is due to (\ref{minach}), (d) is due to Lemma \ref{applema},  and  (e) is due to (\ref{asdadadasd}).  \txtblue{Then, from (\ref{finaleeeee}),   we have
\bs\begin{align}
	&\widetilde{\theta}^\ast -  \theta^\ast  \notag \\
	\leq & \mathbb{E}^{\widetilde{\Omega}_p^\ast}\Big[T_{\boldsymbol{\chi}}(V, {\Omega}_p^\ast(\boldsymbol{\chi} ))-T_{\boldsymbol{\chi}}(V^\ast, {\Omega}_p^\ast(\boldsymbol{\chi} ))\Big]  + \mathcal{O}(\tau)  +\mathcal{O}\Big(\frac{\tau^2}{2^{2R_{min}}}\Big) \notag \\
	\overset{(f)}{\leq} & \gamma \mathbb{E}^{\widetilde{\Omega}_p^\ast}\Big[\omega(\boldsymbol{\chi})\|\mathbf{V}^\ast-\mathbf{V}\|_{\infty}^{\overline{\omega}}\Big]+ \mathcal{O}(\tau)  +\mathcal{O}\Big(\frac{\tau^2}{2^{2R_{min}}}\Big)\notag \\
	\overset{(g)}{=}& \gamma \mathbb{E}^{\widetilde{\Omega}_p^\ast}\Big[\omega(\boldsymbol{\chi})\Big(\mathcal{O}(\tau)  +\mathcal{O}\Big(\frac{\tau^2}{2^{2R_{min}}}\Big)\Big)\Big] + \mathcal{O}(\tau)  +\mathcal{O}\Big(\frac{\tau^2}{2^{2R_{min}}}\Big)\notag \\
	\overset{(h)}{=} & \mathcal{O}(\tau)  +\mathcal{O}\Big(\frac{\tau^2}{2^{2R_{min}}}\Big)
\end{align}\bsc where $(f)$ holds because
\begin{align}	\label{comtracmapp}
	\|T_{\boldsymbol{\chi}}(V, {\Omega}_p^\ast(\boldsymbol{\chi} ))-T_{\boldsymbol{\chi}}(V^\ast, {\Omega}_p^\ast(\boldsymbol{\chi} ))\|_{\infty}^{\overline{\omega}}\leq \gamma\|\mathbf{V}^\ast-\mathbf{V}\|_{\infty}^{\overline{\omega}}
\end{align}
with $\mathbf{V}^\ast=\{V^\ast(\boldsymbol{\chi}):\forall \boldsymbol{\chi}\}$ and $\mathbf{V}=\{V(\boldsymbol{\chi}):\forall \boldsymbol{\chi}\}$, for $0<\gamma<1$ according to Lemma 3 of \cite{hinidhsd} and $\|\cdot\|_{\infty}^{\overline{\omega}}$ is a weighted sup-norm with weights $\overline{\omega}=\{0<\omega(\boldsymbol{\chi})<1:\forall \boldsymbol{\chi}\}$ chosen according to the following rule (Lemma 3 of \cite{hinidhsd}): The state space w.r.t. $\boldsymbol{\chi}$ is partitioned into non-empty subsets $\mathcal{S}_1,\dots, \mathcal{S}_r$, in which for any $\boldsymbol{\chi}\in \mathcal{S}_n$ with $n=1,\dots, r$,  there exists some $\boldsymbol{\chi}'\in \mathcal{S}_1\cup \mathcal{S}_{n-1}$ such that $\Pr[\boldsymbol{\chi}'|\boldsymbol{\chi},{\Omega}_p^\ast(\boldsymbol{\chi}) ]>0$. Then, we let $\rho=\min \{\Pr[\boldsymbol{\chi}'|\boldsymbol{\chi},{\Omega}_p^\ast(\boldsymbol{\chi}) ]:\forall \boldsymbol{\chi},\boldsymbol{\chi}' \}$, and choose $\omega(\boldsymbol{\chi})=1-\rho^{2n}$ if $\boldsymbol{\chi}\in \mathcal{S}_n$ for $n=1,\dots, r$. Therefore, based on the contraction mapping property in (\ref{comtracmapp}) and the definition of the weighted sup-norm, we have
\bs\begin{align}
	& \frac{T_{\boldsymbol{\chi}}(V, {\Omega}_p^\ast(\boldsymbol{\chi} ))-T_{\boldsymbol{\chi}}(V^\ast, {\Omega}_p^\ast(\boldsymbol{\chi} ))}{\omega(\boldsymbol{\chi})}  \\
	\leq & \sup_{\boldsymbol{\chi} }\Big\{\frac{T_{\boldsymbol{\chi}}(V, {\Omega}_p^\ast(\boldsymbol{\chi} ))-T_{\boldsymbol{\chi}}(V^\ast, {\Omega}_p^\ast(\boldsymbol{\chi} ))}{\omega(\boldsymbol{\chi})}\Big\}\notag \\
	=& \|T_{\boldsymbol{\chi}}(V, {\Omega}_p^\ast(\boldsymbol{\chi} ))-T_{\boldsymbol{\chi}}(V^\ast, {\Omega}_p^\ast(\boldsymbol{\chi} ))\|_{\infty}^{\overline{\omega}}\leq \gamma\|\mathbf{V}^\ast-\mathbf{V}\|_{\infty}^{\overline{\omega}}\notag \\
	\Rightarrow &T_{\boldsymbol{\chi}}(V, {\Omega}_p^\ast(\boldsymbol{\chi} ))-T_{\boldsymbol{\chi}}(V^\ast, {\Omega}_p^\ast(\boldsymbol{\chi} ))\leq \gamma \omega(\boldsymbol{\chi})\|\mathbf{V}^\ast-\mathbf{V}\|_{\infty}^{\overline{\omega}}\notag
\end{align}\bsc This proves (f), and (g) is because $\|\mathbf{V}^\ast-\mathbf{V}\|_{\infty}^{\overline{\omega}}=\sup_{\boldsymbol{\chi}}\big\{\frac{|V^\ast(\boldsymbol{\chi})-V(\boldsymbol{\chi})|}{\omega(\boldsymbol{\chi})}\big\}=\mathcal{O}(\tau)  +\mathcal{O}\Big(\frac{\tau^2}{2^{2R_{min}}}\Big)$ according to Lemma \ref{perturbPDE}, and (h) is because $0<\omega(\boldsymbol{\chi})<1$ for all $\boldsymbol{\chi}$.}

\vspace{-0.2cm}

\section*{Appendix F: Proof of Lemma \ref{solpde}}
Since the optimization problem on the R.H.S. of (\ref{bellman2}) is a linear programming, the optimal power control that achieves this minimum can be directly obtained which is summarized in Theorem \ref{thmpower}.Substituting $p^\ast$, the PDE in (\ref{perturbPDE}) becomes 
\bs\begin{align}	
		&\theta= \boldsymbol{\Delta}^T \mathbf{S}\boldsymbol{\Delta} - \big[  \left(V(\boldsymbol{\chi})-V(\boldsymbol{\chi})\big|_{\boldsymbol{\Delta}=\mathbf{0}}   \right)\frac{  \alpha}{\kappa(R) B_W}-\lambda\big]^+ p_{max}    \notag \\
		&+\nabla_{\boldsymbol{\Delta}}^TV(\boldsymbol{\chi}) \widetilde{\mathbf{F}} \boldsymbol{\Delta}  +\frac{1}{2}\text{Tr}\big( \nabla_{\boldsymbol{\Delta}}^2 V(\boldsymbol{\chi} )\widetilde{\mathbf{W}}\big) +\frac{\partial V(\boldsymbol{\chi})}{\partial \alpha} \left(2 \widetilde{a} \alpha +2 \widetilde{a}  \right)  \notag \\
		& +\frac{\partial^2V^\ast(\boldsymbol{\chi})}{\partial \alpha^2}4\widetilde{a}\alpha + \text{Tr}\big(\frac{\partial V(\boldsymbol{\chi})}{\partial \boldsymbol{\Psi}}(\mathbf{H}\boldsymbol{\Psi}-\boldsymbol{\Psi})/\tau\big)\notag \\
	&+ \nabla^T_{\mathbf{L}}V(\boldsymbol{\chi})\big(\boldsymbol{\Gamma}\mathbf{F}_{\mathbf{R}}\mathbf{L}+w_{max}\|\mathbf{H}\boldsymbol{\Psi}\|\mathbf{1}-\mathbf{L}\big)/\tau
\label{pdeintheapp}
\end{align}\bsc

\hspace{-0.3cm}\emph{A. Solution of (\ref{pdeintheapp}) for small $\|\boldsymbol{\Delta}\|^2\alpha$}   
\vspace{0.1cm}

We solve the above PDE in (\ref{pdeintheapp})   when $\lambda >(V(\boldsymbol{\chi}) -V(\boldsymbol{\chi})\big|_{\boldsymbol{\Delta}=\mathbf{0}} ) \frac{  \alpha}{\kappa(R) B_W}$ (i.e., $p=0$). We will show   later that small $\|\boldsymbol{\Delta}\|^2\alpha$ leads to this case. For this case, (\ref{pdeintheapp})  becomes  
\bs\begin{align}	
		&\theta= \boldsymbol{\Delta}^T \mathbf{S}\boldsymbol{\Delta}+\nabla_{\boldsymbol{\Delta}}^TV(\boldsymbol{\chi}) \widetilde{\mathbf{F}} \boldsymbol{\Delta} +\frac{1}{2}\text{Tr}\big( \nabla_{\boldsymbol{\Delta}}^2 V(\boldsymbol{\chi} )\widetilde{\mathbf{W}}\big)     \notag \\
		&+\frac{\partial V(\boldsymbol{\chi})}{\partial \alpha} \left(2 \widetilde{a} \alpha +2 \widetilde{a}  \right)+  \frac{\partial^2V^\ast(\boldsymbol{\chi})}{\partial \alpha^2}4\widetilde{a}\alpha + \text{Tr}\big(\frac{\partial V(\boldsymbol{\chi})}{\partial \boldsymbol{\Psi}}(\mathbf{H}\boldsymbol{\Psi}-\boldsymbol{\Psi})/\tau\big) \notag \\
		&+ \nabla^T_{\mathbf{L}}V(\boldsymbol{\chi})\big(\boldsymbol{\Gamma}\mathbf{F}_{\mathbf{R}}\mathbf{L}+w_{max}\|\mathbf{H}\boldsymbol{\Psi}\|\mathbf{1}-\mathbf{L}\big)/\tau	\label{pde1wrding}
\end{align}\bsc The PDE is separable with solution of the form $V(\boldsymbol{\chi})=\boldsymbol{\Delta}^TA_{1,\mathbf{U}}(\alpha)\boldsymbol{\Delta}+b_1(\alpha)$ for some $A_{1,\mathbf{U}}(\alpha)$ and $b_1(\alpha)$. Substituting this form into (\ref{pde1wrding}), we obtain that 
\bs\begin{align}
	&\boldsymbol{\Delta}^T \Big[\mathbf{S}+ \left(2 \widetilde{a} \alpha + 2 \widetilde{a}\right)A_{1,\mathbf{U}}'(\alpha)+ 4\widetilde{a} \alpha A_{1,\mathbf{U}}''(\alpha) +\big(A_{1,\mathbf{U}}(\alpha) \notag \\
		&+A_{1,\mathbf{U}}^T(\alpha)\big) \widetilde{\mathbf{F}}  \Big] \boldsymbol{\Delta} + \big[\left(2 \widetilde{a} \alpha + 2 \widetilde{a}\right)b_1'(\alpha) +4\widetilde{a} \alpha b_1''(\alpha)\notag \\ 
		&+\frac{1}{2}\text{Tr}\big( \left(A_{1,\mathbf{U}}(\alpha)+A_{1,\mathbf{U}}^T(\alpha) \right)  \widetilde{\mathbf{W}} \big)-\theta \big]	=0\label{59veryimp}
\end{align}\bsc Let the eigenvalue decomposition of $\widetilde{\mathbf{F}}$  be $\widetilde{\mathbf{F}}=\mathbf{U}^{-1}\boldsymbol{\Gamma}_{\widetilde{\mathbf{F}}} \mathbf{U}$, where $\mathbf{U}$ is an $d \times d$  matrix and $\boldsymbol{\Gamma}_{\widetilde{\mathbf{F}}} =\text{diag}\left(\mu_1, \mu_2, \dots, \mu_d \right)$. Using the change of variable $\mathbf{Z}=\mathbf{U}\boldsymbol{\Delta}$ and denoting $A_{1,\mu}(\alpha)=(\mathbf{U}^{-1})^\dagger A_{1, \mathbf{U}}(\alpha) \mathbf{U}^{-1}$,  we require the following equation for (\ref{59veryimp}) to hold for any $\boldsymbol{\Delta}$:
\begin{align}	
	\mathbf{S}_U+ \left(2 \widetilde{a} \alpha + 2\widetilde{a}\right)A_{1,\mu}'(\alpha)& + 4 \widetilde{a}\alpha A_{1,\mu}''(\alpha)  \label{basedonequ} \\
		&+\left(A_{1,\mu}(\alpha)+A_{1,\mu}^T(\alpha)\right)\Gamma_{\widetilde{\mathbf{F}}} =0\notag
\end{align}
Let $ A_{1,\mu}(\alpha)=[a_{1,\mu, ij}]$. For the diagonal elements in (\ref{basedonequ}),  
\begin{align}	
	\mathbf{S}_{U,ii}+ \left(2 \widetilde{a} \alpha + 2 \widetilde{a}\right)a_{1,\mu,ii}'(\alpha)&+ 4 \widetilde{a}\alpha a_{1,\mu, ii}''(\alpha)\notag \\
	& +2 \mu_i a_{1,\mu,ii}(\alpha)=0 \label{8964equ}
\end{align}
Using the method of dominant balance (MDB) \cite{dominmethod}, the asymptotic solution of (\ref{8964equ}) is given by
\begin{align}	
	\hspace{-0.5cm} a_{1,\mu,ii}(\alpha) = \exp\Big[-\frac{ \mu_i}{\widetilde{a} \alpha}  -\frac{\mu_i}{\widetilde{a}}\Big(1-\frac{2\mu_i}{\widetilde{a}}\Big)\log \alpha+\mathcal{O}\left({\alpha} \right) \Big] \label{62equ}
\end{align}
as $\alpha\rightarrow 0$. For the non-diagonal elements $a_{1,\mu, ij}$ and $a_{1,\mu, ji}$, they satisfy the following coupled ODEs based on (\ref{basedonequ}):
\begin{align}
	\mathbf{S}_{U,ij}&+ \left(2 \widetilde{a} \alpha + 2\widetilde{a}\right)a_{1,\mu,ij}'(\alpha)+ 4 \widetilde{a}\alpha a_{1,\mu, ij}''(\alpha) \notag \\
	&+ (a_{1,\mu,ij}(\alpha)+a_{1,\mu,ji}(\alpha))\mu_j=0 \label{xxy}	\\
	\mathbf{S}_{U,ji}&+ \left(2 \widetilde{a} \alpha + 2\widetilde{a}\right)a_{1,\mu,ji}'(\alpha)+ 4 \widetilde{a}\alpha a_{1,\mu, ji}''(\alpha) \notag \\
	&+ (a_{1,\mu,ij}(\alpha)+a_{1,\mu,ji}(\alpha))\mu_i=0 \label{yyx}
\end{align}
Even though (\ref{xxy}) and (\ref{yyx}) are coupled, we can first obtain $a_{1,\mu,ij}+a_{1,\mu,ji}$ by solving the  ODE by adding  them  together. Then, we obtain either $a_{1,\mu,ij}$ or $a_{1,\mu,ji}$ by solving one of  them. The results for $i>j$ are given by
\bs\begin{align}
	a_{1,\mu,ij}= &\frac{\mu_j}{\mu_i + \mu_j}\exp\Big[-\frac{\mu_i+\mu_j}{2 \widetilde{a}\alpha}- \frac{\mu_i+\mu_j}{2\widetilde{a}}\Big(1-\frac{\mu_i+\mu_j}{\widetilde{a}}\Big)\notag \\
	&\cdot\log \alpha+\mathcal{O}\left({\alpha} \right)\Big], \label{56eq} \\
	a_{1,\mu,ji}= &\frac{\mu_i}{\mu_i + \mu_j}\exp\Big[-\frac{\mu_i+\mu_j}{2 \widetilde{a}\alpha}- \frac{\mu_i+\mu_j}{2\widetilde{a}}\Big(1-\frac{\mu_i+\mu_j}{\widetilde{a}}\Big)\notag \\
	&\cdot\log \alpha+\mathcal{O}\left({\alpha} \right)\Big],\label{57eq}
\end{align}\bsc  as $\alpha\rightarrow 0$.  Using (\ref{62equ}), (\ref{56eq}), (\ref{57eq}) and the relationship $a_{1,\mu}(\alpha)=(\mathbf{U}^{-1})^\dagger A_{1, \mathbf{U}}(\alpha) \mathbf{U}^{-1}$, we can obtain $A_{1, \mathbf{U}}(\alpha)$. For obtaining $b_1(\alpha)$, from (\ref{59veryimp}), we require
\begin{align}
	\left(2 \widetilde{a} \alpha + 2 \widetilde{a}\right)b_1'(\alpha)&+ 4 \widetilde{a}\alpha b_1''(\alpha)  \\
	&+\frac{1}{2}\text{Tr}\big( \big(a_{1,\mu}+a_{1,\mu}^T \big)\mathbf{U}  \widetilde{\mathbf{W}}\mathbf{U}^\dagger \big)-\theta =0\notag
\end{align}
Similarly, using the MDB approach, we  obtain the asymptotic solution of  $b_1(\alpha)$. Based on the above calculations, we summarize the  overall solution of this case  below: 
\begin{align}		\label{appeox1}
			\log\left(V(\boldsymbol{\chi})\right)= \log (\boldsymbol{\Delta}^TA_{1,\mathbf{U}}(\alpha)\boldsymbol{\Delta}+b_1(\alpha))+\mathcal{O}\left({\alpha}\right)
		\end{align}
		 where $A_{1,\mathbf{U}}(\alpha)=\mathbf{U}^\dagger A_{1,\mu} (\alpha)\mathbf{U} \in \mathbb{R}^{d\times d}$,  $A_{1,\mu}(\alpha)=[a_{1,\mu, ij}(\alpha)]$ is a $d\times d$ constant matrix with $a_{1,\mu, ii}(\alpha)=\exp\big[-\frac{ \mu_i}{\widetilde{a} \alpha}  -\frac{\mu_i}{\widetilde{a}}\big(1-\frac{2\mu_i}{\widetilde{a}}\big)\log \alpha \big]$, $a_{2,\mu, ij}(\alpha)= \frac{\mu_j}{\mu_i + \mu_j}\exp\big[-\frac{\mu_i+\mu_j}{2 \widetilde{a}\alpha}- \frac{\mu_i+\mu_j}{2\widetilde{a}}\big(1-\frac{\mu_i+\mu_j}{\widetilde{a}}\big)\log \alpha\big]$ ($i\neq j$). Let $w_{ij}$ be the $i,j$-th element of $\mathbf{U}\widetilde{\mathbf{W}}\mathbf{U}^\dagger$, then  $b_1(\alpha)=\sum_i \text{Re}\big\{\frac{w_{ii}}{2 \mu_i} \exp\big[-\frac{\mu_i}{\widetilde{a} \alpha}  -\frac{\mu_i}{\widetilde{a}}\big(1-\frac{2\mu_i}{\widetilde{a}}\big)\log \alpha \big]\big\}+\sum_i\sum_{j>i}\text{Re}\big\{ \frac{w_{ij}}{\mu_i+\mu_j}\exp\big[-\frac{\mu_i+\mu_j}{2\widetilde{a}\alpha}-  \frac{\mu_i+\mu_j}{2\widetilde{a}}(1-\frac{\mu_i+\mu_j}{\widetilde{a}})  \log \alpha\big]\big\}$.
		 
Using  (\ref{appeox1}), the condition $\lambda >(V(\boldsymbol{\chi}) -V(\boldsymbol{\chi})\big|_{\boldsymbol{\Delta}=\mathbf{0}} ) \frac{  \alpha}{\kappa(R) B_W}$ can be written as $\frac{\boldsymbol{\Delta}^TA_{1,\mathbf{U}}(\alpha)\boldsymbol{\Delta}{\alpha}}{\kappa(R) B_W}\leq \lambda$. Based on (\ref{basedonequ}), $A_{1,\mathbf{U}}(\alpha)$ approaches to a constant matrix as $\alpha \rightarrow 0$. Hence,  the L.H.S. of the above equation decreases at least at the order of $\mathcal{O}(\|\boldsymbol{\Delta}\|^2\alpha)$ for sufficiently small $\|\boldsymbol{\Delta}\|^2\alpha$. Hence, the above condition  is satisfied for small  $\|\boldsymbol{\Delta}\|^2\alpha$.

\vspace{0.1cm}
\hspace{-0.3cm}\emph{B. Solution of (\ref{pdeintheapp}) for large $\|\boldsymbol{\Delta}\|^2\alpha$}   

We then solve the PDE in (\ref{pdeintheapp})   when $\lambda <(V(\boldsymbol{\chi}) -V(\boldsymbol{\chi})\big|_{\boldsymbol{\Delta}=\mathbf{0}} ) \frac{  \alpha}{\kappa(R) B_W}$ (i.e., $p=p_{max}$). We show later that large $\|\boldsymbol{\Delta}\|^2\alpha$ leads to this case. For this case, (\ref{pdeintheapp})  becomes  
\bs\begin{align}	
		& \theta= \boldsymbol{\Delta}^T \mathbf{S}\boldsymbol{\Delta} - \big[  \left(V(\boldsymbol{\chi})-V(\boldsymbol{\chi})\big|_{\boldsymbol{\Delta}=\mathbf{0}}   \right)\frac{  \alpha}{\kappa(R) B_W}-\lambda\big] p_{max}    \notag \\
		&+\nabla_{\boldsymbol{\Delta}}^TV(\boldsymbol{\chi}) \widetilde{\mathbf{F}} \boldsymbol{\Delta}  +\frac{1}{2}\text{Tr}\big( \nabla_{\boldsymbol{\Delta}}^2 V(\boldsymbol{\chi} )\widetilde{\mathbf{W}}\big)  +\frac{\partial V(\boldsymbol{\chi})}{\partial \alpha} \left(2 \widetilde{a} \alpha +2 \widetilde{a}  \right)  \notag \\
		&+  \frac{\partial^2V^\ast(\boldsymbol{\chi})}{\partial \alpha^2}4\widetilde{a}\alpha  + \text{Tr}\big(\frac{\partial V(\boldsymbol{\chi})}{\partial \boldsymbol{\Psi}}(\mathbf{H}\boldsymbol{\Psi}-\boldsymbol{\Psi})/\tau\big)\notag \\
	&+ \nabla^T_{\mathbf{L}}V(\boldsymbol{\chi})\big(\boldsymbol{\Gamma}\mathbf{F}_{\mathbf{R}}\mathbf{L}+w_{max}\|\mathbf{H}\boldsymbol{\Psi}\|\mathbf{1}-\mathbf{L}\big)/\tau
\label{pde1wrding1}
\end{align}\bsc This PDE is separable with solution $V(\boldsymbol{\chi})=\boldsymbol{\Delta}^TA_{2,\mathbf{U}}(\alpha)\boldsymbol{\Delta}+b_2(\alpha)$ for some $A_{2,\mathbf{U}}(\alpha)$ and $b_2(\alpha)$. Substituting this form into (\ref{pde1wrding1}), we obtain that 
\bs\begin{align}
	&\boldsymbol{\Delta}^T \big[\mathbf{S} - \frac{\alpha p_{max}}{\kappa(R) B_W} A_{2,\mathbf{U}}(\alpha)+ \left(2 \widetilde{a} \alpha + 2\widetilde{a}\right)A_{2,\mathbf{U}}'(\alpha) \label{59veryimp1}  \\
	&+ 4\widetilde{a}\alpha A_{2,\mathbf{U}}''(\alpha) +\left(A_{2,\mathbf{U}}(\alpha)+A_{2,\mathbf{U}}^T(\alpha)\right) \widetilde{\mathbf{F}}  \big] \boldsymbol{\Delta}  \notag \\ 
	&+ \big[\lambda p_{max}- \frac{\alpha p_{max}}{\kappa(R) B_W} \left(b_2(H)-b_1(H)\right) +\left(2 \widetilde{a} \alpha + 2\widetilde{a}\right)b_2'(\alpha)\notag \\
	& +4\widetilde{a}\alpha b_2''(\alpha) +\frac{1}{2}\text{Tr}\big( \left(A_{2,\mathbf{U}}(\alpha)+A_{2,\mathbf{U}}^T(\alpha) \right)  \widetilde{\mathbf{W}} \big)-\theta \big]=0	\notag
\end{align}\bsc Using the same approach for calculating $A_{1, \mathbf{U}}(\alpha)$ and denoting $A_{2,\mu}(\alpha)=(\mathbf{U}^{-1})^\dagger A_{2,\mathbf{U}}(\alpha) \mathbf{U}^{-1}$ and $\widetilde{c}\triangleq \sqrt{\widetilde{a}^2+4\widetilde{a}p_{max}/\left(\kappa\left(R\right)B_W\right)}$,  we  calculate $A_{2,\mu}(\alpha)=[a_{2,\mu, ij}]$  as follows: for $a_{2,\mu,ii}(\alpha) $,
\begin{align}
	a_{2,\mu,ii}(\alpha) = & \exp\Big[\frac{-\widetilde{a}+\widetilde{c}}{4\widetilde{a}}\alpha \label{62equ1} \\
	& + \Big(-\frac{1}{4}+\frac{\widetilde{a}-4 \mu_i}{4\widetilde{c}}\Big) \log(\alpha)+\frac{c_{ii}}{\alpha}+o\Big(\frac{1}{\alpha} \Big) \Big] \notag
\end{align}
as $\alpha\rightarrow \infty$, and $c_{ii}=\frac{\widetilde{a}^3+3\widetilde{a} \widetilde{c}^2+16 \widetilde{a}\widetilde{c}\mu_i-4\widetilde{a}(\widetilde{c}+2\mu)}{8\widetilde{c}^3}=\mathcal{O}\Big(\frac{1}{\sqrt{p_{max}}}\Big)$ as $p_{max}\rightarrow \infty$. For $a_{2,\mu,ij}(\alpha)$,
\begin{align}
	&a_{2,\mu,ij} =  \exp\Big[\frac{-\widetilde{a}+\widetilde{c}}{4\widetilde{a}}\alpha+\Big(-\frac{1}{4}+\frac{\widetilde{a}}{4 \widetilde{c}}  \label{56eq1} \\
	&  +L_f\Big(-\frac{\mu_j}{2\widetilde{c}}\exp\Big(-\frac{\mu_i+\mu_j}{2\widetilde{c}}\Big) \Big)\Big)\log(\alpha) +\frac{c_{ij}}{\alpha}+o\left(\frac{1}{\alpha} \right) \Big]       \notag 
\end{align}
as $\alpha\rightarrow \infty$, where  $L_f(\cdot)$ is the Lambert function \cite{lambert}, $c_{ij}=\mathcal{O}\Big(\frac{1}{\sqrt{p_{max}}}\Big)$. Using (\ref{62equ1}), (\ref{56eq1})  and the relationship $A_{2,\mu}(\alpha)=(\mathbf{U}^{-1})^\dagger A_{2,\mathbf{U}}(\alpha) \mathbf{U}^{-1}$, we can obtain $A_{2,\mathbf{U}}(\alpha)$.  Using a similar approach to calculating $b_1(\alpha)$, we can calculate $b_2(\alpha)$.  Based on the above calculations, we summarize the  overall solution of this case  below: 
	\begin{align}	
			&\log\left(V(\boldsymbol{\chi})\right)=\log (\boldsymbol{\Delta}^TA_{2,\mathbf{U}}(\alpha)\boldsymbol{\Delta}+ \exp\left(2B_2\right)\alpha^{C_2} )\notag \\
			& \hspace{1cm}-\frac{\widetilde{a}-\widetilde{c}}{4}\alpha -\Big(\frac{1}{4}-\frac{\widetilde{a}}{4\widetilde{c}}\Big)\log \alpha+\mathcal{O}\Big(\frac{1}{\alpha \sqrt{p_{max}}}\Big)	\label{appeox2}
	\end{align}
		where   $A_{2,\mathbf{U}}(\alpha)=\mathbf{U}^\dagger A_{2,\mu}(\alpha) \mathbf{U}\in \mathbb{R}^{d\times d}$, and $A_{2,\mu}(\alpha)=[a_{2,\mu, ij}(\alpha)]$ is a $d\times d$ constant matrix with $a_{2,\mu, ii}(\alpha)=\alpha^{\frac{- \mu_i}{\widetilde{c}}}$, $a_{2,\mu, ij}(\alpha)=\alpha^{L_f\big(-\frac{\mu_j}{2\widetilde{c}}\exp\big(-\frac{\mu_i+\mu_j}{2\widetilde{c}}\big) \big)}$ ($i\neq j$), and  $B_2=\sum_{i}\text{Re}\big\{\log \frac{w_{ii}}{4 \mu_i}\big\}+\sum_i\sum_{j>i}\text{Re}\big\{\log\big( {w_{ij}}/{\big(-4 \widetilde{c} L_f\big(\frac{\mu_j}{-2\widetilde{c}}\exp\big(\frac{\mu_i+\mu_j}{-2\widetilde{c}}\big)\big)}\big)\big)  +\log\big( {w_{ij}}/\big(-4 \widetilde{c}   L_f\big(\frac{\mu_i}{-2\widetilde{c}}   \exp\big(\frac{\mu_i+\mu_j}{-2\widetilde{c}}\big)\big)\big)\big)\big\}$,  $C_2=\sum_i \frac{2\text{Re}\{\mu_i\}}{-\widetilde{c}}+\sum_i\sum_{j>i}2\text{Re}\big\{L_f\big(\frac{\mu_i}{-2\widetilde{c}}\exp\big(\frac{\mu_i+\mu_j}{-2\widetilde{c}}\big)\big)+L_f\big(\frac{\mu_j}{-2\widetilde{c}}\exp\big(\frac{\mu_i+\mu_j}{-2\widetilde{c}}\big)\big)\big\}-\frac{d^2}{2}(1-\frac{\widetilde{a}}{\widetilde{c}})$.

Using  (\ref{appeox2}), the condition $\lambda <(V(\boldsymbol{\chi}) -V(\boldsymbol{\chi})\big|_{\boldsymbol{\Delta}=\mathbf{0}} ) \frac{  \alpha}{\kappa(R) B_W}$ can be written as $\big[\big(\boldsymbol{\Delta}^TA_{2,\mathbf{U}}(\alpha)\boldsymbol{\Delta}+\exp\left(2B_2\right)\alpha^{C_2} \big)\exp\big(\frac{\widetilde{c}-\widetilde{a}}{4}\alpha  -\big(\frac{1}{4}-\frac{\widetilde{a}}{4\widetilde{c}}\big) \log \alpha\big) -b_1(\alpha)\big]\frac{  \alpha}{\kappa(R) B_W}\geq \lambda$. Since   $V(\boldsymbol{\chi})$ increases  w.r.t. $\|\boldsymbol{\Delta}\|$ (due to Lemma 3  of \cite{shamai}) and $\frac{\widetilde{c}-\widetilde{a}}{4}>0$, the L.H.S. of the above equation increases at least at the order of  $\mathcal{O}(\|\boldsymbol{\Delta}\|^2\alpha)+ f(\alpha)$ for some function $f(\alpha)$ and sufficiently large  $\|\boldsymbol{\Delta}\|^2\alpha$. Hence, the above condition  is satisfied for large  $\|\boldsymbol{\Delta}\|^2\alpha$.

\vspace{-0.2cm}

\section*{\txtblue{Appendix G: Proof of the  Results in Remark \ref{remadksdsds}}}

\txtblue{\emph{1): Dynamic Threshold  w.r.t. State Estimation Error:}
Since $\widetilde{V}_{\eta_{th}}(\boldsymbol{\chi})$ increases  w.r.t. $\|\boldsymbol{\Delta}\|$  according to Lemma 3  of \cite{shamai}, for large $\|\boldsymbol{\Delta}\|$, we have $\big(\widetilde{V}_{\eta_{th}}(\boldsymbol{\chi}) -\widetilde{V}_{\eta_{th}}(\boldsymbol{\chi})\big|_{\boldsymbol{\Delta}=\mathbf{0}} \big) \frac{  \alpha}{\kappa(R) B_W}  =\boldsymbol{\Delta}^TA_{2,\mathbf{U}}(\alpha)\boldsymbol{\Delta}\exp\big(\frac{\widetilde{c}-\widetilde{a}}{4}\alpha -\big(\frac{1}{4}-\frac{\widetilde{a}}{4\widetilde{c}}\big)\log \alpha\big) + C(\alpha)$ for some function $C(\alpha)$. This expression grows at the order of  $\mathcal{O}\big(\|\boldsymbol{\Delta}\|^2\big)$ as $\|\boldsymbol{\Delta}\|^2$ increases for given  $\alpha$ and $R$.}

\txtblue{\emph{2): Dynamic Threshold  w.r.t. CSI:} According to the analysis in part B of Appendix F, for large $\alpha$, we have  $\big(\widetilde{V}_{\eta_{th}}(\boldsymbol{\chi}) -\widetilde{V}_{\eta_{th}}(\boldsymbol{\chi})\big|_{\boldsymbol{\Delta}=\mathbf{0}} \big) \frac{  \alpha}{\kappa(R) B_W}=\big[\big(\boldsymbol{\Delta}^TA_{2,\mathbf{U}}(\alpha)\boldsymbol{\Delta}+\exp\big(2B_2\big)\alpha^{C_2} \big)\exp\big(\frac{\widetilde{c}-\widetilde{a}}{4}\alpha -\big(\frac{1}{4}-\frac{\widetilde{a}}{4\widetilde{c}}\big)\log \alpha\big) -b_1(\alpha)\big]\frac{  \alpha}{\kappa(R) B_W}=\mathcal{O}(\alpha^{C_2+\frac{3}{4}+\frac{\widetilde{a}}{4\widetilde{c}}} \exp\big(\alpha\big) ) $.  This expression grows  at the order of $\mathcal{O}\big( \exp(\alpha)\big)$  as $\alpha$ increases for given  $\boldsymbol{\Delta}$ and $R$.}

\txtblue{\emph{3): Dynamic Threshold  w.r.t. Data Rate:}  Based on the above analysis, for given large  $\alpha$  or large  $\|\boldsymbol{\Delta}\|$,  we have  $\big(\widetilde{V}_{\eta_{th}}(\boldsymbol{\chi}) -\widetilde{V}_{\eta_{th}}(\boldsymbol{\chi})\big|_{\boldsymbol{\Delta}=\mathbf{0}} \big) \frac{  \alpha}{\kappa(R) B_W}=\big[\big(\boldsymbol{\Delta}^TA_{2,\mathbf{U}}(\alpha)\boldsymbol{\Delta}+\exp\big(2B_2\big)\alpha^{C_2} \big)\exp\big(\frac{\widetilde{c}-\widetilde{a}}{4}\alpha -\big(\frac{1}{4}-\frac{\widetilde{a}}{4\widetilde{c}}\big)\log \alpha\big) -b_1(\alpha)\big]\frac{  \alpha}{\kappa(R) B_W}$. For large $R$, we have $B_2=\mathcal{O}\big(\sqrt{\kappa\big(R\big)}\exp\big(\sqrt{\kappa\big(R\big)}\big) \big)$, $C_2=\mathcal{O}(1)$, $\frac{\widetilde{c}-\widetilde{a}}{4}\alpha -\big(\frac{1}{4}-\frac{\widetilde{a}}{4\widetilde{c}}\big)\log \alpha=\mathcal{O}(1)$. Therefore, the dynamic threshold grows  at the order of ${\sqrt{\kappa\big(R\big)}\exp\big(\sqrt{\kappa\big(R\big)}\big)}\big/{\kappa\big(R\big)}={\exp\big(\sqrt{\kappa\big(R\big)}\big)}\big/{\sqrt{\kappa\big(R\big)}} $ as $R$ increases for given large  $\alpha$  or large    $\boldsymbol{\Delta}$.}

\vspace{-0.2cm}

\section*{Appendix H: Proof of Theorem \ref{stab}}
\txtblue{First, we establish the following Lemma:
\begin{Lemma}		\label{sda2323s}
	If $\boldsymbol{\Delta}(t)$ is stable under $\widetilde{\Omega}_p^\ast$, then $(\Omega_{\mathbf{u}}^\ast, \widetilde{\Omega}_p^\ast)$ is admissible.
\end{Lemma}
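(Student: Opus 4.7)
The proof aims to verify the three requirements of Definition \ref{admisscontrolpol}. Mean-square stability of $\boldsymbol{\Delta}(t)$ is given by hypothesis, so only the claims on $\mathbf{x}(t)$ and $\mathbf{L}(t)$ need to be established, and both should follow from $\boldsymbol{\Delta}$-stability combined with the structural properties of the CE controller and the primitive quantizer.

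\textbf{Step 1 (plant state stability).} Substituting the CE law $\mathbf{u}(t)=-\mathbf{K}\hat{\mathbf{x}}(t)=-\mathbf{K}\bigl(\mathbf{x}(t)-\boldsymbol{\Delta}(t)\bigr)$ into (\ref{plantain}) collapses the closed-loop plant dynamics to
\begin{equation*}
\mathbf{x}(t+1)=(\mathbf{F}-\mathbf{G}\mathbf{K})\mathbf{x}(t)+\mathbf{G}\mathbf{K}\boldsymbol{\Delta}(t)+\mathbf{w}(t).
\end{equation*}
By Assumption \ref{assumpstabl} together with the standard LQG fact that the DARE produces a stabilizing gain, $(\mathbf{F}-\mathbf{G}\mathbf{K})$ is Schur stable. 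The driving noise $\mathbf{w}(t)$ has bounded second moment, and $\mathbb{E}[\|\boldsymbol{\Delta}(t)\|^2]$ is bounded by hypothesis. A standard Lyapunov argument on a Schur-stable linear system excited by second-moment bounded inputs then yields $\limsup_{t\to\infty}\mathbb{E}[\|\mathbf{x}(t)\|^2]<\infty$.

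\textbf{Step 2 (dynamic range stability).} The recursion (\ref{28equa}) for $\mathbf{L}(t)$ is an affine switched system whose multiplier is $\boldsymbol{\Gamma}\mathbf{F}_{\mathbf{R}}$ upon $\gamma(t-1)=1$ and $\boldsymbol{\Gamma}$ upon $\gamma(t-1)=0$, with an additive perturbation $w_{max}\|\mathbf{H}\boldsymbol{\Psi}(t)\|\mathbf{1}$ that is uniformly bounded (since $\mathbf{H}$ is unitary-like so $\|\boldsymbol{\Psi}(t)\|=\|\boldsymbol{\Phi}\|$ for all $t$, and $w_{max}=\mathcal{O}(\tau)$). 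The plan is to extract from the hypothesis a lower bound on the long-run successful transmission rate of $\widetilde{\Omega}_p^\ast$: when $\gamma(t)=0$, the $\boldsymbol{\Delta}$ branch in (\ref{deltadyntext}) grows by $\mathbf{F}$, so mean-square boundedness of $\boldsymbol{\Delta}(t)$ forces the empirical success rate to exceed the threshold on the right-hand side of (\ref{sadsaxxxx}). Combining this rate with a quadratic Lyapunov function $V_L(\mathbf{L})=\mathbf{L}^T\mathbf{N}\mathbf{L}$, with $\mathbf{N}\succ 0$ chosen so that the expected one-step multiplier $P_{\mathrm{Succ}}\boldsymbol{\Gamma}\mathbf{F}_{\mathbf{R}}+(1-P_{\mathrm{Succ}})\boldsymbol{\Gamma}$ acts as a contraction in the $\mathbf{N}$-norm, then produces $\limsup_{t\to\infty}\mathbb{E}[\|\mathbf{L}(t)\|^2]<\infty$.

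\textbf{Main obstacle.} Step 2 is the delicate part. The policy $\widetilde{\Omega}_p^\ast$ couples $\alpha$, $\boldsymbol{\Delta}$, and $\mathbf{L}$ through the event-driven threshold of Theorem \ref{thmpower}, so cleanly separating the success rate of $\gamma$ from the $\mathbf{L}$ dynamics is nontrivial. I expect the cleanest route is a contrapositive: if $\mathbb{E}[\|\mathbf{L}(t)\|^2]$ were unbounded along a subsequence, then on the $\gamma(t)=1$ branch of (\ref{deltadyntext}) the quantization noise $-\boldsymbol{\Psi}^{-1}(t)\mathbf{e}(\mathbf{L}(t),t)$ would inject unbounded variance into $\boldsymbol{\Delta}(t)$, contradicting the hypothesis. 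This circumvents the need to solve the multiplier contraction explicitly and completes the verification of admissibility.
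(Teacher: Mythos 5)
Your Step 1 mirrors the paper's argument in substance: both substitute the CE law to produce a Schur-stable closed loop driven by $\boldsymbol{\Delta}(t)$ and $\mathbf{w}(t)$. You work the recursion on $\mathbf{x}(t)$ directly while the paper works on $\hat{\mathbf{x}}(t)$ (via (\ref{xhatdyn})) and then recovers $\mathbf{x}(t)=\hat{\mathbf{x}}(t)+\boldsymbol{\Delta}(t)$, but the two routes are bookkeeping variants of the same Lyapunov estimate.

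Step 2 is where you diverge, and there is a genuine gap in what you propose. The paper's route is geometric: it invokes Lemma \ref{importapp} to conclude that the plant state stays simultaneously in $\mathcal{D}_{\widetilde{\mathbf{x}}}(t)$ and $\mathcal{E}_{\hat{\mathbf{x}}}(t)$, then uses $\boldsymbol{\Delta}$-stability to bound $\|\boldsymbol{\Psi}(t)(\mathbf{x}(t)-\hat{\mathbf{x}}(t))\|$ and hence the range $\mathbf{L}(t)$ through the set containment — it never needs to reason about transmission rates. Your contrapositive instead aims to show that unbounded $\mathbb{E}[\|\mathbf{L}(t)\|^2]$ would drive unbounded variance into $\boldsymbol{\Delta}(t)$ through the quantization-noise branch $-\boldsymbol{\Psi}^{-1}(t)\mathbf{e}(\mathbf{L}(t),t)$ of (\ref{deltadyntext}). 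The hole is that this injection occurs only on the $\gamma(t)=1$ branch, yet nothing in the hypothesis forces $\gamma(t)=1$ to occur at the slots where $\mathbf{L}(t)$ is large. Under $\widetilde{\Omega}_p^\ast$ the threshold in (\ref{powersolu}) is built from $\widetilde{V}_{\eta_{th}}(\boldsymbol{\chi})$, which by (\ref{approxvaluefunc}) depends only on $\boldsymbol{\Delta}$ and $\alpha$, not on $\mathbf{L}$; and immediately after a successful transmission $\boldsymbol{\Delta}$ resets to a small value, so a long run of $p(t)=0$ slots can follow even while $\mathbf{L}(t)$ is being inflated by $\boldsymbol{\Gamma}$ on the $\gamma=0$ branch of (\ref{28equa}). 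To repair this you would need to couple the $\mathbf{L}$-growth to the $\boldsymbol{\Delta}$-growth on the no-transmit branch (using the unstable modes of $\mathbf{F}$) and show that the threshold must be re-crossed before $\mathbf{L}$ can escape — i.e., you end up back at the kind of transmission-rate bookkeeping you were trying to avoid. Your first Step 2 idea (Lyapunov contraction on the switched $\mathbf{L}$ recursion under a success-rate lower bound) has the same issue, as you yourself flagged, since $\boldsymbol{\Delta}$-stability alone does not hand you a success-rate floor at the level of the $\mathbf{L}$-dynamics. The paper avoids both difficulties by arguing through the invariant tracking regions of the primitive quantizer rather than through the transmission process.
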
}
\begin{proof}
\txtblue{We verify the admissibility of $(\Omega_{\mathbf{u}}^\ast, \widetilde{\Omega}_p^\ast)$ according to  Definition \ref{admisscontrolpol} as follows:}

	\txtblue{\emph{1) Stability of $\mathbf{x}(t)$:} Taking expectation on condition of $I_C(t+1)$ on both sides of  (\ref{plantain}) and substituting $\mathbf{u}^\ast(t)$ in (\ref{statsrr}), 
\begin{align}
	&\hat{\mathbf{x}}(t+1) = \mathbb{E}\big[\mathbf{F} (\boldsymbol{\Delta}(t)+\hat{\mathbf{x}}(t))+\mathbf{G}\mathbf{u}^\ast(t)+\mathbf{w}(t)\big|I_C(t+1)\big]\notag \\
	&=\mathbf{F}\hat{\mathbf{x}}(t)+\mathbf{G}\mathbf{u}(t)+\hat{\mathbf{w}}(t)  = \big(\mathbf{F}-\mathbf{G}\mathbf{K}\big)\hat{\mathbf{x}}(t)+\hat{\mathbf{w}}(t)	\label{xhatdyn}
\end{align}
where $\hat{\mathbf{w}}(t)=\mathbb{E}\big[\mathbf{F} \boldsymbol{\Delta}(t)+\mathbf{w}(t)\big|I_C(t+1)\big]$ and according to Section III.B of \cite{rate2}, we have $\hat{\mathbf{W}}\triangleq \lim_{t\rightarrow \infty}\mathbb{E}[\hat{\mathbf{w}}(t)\hat{\mathbf{w}}^T(t)]=\mathbf{F}\big( \lim_{t\rightarrow \infty}\mathbb{E}[\boldsymbol{\Delta}(t)\boldsymbol{\Delta}^T(t)]\big)\mathbf{F}^T+\mathbf{W}- \lim_{t\rightarrow \infty}\mathbb{E}[\boldsymbol{\Delta}(t)\boldsymbol{\Delta}^T(t)]$. Therefore, if $\lim_{t\rightarrow \infty}\mathbb{E}[\boldsymbol{\Delta}(t)\boldsymbol{\Delta}^T(t)]<\infty$, we have $\|\hat{\mathbf{W}}\|<\infty$. Furthermore, from (\ref{xhatdyn}), for large $t$, we have 
\begin{align}
	\mathbb{E}\left[\| \hat{\mathbf{x}}(t+1)\|^2 \right]<\|\mathbf{F}-\mathbf{G}\mathbf{K}\|^2\mathbb{E}\left[\| \hat{\mathbf{x}}^T(t)\|^2 \right]+\|\hat{\mathbf{W}}\|
\end{align}
Since $\|\mathbf{F}-\mathbf{G}\mathbf{K}\|<1$ \cite{poweronl}, we have $\lim_{t \rightarrow \infty}\mathbb{E}\left[\| \hat{\mathbf{x}}(t)\|^2 \right]=\frac{\|\hat{\mathbf{W}}\|^2}{1-\|\mathbf{F}-\mathbf{G}\mathbf{K}\|^2}<\infty$.}

\txtblue{\emph{2) Stability of $\mathbf{L}(t)$:} Based on the proof of Lemma \ref{importapp},  if $\mathbf{x}(t)\in \mathcal{D}_{\widetilde{\mathbf{x}}}(t)$, then $\mathbf{x}(t)\in \mathcal{E}_{\hat{\mathbf{x}}}(t)$. Hence, $\mathcal{D}_{\widetilde{\mathbf{x}}}(t)\subset \mathcal{E}_{\hat{\mathbf{x}}}(t)$. Under the definition of $\mathcal{E}_{\hat{\mathbf{x}}}(t)$ in (\ref{exasdssdef}), if $\boldsymbol{\Delta}(t)$ is stable, we have $\lim_{t\rightarrow \infty}\mathbb{E}\left[\|\boldsymbol{\Psi}(t)\big(\mathbf{x}-\hat{\mathbf{x}}(t)\big)\|\right]<\infty$.  Under the definition of $\mathcal{D}_{\widetilde{\mathbf{x}}}(t)$ in (\ref{dregion}) and $\mathcal{D}_{\widetilde{\mathbf{x}}}(t)\subset \mathcal{E}_{\hat{\mathbf{x}}}(t)$, we have $\lim_{t\rightarrow \infty} \mathbb{E}\left[\|\boldsymbol{\Psi}(t)\big(\mathbf{x}-\widetilde{\mathbf{x}}(t)\big)\|\right]<\infty$. Therefore, we have $\lim_{t\rightarrow \infty}\mathbb{E}\left[\|\mathbf{L}(t)\|^2\right]<\infty$.}

\txtblue{Based on the above analysis, if  $\boldsymbol{\Delta}(t)$ is stable under $\widetilde{\Omega}_p^\ast$, then $(\Omega_{\mathbf{u}}^\ast, \widetilde{\Omega}_p^\ast)$ is admissible.}
\end{proof}

 In the following, we show that $\boldsymbol{\Delta}(t)$ is stable under  $\widetilde{\Omega}_p^\ast$.

\vspace{0.1cm}

\hspace{-0.3cm}\emph{A. Relationship between the State Drift and the Lyapunov Drift}   

Define a Lyapunov function $L(\boldsymbol{\Delta})=\|\boldsymbol{\Delta}\|^2$. Define the conditional state drift $d^{\widetilde{\Omega}_p^\ast}(\boldsymbol{\Delta})$  and the conditional Lyapunov drift $d^{\widetilde{\Omega}_p^\ast}L(\boldsymbol{\Delta})$ w.r.t $\boldsymbol{\Delta}$ as follows:
\begin{align}
	&d^{\widetilde{\Omega}_p^\ast}(\boldsymbol{\Delta})=\mathbb{E}^{\widetilde{\Omega}_p^\ast}\big[ \|\boldsymbol{\Delta}(t)\|-\|\boldsymbol{\Delta}(t-1)\|\big|\boldsymbol{\Delta}\left(t-1\right)=\boldsymbol{\Delta}\big]	\notag	 \\
	&d^{\widetilde{\Omega}_p^\ast}L(\boldsymbol{\Delta})=\mathbb{E}^{\widetilde{\Omega}_p^\ast}\big[L\left(\boldsymbol{\Delta}\left(t\right)\right)-L\left(\boldsymbol{\Delta}\left(t-1\right)\right)\big|\boldsymbol{\Delta}\left(t-1\right)=\boldsymbol{\Delta}\big]		 	\notag
\end{align}
For  sufficiently large $\|\boldsymbol{\Delta}\|$,
\bs\begin{align}
		&d^{\widetilde{\Omega}_p^\ast}L(\boldsymbol{\Delta})\notag \\
		\overset{(a)}{=}&\mathbb{E}^{\widetilde{\Omega}_p^\ast}\big[\left(\boldsymbol{\Delta}(t)-\boldsymbol{\Delta}(t-1)\right)^T \nabla_{\boldsymbol{\Delta}}L\left(\boldsymbol{\Delta}\right)\big|  \boldsymbol{\Delta}\left(t-1\right)=\boldsymbol{\Delta}\big] +\mathcal{O}(\tau^2)	\notag \\
		\overset{(b)}{\geq}&\mathbb{E}^{\widetilde{\Omega}_p^\ast}\big[\left(\boldsymbol{\Delta}(t)-\boldsymbol{\Delta}(t-1)\right)^T \nabla_{\boldsymbol{\Delta}}L\left(\boldsymbol{\Delta}\right)\big|\boldsymbol{\Delta}\left(t-1\right)=\boldsymbol{\Delta}\big]+C\tau^2	\notag \\
		{\geq} &\mathbb{E}^{\widetilde{\Omega}_p^\ast}\left[\|\boldsymbol{\Delta}(t)-\boldsymbol{\Delta}(t-1)\| \big|\boldsymbol{\Delta}\left(t-1\right)=\boldsymbol{\Delta}\right]\geq d^{\widetilde{\Omega}_p^\ast}(\boldsymbol{\Delta})\label{sda2323ssadad}
	\end{align}\bsc where (a) is due to the second order taylor expansion of $L\left(\boldsymbol{\Delta}(t)\right)$  and $\mathcal{O}(\|\boldsymbol{\Delta}(t)-\boldsymbol{\Delta}(t-1)\|^2)=\mathcal{O}(\tau^2)$ according to  (\ref{sub2}), where $(b)$ is because $L(\boldsymbol{\Delta})$  increase w.r.t. $\|\boldsymbol{\Delta}\|$.
% w.r.t. $\|\boldsymbol{\Delta}\|$ according to Lemma 3 in Section VI of \cite{shamai} and hence, $L(\boldsymbol{\chi})$ equals (\ref{appeox2}) for   sufficiently large $\|\boldsymbol{\Delta}\|$.

\vspace{0.1cm}
\hspace{-0.3cm}\emph{\txtblue{B. Negativity of  the Lyapunov  Drift for $\|\boldsymbol{\Delta}\|$ under $\widetilde{\Omega}_p^\ast$}}

\txtblue{We then establish the following lemma on $d^{\widetilde{\Omega}_p^\ast}L(\boldsymbol{\Delta})$.}
\begin{Lemma}	\label{realdrifttsds}
	\txtblue{If $\frac{ p_{max} \tau}{\kappa(R) B_W+p_{max} \tau} >\max\big\{ \frac{1}{R}\sum_{\mu_i(\mathbf{F})} \max\big\{0, \log|\mu_i(\mathbf{F})|\big\}, 1-\frac{1}{\mu_{max}(\mathbf{F}^T\mathbf{F})}\big\}$, we have $d^{\widetilde{\Omega}_p^\ast}L(\boldsymbol{\Delta})<0$ for sufficiently large $\|\boldsymbol{\Delta}\|$.}
\end{Lemma}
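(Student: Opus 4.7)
The plan is to first argue that for $\|\boldsymbol{\Delta}\|$ large enough the event-driven policy $\widetilde{\Omega}_p^\ast$ reduces to the deterministic rule $p = p_{max}$, then compute the one-step drift of $L(\boldsymbol{\Delta})=\|\boldsymbol{\Delta}\|^2$ directly from the two-branch dynamics in (\ref{deltadyntext}), and finally extract the dominant quadratic term and match its sign to the stated condition. Step one: by the order-growth properties in Remark \ref{remadksdsds}, the dynamic threshold $(\widetilde{V}_{\eta_{th}}(\boldsymbol{\chi}) - \widetilde{V}_{\eta_{th}}(\boldsymbol{\chi})|_{\boldsymbol{\Delta}=\mathbf{0}})\alpha/(\kappa(R) B_W)$ scales as $\Theta(\|\boldsymbol{\Delta}\|^2)$ in $\boldsymbol{\Delta}$ for any fixed $\alpha>0$, so there exists $M_0$ such that $\|\boldsymbol{\Delta}\|>M_0$ forces the threshold to exceed $\lambda$ on the full support of $\alpha$; invoking (\ref{powersolu}) then gives $\widetilde{\Omega}_p^\ast(\boldsymbol{\chi})=p_{max}$.

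Step two: conditional on $(\boldsymbol{\Delta},\alpha,\mathbf{L},\boldsymbol{\Psi})$ and $p=p_{max}$, the success probability is $P_s(\alpha)=1-\exp(-p_{max}\tau\alpha/(\kappa(R) B_W))$, and plugging the two branches of (\ref{deltadyntext}) into $\|\boldsymbol{\Delta}(t)\|^2$ yields
\begin{align*}
\mathbb{E}[\|\boldsymbol{\Delta}(t)\|^2\mid\boldsymbol{\Delta},\alpha,\mathbf{L},\boldsymbol{\Psi}] &= (1-P_s(\alpha))\bigl(\boldsymbol{\Delta}^T\mathbf{F}^T\mathbf{F}\boldsymbol{\Delta}+\operatorname{Tr}(\mathbf{W})\bigr) \\
&\quad + P_s(\alpha)\,\mathbb{E}[\|\boldsymbol{\Psi}^{-1}\mathbf{e}(\mathbf{L})\|^2].
\end{align*}
Averaging $1-P_s(\alpha)$ against the stationary law of $\alpha=|h|^2$ produced by the recursion (\ref{chnain}) gives the closed-form $q:=\mathbb{E}_\alpha[1-P_s(\alpha)]=1-P_{Succ,\widetilde{\Omega}_p^\ast}$ with $P_{Succ,\widetilde{\Omega}_p^\ast}=p_{max}\tau/(\kappa(R) B_W+p_{max}\tau)$, consistent with the identification immediately after Theorem \ref{stab}.

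Step three: since $\|\boldsymbol{\Psi}^{-1}\|$ is uniformly bounded (Prop.~5.1 of \cite{rate1}) and each coordinate of $\mathbf{e}(\mathbf{L})$ is uniform on $[-L_n/2^{R_n-1},L_n/2^{R_n-1}]$ by Lemma \ref{lemmaqunt}, the quantization contribution satisfies $\mathbb{E}[\|\boldsymbol{\Psi}^{-1}\mathbf{e}(\mathbf{L})\|^2]=\mathcal{O}(\|\mathbf{L}\|^2/2^{2R_{min}})$. Therefore
\begin{align*}
d^{\widetilde{\Omega}_p^\ast}L(\boldsymbol{\Delta}) \leq \bigl(q\,\mu_{max}(\mathbf{F}^T\mathbf{F})-1\bigr)\|\boldsymbol{\Delta}\|^2 + B(\mathbf{L}),
\end{align*}
where $B(\mathbf{L})$ collects $q\operatorname{Tr}(\mathbf{W})$ and the $\mathcal{O}(\|\mathbf{L}\|^2)$ term. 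The algebraic identity $q\,\mu_{max}(\mathbf{F}^T\mathbf{F})<1 \Longleftrightarrow P_{Succ,\widetilde{\Omega}_p^\ast}>1-1/\mu_{max}(\mathbf{F}^T\mathbf{F})$ shows that the \emph{second} argument of the maximum in the hypothesis makes the leading coefficient strictly negative.

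The main obstacle is controlling $B(\mathbf{L})$, i.e., preventing the quantization residual from corrupting the drift argument. This is where the \emph{first} argument of the maximum enters. The $\mathbf{L}$-recursion (\ref{28equa}) is autonomous in $\boldsymbol{\Delta}$: per successful transmission the contraction factor on the $n$-th coordinate is $|\lambda_n(\mathbf{F})|\cdot 2^{-R_n}$, and per failure it is $|\lambda_n(\mathbf{F})|$. A log-Lyapunov analysis on $\mathbf{L}$, with the additive term $w_{max}\|\mathbf{H}\boldsymbol{\Psi}\|\mathbf{1}=\mathcal{O}(\tau)$ absorbed as a bounded forcing, shows that $\mathbb{E}\log\|\mathbf{L}(t)\|$ has negative drift precisely when $P_{Succ,\widetilde{\Omega}_p^\ast}R>\mathcal{I}(\mathbf{F})$, which is the first argument of the maximum. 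Taking both arguments together as in the hypothesis therefore ensures $\mathbb{E}\|\mathbf{L}(t)\|^2$ is bounded, so $B(\mathbf{L})$ is $\mathcal{O}(1)$ on average, and the quadratic term $(q\mu_{max}(\mathbf{F}^T\mathbf{F})-1)\|\boldsymbol{\Delta}\|^2$ dominates for all $\|\boldsymbol{\Delta}\|$ exceeding some $M\geq M_0$. This gives $d^{\widetilde{\Omega}_p^\ast}L(\boldsymbol{\Delta})<0$ as claimed.
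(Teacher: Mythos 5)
Your proposal follows essentially the same route as the paper's Appendix~H proof: force $p=p_{max}$ for $\|\boldsymbol{\Delta}\|$ large via the monotone threshold, split the drift along the two branches of (\ref{deltadyntext}), average the SER against the stationary exponential law of $\alpha$ to get $q\mu_{max}(\mathbf{F}^T\mathbf{F})-1<0$ from the second max-argument, and control the quantization residual by showing the $\mathbf{L}$-recursion contracts on average under the first max-argument.

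The one place you are lighter than the paper is the $\mathbf{L}$-bound. You describe the $\mathbf{L}$-recursion as ``autonomous in $\boldsymbol{\Delta}$,'' but the $\gamma$ sequence driving (\ref{28equa}) depends on whether the sensor transmits, which depends on $\boldsymbol{\Delta}$; the paper handles this by conditioning on a window $\{t_0,\dots,t-1\}$ over which $\|\boldsymbol{\Delta}(s)\|$ is large so that transmission (and hence the stationary success rate) is guaranteed, before applying the random-matrix-product stability result of Bougerol--Picard \cite{stabprodthem} (via Prop.~4.4 of \cite{ratenoise}). Also, a negative drift of $\mathbb{E}\log\|\mathbf{L}\|$ does not by itself give boundedness of $\mathbb{E}\|\mathbf{L}\|^2$; the paper passes through the product of $\boldsymbol{\Gamma}\widetilde{\mathbf{F}}_{\mathbf{R}}(i)$ and uses the cited stationarity theorem to get the second-moment bound. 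These are repairable with the paper's tools, but as written the step from ``negative log-Lyapunov exponent'' to ``$\mathbb{E}\|\mathbf{L}\|^2$ bounded'' is a gap.
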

\begin{proof}
	\txtblue{Under the dynamics of $\boldsymbol{\Delta}(t)$ in (\ref{deltadyntext}),  we calculate $d^{\widetilde{\Omega}_p^\ast}L(\boldsymbol{\Delta})$ as follows:
	\bs\begin{align}
	&d^{\widetilde{\Omega}_p^\ast}L(\boldsymbol{\Delta})\notag \\
	=&\mathbb{E}^{\widetilde{\Omega}_p^\ast}\big[(1-\widetilde{p}(\boldsymbol{\chi}))\big(\mathbf{F}\boldsymbol{\Delta}+\mathbf{w}(t-1)\big)^T\big(\mathbf{F}\boldsymbol{\Delta}+\mathbf{w}(t-1)\big) \notag \\
	&+\widetilde{p}(\boldsymbol{\chi})  \mathbf{e}^T(\mathbf{L}(t), t)\boldsymbol{\Psi}^{-T}(t)\boldsymbol{\Psi}^{-1}(t)\mathbf{e}(\mathbf{L}(t), t) \notag \\
	&-\boldsymbol{\Delta}^T\boldsymbol{\Delta}\big|\boldsymbol{\Delta}\left(t-1\right)=\boldsymbol{\Delta}\big]\notag \\
	\overset{(c)}{<}&\mathbb{E}^{\widetilde{\Omega}_p^\ast}\big[(1-\widetilde{p}(\boldsymbol{\chi}))\boldsymbol{\Delta}^T\mathbf{F}^T\mathbf{F}\boldsymbol{\Delta}-\boldsymbol{\Delta}^T\boldsymbol{\Delta}+\text{Tr}(\mathbf{W}) \notag \\
	&+ C_1' \|\mathbf{L}(t)\|^2\big|\boldsymbol{\Delta}\left(t-1\right)=\boldsymbol{\Delta}\big]\notag \\
	\overset{(d)}{<}&\big[\big(1-\mathbb{E}^{\widetilde{\Omega}_p^\ast}\big[\widetilde{p}(\boldsymbol{\chi})\big|\boldsymbol{\Delta}\left(t-1\right)=\boldsymbol{\Delta}\big]\big)\mu_{max}(\mathbf{F}^T\mathbf{F})-1\big]\boldsymbol{\Delta}^T\boldsymbol{\Delta}\notag \\
	&+\text{Tr}(\mathbf{W})+C_1'\mathbb{E}^{\widetilde{\Omega}_p^\ast}\big[ \|\mathbf{L}(t)\|^2\big|\boldsymbol{\Delta}\left(t-1\right)=\boldsymbol{\Delta}\big]	\label{94equasdsd}
\end{align}\bsc where $\widetilde{p}(\boldsymbol{\chi})\triangleq \exp\big( -\frac{ p_{max} \tau \alpha}{\kappa(R) B_W} \big)\mathbf{1}\big(\lambda \leq \big(\widetilde{V}_{\eta_{th}}(\boldsymbol{\chi}) -\widetilde{V}_{\eta_{th}}(\boldsymbol{\chi})\big|_{\boldsymbol{\Delta}=\mathbf{0}} \big) \frac{  \alpha}{\kappa(R) B_W}\big)$ is the SER for given $\boldsymbol{\chi}$ and $C_1'>0$ is a constant,  (c) is because the cross-product terms w.r.t. of the L.H.S. equals to zero and  (\ref{51equation}), and (d) is because $\boldsymbol{\Delta}^T\mathbf{F}^T\mathbf{F}\boldsymbol{\Delta}<\mu_{max}(\mathbf{F}^T\mathbf{F})\boldsymbol{\Delta}^T\boldsymbol{\Delta}$.} 

\txtblue{In the following, we show that for sufficiently large $\|\boldsymbol{\Delta}\|$, if $\frac{ p_{max} \tau}{\kappa(R) B_W+p_{max} \tau}>1-\frac{1}{\mu_{max}(\mathbf{F}^T\mathbf{F})}$,  $(1-\mathbb{E}^{\widetilde{\Omega}_p^\ast}[\widetilde{p}(\boldsymbol{\chi})|\boldsymbol{\Delta}(t-1)=\boldsymbol{\Delta}])\mu_{max}(\mathbf{F}^T\mathbf{F})-1<0$ and $\mathbb{E}^{\widetilde{\Omega}_p^\ast}[ \|\mathbf{L}(t)\|^2|\boldsymbol{\Delta}\left(t-1\right)=\boldsymbol{\Delta}]$ is bounded as $t\rightarrow \infty$, so that the R.H.S. of (\ref{94equasdsd}) is negative for sufficiently large $\|\boldsymbol{\Delta}\|$.}

\txtblue{\emph{1) $(1-\mathbb{E}^{\widetilde{\Omega}_p^\ast}[\widetilde{p}(\boldsymbol{\chi})|\boldsymbol{\Delta}\left(t-1\right)=\boldsymbol{\Delta}])\mu_{max}(\mathbf{F}^T\mathbf{F})-1<0$ for sufficiently large $\|\boldsymbol{\Delta}\|$:} We calculate $\mathbb{E}^{\widetilde{\Omega}_p^\ast}\big[\widetilde{p}(\boldsymbol{\chi})\big|\boldsymbol{\Delta}\left(t-1\right)=\boldsymbol{\Delta}\big]$  for sufficiently large $\|\boldsymbol{\Delta}\|$ as follows:
\bs\begin{align}
	&\mathbb{E}^{\widetilde{\Omega}_p^\ast}\big[\widetilde{p}(\boldsymbol{\chi})\big|\boldsymbol{\Delta}\left(t-1\right)=\boldsymbol{\Delta}\big]\label{95sdsas} \\
	\overset{(e)}{=}&\mathbb{E}^{\widetilde{\Omega}_p^\ast}\big[\mathbb{E}^{\widetilde{\Omega}_p^\ast}\big[\exp\big( -\frac{ p_{max} \tau \alpha}{\kappa(R) B_W} \big)\big|\boldsymbol{\Delta}\left(t-1\right)=\boldsymbol{\Delta}, \alpha(t-1)=\alpha\big]\big] \notag \\
	\overset{(f)}{=}&\int_0^\infty \exp\big( -\frac{ p_{max} \tau x}{\kappa(R) B_W} \big)\exp(-x) dx = \frac{ p_{max} \tau}{\kappa(R) B_W+p_{max} \tau}	\notag
\end{align}\bsc where (e) is because $\widetilde{V}_{\eta_{th}}(\boldsymbol{\chi})$ increases  w.r.t. $\|\boldsymbol{\Delta}\|$ according to Lemma 3  of \cite{shamai} and hence $\mathbf{1}\big(\lambda \leq \big(\widetilde{V}_{\eta_{th}}(\boldsymbol{\chi}) -\widetilde{V}_{\eta_{th}}(\boldsymbol{\chi})\big|_{\boldsymbol{\Delta}=\mathbf{0}} \big) \frac{  \alpha}{\kappa(R) B_W}\big)=1$ for given $\alpha(t-1)=\alpha$ and sufficiently large $\|\boldsymbol{\Delta}\|$, and (f) is because   $h(t)$ in (\ref{chnain}) has a steady state  distribution $\mathcal{CN}\left(0, 1\right)$ according to Prop.1 of \cite{stochasticgame} and hence $|h(t)|^2\sim \exp(1)$. Therefore, based on (\ref{95sdsas}),  if $\frac{ p_{max} \tau}{\kappa(R) B_W+p_{max} \tau}>1-\frac{1}{\mu_{max}(\mathbf{F}^T\mathbf{F})}$,  $(1-\mathbb{E}^{\widetilde{\Omega}_p^\ast}[\widetilde{p}(\boldsymbol{\chi})|\boldsymbol{\Delta}(t-1)=\boldsymbol{\Delta}])\mu_{max}(\mathbf{F}^T\mathbf{F})-1<0$ for sufficiently large $\|\boldsymbol{\Delta}\|$.}

\txtblue{\emph{2) $\mathbb{E}^{\widetilde{\Omega}_p^\ast}\big[ \|\mathbf{L}(t)\|^2\big|\boldsymbol{\Delta}\left(t-1\right)=\boldsymbol{\Delta}\big]$ is bounded as $t\rightarrow \infty$ for sufficiently large $\|\boldsymbol{\Delta}\|$:} We write the dynamics of $\mathbf{L}(t)$  in (\ref{28equa}) in  the following form:
\begin{align} \label{93equationsss}
	\mathbf{L}(t) = \boldsymbol{\Gamma}\widetilde{\mathbf{F}}_{\mathbf{R}}(t)\mathbf{L}(t-1)+w_{max}\|\boldsymbol{\Psi}(t)\|\mathbf{1}
\end{align}
where $\widetilde{\mathbf{F}}_{\mathbf{R}}(t)=\mathbf{I}$ if $\gamma(t-1)=0$ and $\widetilde{\mathbf{F}}_{\mathbf{R}}(t)=\mathbf{F}_{\mathbf{R}}\mathbf{I}$ if $\gamma(t-1)=1$. Following (\ref{93equationsss}) and  starting from $\mathbf{L}(0)$, we have
\bs\begin{align}
	&\mathbb{E}^{\widetilde{\Omega}_p^\ast}\left[\|\mathbf{L}(t)\|^2\right]=\mathbb{E}^{\widetilde{\Omega}_p^\ast}\big[\big\| \prod_{i=0}^{t-1} (\boldsymbol{\Gamma}\widetilde{\mathbf{F}}_{\mathbf{R}}(i)) \mathbf{L}(0) \label{94equationsd} \\
	&+\sum_{j=0}^t \Big(\prod_{i=j}^{t-1} (\boldsymbol{\Gamma}\widetilde{\mathbf{F}}_{\mathbf{R}}(i))\Big) w_{max}\|\boldsymbol{\Psi}(j)\|\mathbf{1}\big\|^2\Big| \boldsymbol{\Delta}\left(t-1\right)=\boldsymbol{\Delta}\big] \notag  \\
	&\overset{(g)}{\leq}  \mathbb{E}^{\widetilde{\Omega}_p^\ast}\big[\big\|\prod_{i=0}^{t-1} (\boldsymbol{\Gamma}\widetilde{\mathbf{F}}_{\mathbf{R}}(i))\big\|^2 \|\mathbf{L}(0)\|^2 \notag \\
	& +C_2'\sum_{j=0}^t \big\|\prod_{i=j}^{t-1} (\boldsymbol{\Gamma}\widetilde{\mathbf{F}}_{\mathbf{R}}(i))\big\|^2\Big| \boldsymbol{\Delta}\left(t-1\right)=\boldsymbol{\Delta}\big] 	\notag \\
	& \overset{(h)}{\leq}  \mathbb{E}^{\widetilde{\Omega}_p^\ast}\Big\{\mathbb{E}^{\widetilde{\Omega}_p^\ast}\Big[\big\|\prod_{i=0}^{t_0} (\boldsymbol{\Gamma}\widetilde{\mathbf{F}}_{\mathbf{R}}(i))\big\|^2 \|\mathbf{L}(0)\|^2+C_2'\sum_{j=0}^t \big\|\prod_{i=j}^{t_0} (\boldsymbol{\Gamma}\widetilde{\mathbf{F}}_{\mathbf{R}}(i))\big\|^2 \notag \\
	&+\big\|\prod_{i=t_0}^{t-1} (\boldsymbol{\Gamma}\widetilde{\mathbf{F}}_{\mathbf{R}}(i))\big\|^2 \|\mathbf{L}(0)\|^2 +C_2'\sum_{j=0}^t \big\|\prod_{i=t_0}^{t-1} (\boldsymbol{\Gamma}\widetilde{\mathbf{F}}_{\mathbf{R}}(i))\big\|^2\Big|	\notag \\
	&  \{\boldsymbol{\Delta}\left(s\right):t_0\leq s \leq t-1\}\Big]\Big|\boldsymbol{\Delta}\left(t-1\right)=\boldsymbol{\Delta}\Big\}\notag
\end{align}\bsc
where $C_2'>0$ is a constant, (g)  is because $\|\boldsymbol{\Psi}(j)\|$ is uniformly bounded according to Prop. 5.2 of \cite{rate1}, and in (h) $t_0$ is chosen such that $\{\|\boldsymbol{\Delta}(s)\|:t_0\leq s \leq t-1\}$ are sufficiently large. The existence of such $t_0$ is ensured because of the dynamics of  $\boldsymbol{\Delta}(t)$ in (\ref{deltadyntext}) and $\|\boldsymbol{\Delta}(t-1)\|$ is sufficiently large. Furthermore, in (\ref{94equationsd}), we have $\mathbb{E}^{\widetilde{\Omega}_p^\ast}[\|\prod_{i=0}^{t_0} (\boldsymbol{\Gamma}\widetilde{\mathbf{F}}_{\mathbf{R}}(i))\|^2 \|\mathbf{L}(0)\|^2+C_2'\sum_{j=0}^t \|\prod_{i=j}^{t_0} (\boldsymbol{\Gamma}\widetilde{\mathbf{F}}_{\mathbf{R}}(i))\|^2 | \{\boldsymbol{\Delta}(s):t_0\leq s \leq t-1\}]<\infty$. For sufficiently large $\{\|\boldsymbol{\Delta}(s)\|:t_0\leq s \leq t-1\}$, using (e) of (\ref{95sdsas}), we have that $\mathbb{E}^{\widetilde{\Omega}_p^\ast}[\widetilde{p}(\boldsymbol{\chi}(s))|\{\boldsymbol{\Delta}(s):t_0\leq s \leq t-1\}]=\frac{ p_{max} \tau}{\kappa(R) B_W+p_{max} \tau}$ for all $t_0\leq s \leq t-1$. Using Prop. 4.4 of \cite{ratenoise} and Theorem 1 of \cite{stabprodthem}, under the condition that 
\begin{align}
	\frac{ p_{max} \tau}{\kappa(R) B_W+p_{max} \tau}R_i>\max\left\{0, \log|\mu_i(\mathbf{F})|\right\}, \quad \forall i	\label{pexpreetiel}
\end{align}  
we have $\frac{1}{t-t_0-1}\log \big\|\prod_{i=t_0}^{t-1} (\boldsymbol{\Gamma}\widetilde{\mathbf{F}}_{\mathbf{R}}(i))\big\|<0$, $\frac{1}{t-t_0-1}\log\big\|\prod_{i=t_0}^{t-1} (\boldsymbol{\Gamma}\widetilde{\mathbf{F}}_{\mathbf{R}}(i))\big\| <0$, $\forall j$, as $t \rightarrow \infty$. Therefore, from (\ref{94equationsd}), there exists $\delta>0$ and $\delta_j>0$ for all $j$ such that as $t\rightarrow\infty$,
\bs\begin{align}
	& \mathbb{E}^{\widetilde{\Omega}_p^\ast}\Big[\big\|\prod_{i=t_0}^{t-1} (\boldsymbol{\Gamma}\widetilde{\mathbf{F}}_{\mathbf{R}}(i))\big\|^2 \|\mathbf{L}(0)\|^2  +C_2'\sum_{j=0}^t \big\|\prod_{i=t_0}^{t-1} (\boldsymbol{\Gamma}\widetilde{\mathbf{F}}_{\mathbf{R}}(i))\big\|^2 \notag \\
	&\big| \{\boldsymbol{\Delta}\left(s\right):t_0\leq s \leq t-1\}\Big]{\leq} \mathbb{E}^{\widetilde{\Omega}_p^\ast}\big[\exp\left(-2\delta(t-t_0-1)\right)\|\mathbf{L}(0)\|\notag \\
	&+C_2'\sum_{j=0}^t \exp\left(-2\delta_j(t-t_0-1) \right)\big]<\infty	\label{97sdasd1}
\end{align}\bsc Substituting (\ref{97sdasd1}) into (\ref{94equationsd}), we have that $\mathbb{E}^{\widetilde{\Omega}_p^\ast}\big[ \|\mathbf{L}(t)\|^2\big|\boldsymbol{\Delta}\left(t-1\right)=\boldsymbol{\Delta}\big]$ is bounded as $t\rightarrow \infty$ for sufficiently large $\|\boldsymbol{\Delta}\|$.}

\txtblue{Combining the results in 1) and 2), if the conditions in (\ref{sadsaxxxx}) are satisfied and we allocate the $\{R_i\}$ according to (\ref{pexpreetiel}), we have $d^{\widetilde{\Omega}_p^\ast}L(\boldsymbol{\Delta})<0$ for sufficiently large $\|\boldsymbol{\Delta}\|$.}\end{proof}

\vspace{0.1cm}
\hspace{-0.3cm}\emph{C. Stability of $\boldsymbol{\Delta}(t)$ under $\Omega_p^\ast$}   

\txtblue{Define the \emph{semi-invariant moment generating function} of $d^{\widetilde{\Omega}_p^\ast}(\boldsymbol{\Delta})$ as 
\begin{align}
	\phi(r,\boldsymbol{\Delta} )=  \ln \big(\mathbb{E}^{\widetilde{\Omega}_p^\ast}\big[e^{\left(\|\boldsymbol{\Delta}(t)\|-\|\boldsymbol{\Delta}(t-1)\| \right)r}\big| \boldsymbol{\Delta}\left(t-1\right)=\boldsymbol{\Delta}  \big] \big)\notag
\end{align}
According to  (\ref{sda2323ssadad}) and Lemma \ref{realdrifttsds}, we have $\mathbb{E}^{\widetilde{\Omega}_p^\ast}\big[\|\boldsymbol{\Delta}(t)\|-\|\boldsymbol{\Delta}(t-1)\| \big| \boldsymbol{\Delta}\left(t-1\right)\big]=d^{\widetilde{\Omega}_p^\ast}(\boldsymbol{\Delta}) < 0$ when $\|\boldsymbol{\Delta}\| > \|\widetilde{\boldsymbol{\Delta}}\|$ for some large $\|\widetilde{\boldsymbol{\Delta}}\|>0$. Hence,  $\phi(r,\boldsymbol{\Delta} )$ will have a unique positive root  $r^\ast(\boldsymbol{\Delta} )$ (i.e., $\phi_k(r_k^\ast(\boldsymbol{\Delta}),\boldsymbol{\Delta} )=0$) \cite{kingman1}. Using Kingman bound \cite{kingman1}, we have  $\Pr\big[ \|\boldsymbol{\Delta}\| \geq x\big] \leq e^{-r^\ast(\widetilde{\boldsymbol{\Delta}}) x} $, if $x \geq \|\widetilde{\boldsymbol{\Delta}}\|$ for sufficiently large $\|\widetilde{\boldsymbol{\Delta}}\|$. Therefore,
\bs\begin{align}
	 & \mathbb{E}^{\widetilde{\Omega}_p^\ast} \left[\|\boldsymbol{\Delta}\|^2 \right] = \int_0^{\infty} \Pr \left[\|\boldsymbol{\Delta}\|^2 >s \right] \mathrm{d}s  	\notag \\
	 &= \int_0^{\|\widetilde{\boldsymbol{\Delta}}\|} \Pr \left[\|\boldsymbol{\Delta}\|^2 >s \right] \mathrm{d}s  + \int_{\|\widetilde{\boldsymbol{\Delta}}\|}^\infty \Pr \left[\|\boldsymbol{\Delta}\|^2 >s \right] \mathrm{d}s  \notag \\
	 &\leq  \|\widetilde{\boldsymbol{\Delta}}\| + \int_{\|\widetilde{\boldsymbol{\Delta}}\|}^{\infty}  e^{-r_k^\ast(\|\widetilde{\boldsymbol{\Delta}}\|) s}  \mathrm{d}s 	 < \infty	\notag 
\end{align}\bsc Using Lemma \ref{sda2323s}, $\widetilde{\Omega}_p^\ast$ is an admissible control policy.}
\vspace{-0.2cm}

\section*{\txtblue{Appendix I: Proof of Theorem \ref{stabasdsa}}}

\txtblue{From the dynamics  of $\boldsymbol{\Delta}(t)$ in (\ref{deltadyntext}) and  (\ref{94equasdsd}), we have
\begin{align}
	&\mathbb{E}^{\widetilde{\Omega}_p^\ast} \big[\|\boldsymbol{\Delta}(t)\|^2\big]   \\
	\overset{(a)}{\geq} &\mathbb{E}^{\widetilde{\Omega}_p^\ast}\big[\lambda_{min}(\boldsymbol{\Phi}^{-T}\boldsymbol{\Phi}^{-1})(1-\widetilde{p} (\boldsymbol{\chi}))\notag \\
	& \cdot \boldsymbol{\Delta}^T(t-1)\boldsymbol{\Phi}^T\boldsymbol{\Upsilon}^T \boldsymbol{\Upsilon}\boldsymbol{\Phi}\boldsymbol{\Delta}(t-1)+ C_1' \|\mathbf{L}(t)\|^2\big]\notag \\
	\overset{(b)}{\geq} &\mathbb{E}^{\widetilde{\Omega}_p^\ast}\big[\lambda_{min}(\boldsymbol{\Phi}^{-T}\boldsymbol{\Phi}^{-1})(1-\frac{ p_{max} \tau}{\kappa(R) B_W+p_{max} \tau})\notag \\
	&\cdot \boldsymbol{\Delta}^T(t-1)\boldsymbol{\Phi}^T\boldsymbol{\Upsilon}^T \boldsymbol{\Upsilon}\boldsymbol{\Phi}\boldsymbol{\Delta}(t-1)+ C_1' \|\mathbf{L}(t)\|^2\big]\notag
\end{align}
where  (a) is because $\boldsymbol{\Phi}\mathbf{F}\boldsymbol{\Phi}^{-1}=\boldsymbol{\Upsilon}$  in Part A of Appendix A and $\mathbf{F}^T\mathbf{F}\geq \lambda_{min}(\boldsymbol{\Phi}^{-T}\boldsymbol{\Phi}^{-1})\boldsymbol{\Phi}^T\boldsymbol{\Upsilon}^T \boldsymbol{\Upsilon}\boldsymbol{\Phi}$, and (b) is because  $\widetilde{p}(\boldsymbol{\chi}) \leq \exp\big( -\frac{ p_{max} \tau \alpha}{\kappa(R) B_W} \big)$ and the calculations in (\ref{95sdsas}). Following the above iterations, we have
\bs\begin{align}
	&\mathbb{E}^{\widetilde{\Omega}_p^\ast} \big[\|\boldsymbol{\Delta}(t)\|^2\big]  \geq \mathbb{E}^{\widetilde{\Omega}_p^\ast}\big[\lambda_{min}(\boldsymbol{\Phi}^{-T}\boldsymbol{\Phi}^{-1})\label{1032ddsds} \\
	& \cdot \big(1-\frac{ p_{max} \tau}{\kappa(R) B_W+p_{max} \tau}\big)^t \boldsymbol{\Delta}^T(0)\boldsymbol{\Phi}^T(\boldsymbol{\Upsilon}^t)^T \boldsymbol{\Upsilon}^t\boldsymbol{\Phi}\boldsymbol{\Delta}(0)\big]\notag
\end{align}\bsc Note that the diagonal elements of $(\boldsymbol{\Upsilon}^t)^T \boldsymbol{\Upsilon}^t$ are $\{\mu_i(\mathbf{F}^T\mathbf{F})^{t}\}$. Since $\lim_{t\rightarrow \infty} \mathbb{E}^{\widetilde{\Omega}_p^\ast} \big[\|\boldsymbol{\Delta}(t)\|^2\big]<\infty$,  the R.H.S. of (\ref{1032ddsds}) is bounded as $t\rightarrow \infty$. Therefore, we have  $\big(1-\frac{ p_{max} \tau}{\kappa(R) B_W+p_{max} \tau}\big)\mu_i(\mathbf{F}^T\mathbf{F})<1$ for all $i$. This implies 
\bs\begin{align}	\label{necss1}
		\frac{ p_{max} \tau}{\kappa(R) B_W+p_{max} \tau} >1-\frac{1}{\mu_{max}(\mathbf{F}^T\mathbf{F})}
\end{align}\bsc}
\txtblue{\ From the dynamics  of $\mathbf{L}(t)$ in (\ref{93equationsss}) and (\ref{94equationsd}), we have
\begin{align}
	&\mathbb{E}^{\widetilde{\Omega}_p^\ast}\left[\|\mathbf{L}(t)\|^2\right]\geq \mathbb{E}^{\widetilde{\Omega}_p^\ast}\Big[ \prod_{i=0}^{t-1}\big\| (\boldsymbol{\Gamma}\widetilde{\mathbf{F}}_{\mathbf{R}}(i))\big\|^2\Big]\| \mathbf{L}(0)\|^2\notag \\
	\geq &\mathbb{E}^{\widetilde{\Omega}_p^\ast}\Big[ \exp\big(2t \big[\frac{1}{t}\sum_{i=0}^{t-1} \log  \big\| (\boldsymbol{\Gamma}\widetilde{\mathbf{F}}_{\mathbf{R}}(i))\big\|\big]\big)\Big]\| \mathbf{L}(0)\|^2\label{1032ddsds1}
\end{align}
Since $\lim_{t\rightarrow \infty} \mathbb{E}^{\widetilde{\Omega}_p^\ast} \big[\|\mathbf{L}(t)\|^2\big]<\infty$,  the R.H.S. of (\ref{1032ddsds1}) is bounded as $t\rightarrow \infty$.  This implies 
\bs\begin{align}
	\lim_{t\rightarrow \infty}\frac{1}{t}\sum_{i=0}^{t-1} \log  \big\| (\boldsymbol{\Gamma}\widetilde{\mathbf{F}}_{\mathbf{R}}(i))\big\|=\mathbb{E}^{\widetilde{\Omega}_p^\ast}\big[\log  \big\| (\boldsymbol{\Gamma}\widetilde{\mathbf{F}}_{\mathbf{R}}(i))\big\|\big]<0	 \notag 
\end{align}\bsc Furthermore,  using Prop. 4.4 of \cite{ratenoise} and Theorem 1 of \cite{stabprodthem},  the above equation implies $\mathbb{E}^{\widetilde{\Omega}_p^\ast}\left[\widetilde{p}(\boldsymbol{\chi})\right] R_i>\max\left\{0, \log|\mu_i(\mathbf{F})|\right\}$,  $\forall i$.  Since $\mathbb{E}^{\widetilde{\Omega}_p^\ast}\left[\widetilde{p}(\boldsymbol{\chi})\right] \leq \frac{ p_{max} \tau}{\kappa(R) B_W+p_{max} \tau} $, we  further have 
\bs\begin{align}	\label{necss2}
	\frac{ p_{max} \tau}{\kappa(R) B_W+p_{max} \tau} > \frac{1}{R}\sum_{\mu_i(\mathbf{F})} \max\left\{0, \log|\mu_i(\mathbf{F})|\right\}
\end{align}\bsc}
\txtblue{\ Combining (\ref{necss1}) and (\ref{necss2}), we obtain the necessary conditions for NCS stability  in Theorem \ref{stabasdsa}.}

\vspace{-0.2cm}

\section*{\txtblue{Appendix J: Proof of the Results in Remark  \ref{rqads11dsdds}}}

\txtblue{\emph{1) Boundness of $\|\overline{\mathbf{e}}(t)\|$: } Based on the dynamics of $\mathbf{x}(t)$ and $\overline{\mathbf{x}}(t)$, we have $\overline{\mathbf{e}}(t)=\big(\mathbf{I}-\mathbf{M}\mathbf{C}\big)\big(\mathbf{F}\overline{\mathbf{e}}(t-1)+\mathbf{w}(t-1)\big)$. Since $\mathbf{I}-\mathbf{M}\mathbf{C}$ is stable, $\|\overline{\mathbf{e}}(0)\|$ is bounded (because $\mathbf{x}(0)$ is bounded), and $\|\mathbf{w}(t)\|$ is bounded,  it can be directly shown that $\|\overline{\mathbf{e}}(t)\|$ is bounded, i.e., $\|\overline{\mathbf{e}}(t)\|\leq C$ for some constant $C$.}

\txtblue{\emph{2) Upper bound of $\boldsymbol{\Delta}^T(t)\mathbf{S} \boldsymbol{\Delta}(t) $:} We write $\boldsymbol{\Delta}(t)=\mathbf{x}(t)-\overline{\mathbf{x}}(t)+\overline{\mathbf{x}}(t)-\mathbb{E}[\overline{\mathbf{x}}(t)|I_C(t)] +\mathbb{E}[(\overline{\mathbf{x}}(t)-\mathbf{x}(t))|I_C(t)]$ and derive an upper of $\boldsymbol{\Delta}^T(t)\mathbf{S} \boldsymbol{\Delta}(t) $ as follows:
\bs\begin{align}
	&\boldsymbol{\Delta}^T(t)\mathbf{S} \boldsymbol{\Delta}(t) \leq  2\mu_{max}(\mathbf{S}) \big(\mathbf{x}(t)-\overline{\mathbf{x}}(t)\big)^2 \notag \\
	 & + 2 \mu_{max}(\mathbf{S})\big(\mathbb{E}[(\overline{\mathbf{x}}(t)-\mathbf{x}(t))|I_C(t)]\big)^2 \notag \\
	 & + 2 \big(\overline{\mathbf{x}}(t)-\mathbb{E}[\overline{\mathbf{x}}(t)|I_C(t)]\big)^T \mathbf{S} \big(\overline{\mathbf{x}}(t)-\mathbb{E}[\overline{\mathbf{x}}(t)|I_C(t)]\big)\notag \\
	  \leq & 4 C^2 \mu_{max}(\mathbf{S})  + 2 \overline{\boldsymbol{\Delta}}^T(t)\mathbf{S} \overline{\boldsymbol{\Delta}}(t)
\end{align}\bsc where the last inequality is because $\|\mathbf{x}(t)-\overline{\mathbf{x}}(t)\|\leq C$.}

\vspace{-0.2cm}

\end{document}